\theoremstyle{plain}
\newtheorem{thm}{\protect\theoremname}[section]
  \theoremstyle{plain}
  \newtheorem{lem}[thm]{\protect\lemmaname}
  \theoremstyle{remark}
  \newtheorem*{rem*}{\protect\remarkname}
  \theoremstyle{definition}
  \newtheorem{defn}[thm]{\protect\definitionname}
  \theoremstyle{plain}
  \newtheorem{prop}[thm]{\protect\propositionname}
  \theoremstyle{plain}
  \theoremstyle{remark}
  \theoremstyle{definition}
  \newtheorem*{example*}{\protect\examplename}
\newtheorem{mainthm}{\protect\theoremname}
  \providecommand{\claimname}{Claim}
  \providecommand{\corollaryname}{Corollary}
  \providecommand{\definitionname}{Definition}
  \providecommand{\examplename}{Example}
  \providecommand{\lemmaname}{Lemma}
  \providecommand{\propositionname}{Proposition}
  \providecommand{\remarkname}{Remark}
\providecommand{\theoremname}{Theorem}
\title{On the structure of Hamiltonian impact systems }
\author{M. Pnueli$^1$ and V. Rom-Kedar$^{1,2}$ \\
\normalsize $^1$ Department of Computer Science and Applied Mathematics, \\ \normalsize The Weizmann Institute of Science, Rehovot, Israel \\
\normalsize $^2$ The Estrin Family Chair of Computer Science and Applied Mathematics.}
\begin{document}

\maketitle

\begin{abstract}
Near-integrability is usually associated with smooth small perturbations of smooth integrable systems.
Tools for  analysing  dynamics in a class of  2 degrees-of-freedom Hamiltonian impact systems with underlying separable integrable structure are derived. Integrable, near-integrable and far-from integrable cases are considered. In particular, a generalization of the energy momentum bifurcation diagram, Fomenko graphs and the hierarchy of bifurcations framework to this class is constructed. The projection of Liouville leaves of the smooth integrable dynamics to the configuration space allows to extend these tools to impact surfaces  which produce  far from integrable dynamics. It is suggested that such  representations classify dynamically different regions in phase space.  For the integrable and near integrable cases these provide global information on the dynamics whereas for the far from integrable  regimes (caused by finite deformations of the impact surface), these provide information on the singular set and on the non-impact orbits.   The results are presented and demonstrated for the Duffing-center system with  impacts from a slanted wall.
\end{abstract}

\section{Introduction}

A global qualitative analysis of a dynamical system includes, as a first step,  a classification to phase space regimes where similar dynamical behavior is expected.  Here, we derive such tools for classifying the rich dynamics in a class of Hamiltonian impact systems (HIS). An HIS  can be viewed as a billiard with an additional background potential or as a Hamiltonian system confined by some billiard boundaries. The analysis of HIS is complex as it combines two non-trivial dynamical systems, both having typically mixed phase space structure, and one of them non-smooth. While both Hamiltonian systems and billiards have a rich arsenal of research tools (see e.g. \cite{Arnold2007CelestialMechanics,meyer2008introduction,kozlov1991billiards,chernov2006chaotic}), combining them generally produces systems whose global phase space structure is far too complex for straightforward use of these tools. For example, classical billiard dynamics concepts such as the billiard map or caustics are generally not well defined (aside of  special cases, such as integrable potentials in an ellipse \cite{Radnovic2015,Fedorov2001} or the behavior near special periodic motions \cite{dullin1998linear}).  Similarly, classical results regarding  the structure of smooth Hamiltonian flows are generally not applicable since the introduction of a billiard boundary makes the HIS a non-smooth dynamical system. Finally, while the notion of near-integrability is well established for billiards and for smooth Hamiltonian systems, for impact systems such notions are new and non-trivial, see \cite{pnueli2018near}.

Here we begin to extend tools used for analyzing integrable smooth systems to 2 degrees-of-freedom (d.o.f) HIS with  integrable Hamiltonians. In particular, we extend the notion of energy-momentum bifurcation diagrams (EMBD) \cite{lerman1998integrable,Arnold2007CelestialMechanics},  Fomenko graphs \cite{fomenko2004integrable} and the hierarchy of bifurcation framework \cite{ShlRK05chaos,ShlRK10} to a class of such HIS.  The EMBD and the Fomenko graphs of a 2 d.o.f  smooth integrable system  encode the changes in the energy surface foliation by the second integral level sets.
The first level in the hierarchy of bifurcation corresponds to constructing the Fomenko graphs, namely, identifying the singular level sets belonging to a given energy surface and realizing that these divide the regular level sets on the energy surface to families (e.g. of tori in the compact case). The second level corresponds to constructing the EMBD, namely, to  identifying the singular energies at which the Fomenko graphs change their structure. The third level of the hierarchy is to classify the parameter values at which the singular energy values bifurcate (e.g. change their order), see \cite{ShlRK05chaos,ShlRK10}. It turns out that with the proper choice of the momentum variable there is a close connection between the singular energy values and low order resonances \cite{litvak2004energy}. This classification allows to identify the most dynamically interesting regimes under small perturbations (such as neighborhoods of separatrices and  hyperbolic, elliptic and parabolic resonances).

In \cite{dragovic2009bifurcations}, the notion of Fomenko graphs was introduced to the case of the integrable motion in billiards with boundaries defined by confocal quadratics, and was later expanded upon in \cite{dragovic2015topological} (see also
\cite{fokicheva2012description,fokicheva2014classification}). Other recent works have presented Fomenko graphs and classification theory to integrable billiards in non-convex domains \cite{Moskvin2018} and the more general topological billiards (see e.g. \cite{fomenko2019singularities,fomenko2019topological} and references within). For the billiards, the dynamics are independent of the energy and thus for any given table there is a single  graph which represents how the dynamics depend on the constant of motion which is preserved by the integrable or quasi-integrable billiard motion.

Here, we begin to extend these notions to the impact case, thus obtaining global information on the dynamics for a class of integrable and near-integrable HIS.  Specifically, we begin by  constructing the IEMBD (Impact-EMBD) and the Impact Fomenko graphs (IFG) of a class of separable Hamiltonians\footnote{ Separable systems means hereafter  decoupled systems -  product systems of two 1 d.o.f mechanical Hamiltonians. The  more general class of separable systems defined in  \cite{marshall1988hamiltonian}  is not analyzed here.} (see  Eq. (\ref{eq:Hint})) defined in the half plane and impacting from the half plane boundary - a straight line which is perpendicular to one of the axes.
Utilizing recent analysis of HIS  that are close to such systems  \cite{pnueli2018near} shows that the IEMBD and IFG of  systems with perpendicular walls describe the behavior of nearby systems.  To extend such results to more general walls (not close to being perpendicular to one of the axes and not necessarily straight lines), we study the projection of trajectories to the configuration space,  revealing when impacts occur. We thus discuss the relation between the  IEMBD description and these projections.

Our approach is related to recent works regarding integrable HIS with  symmetry shared by the billiard domain and the polynomial potential \cite{Fedorov2001,Radnovic2015,Dragovic2014a}.   Effectively, in    \cite{Radnovic2015},  the IEMBD and the corresponding Fomenko graphs were found for the integrable HIS of a Hooke potential in an ellipse, where the smooth motion is super-integrable, hence always periodic, leading to a beautiful and non-trivial structure of the level sets   \cite{Radnovic2015}. Here we consider other symmetries, and, more importantly, cases in which the 2 d.o.f. system is integrable yet not super-integrable, so the typical motion on regular level sets is quasi periodic - periodic motion occurs only on the measure zero resonant level sets.

The paper is organized as follows. In Section \ref{sec:setup} the model setup of the investigated systems is presented and the main theorems are introduced. Section \ref{sec:integrable} contains the analysis of the Liouville-integrable systems in which the billiard wall preserves the integrability of the underlying system. The  IEMBD, the Impact Fomenko graphs and the singular energy values  diagrams are found for such systems, thus completing the hierarchy of bifurcation classifications for the Liouville-integrable case. We end this section with a discussion of the extension of these constructions to impacts with multiple straight lines  and the relation of these to quasi-integrable billiards \cite{Athreya2012,Dragovic2014,Dragovic2015,Dragovic2015a,Moskvin2018,Issi2019}. Section \ref{sec:Hill-region} explores the breaking of the integrability when the wall no longer preserves the separable structure. Introducing the relation between the level sets properties and their projections to the configuration space,   it is shown that classification of impacting and tangent initial conditions may be derived and projected into the IEMBD even for  non-perturbative cases, and these are utilized to prove the main results described in   Section \ref{sec:setup}. Section \ref{sec:discussion} is devoted to discussion.

\section{Setup and  main results}\label{sec:setup}
Consider a 2 degrees-of-freedom (d.o.f) HIS with a Hamiltonian $H$ of the form:
\begin{equation}
H=H_{int}(q_{1},p_{1},q_{2},p_{2})+\epsilon_r V_r(q_1,q_2)+b\cdot{}V_{b}(q_{}^{w}(q;\epsilon_w))\label{eq:Hgeneral}
\end{equation}
The Hamiltonian is comprised of three components: an underlying integrable structure $H_{int}$, a small, regular perturbation term $\epsilon_r V_r$, and the impact term, formally written as a billiard potential $b\cdot V_b$ which defines the allowed region of motion to be at \(q^{w}(q;\epsilon_w)\geq0\).
Below we describe our assumptions on each  of these components. \\

\noindent\textit{I. The integrable Hamiltonian:}
 $H_{int}$ is a mechanical Hamiltonian satisfying the S3BN conditions:

\begin{defn}\label{def:s3b} \emph{Hamiltonians of the S3BN (Separable, Smooth, Simple, Bounded Nondegenerate level sets) class} are integrable, mechanical Hamiltonian functions of the form
\begin{equation}
H_{int}=\frac{||p||^{2}}{2}+V_{int}(q_{1},q_{2})=\frac{p_{1}^{2}}{2}+V_{1}(q_{1})+\frac{p_{2}^{2}}{2}+V_{2}(q_{2})
=H_{1}(q_{1},p_{1})+H_{2}(q_{2},p_{2})\label{eq:Hint}
\end{equation}
satisfying the following conditions:
\begin{description}
\item{[S1]} $H_{int}$ can be written as the sum of two $1$ d.o.f mechanical Hamiltonians \(H_{1},H_2\).
\item{[S2]} $V_1,V_2$ are $C^r$ smooth,  $r>4$.
\item{[S3]} $V_1,V_2$ have only a finite number of simple  extremum points: \(q_{i,k_{i}} ^{ext},V'_{i}(q_{i,k_{i}}^{ext})=0, V'_{i}(q_{i,k_{i}}^{ext})\neq 0,\,k_{i}=1,..,n_i,i=1,2.\)
\item{[B]} $V_1,V_2$ are bounded from below and go to infinity as $|q_1|,|q_2|\rightarrow{}\infty$ respectively.
With no loss of generality  assume that $\min_{q_2}V_2=0$.
\item{[N]}  $H_{int}$ satisfies the iso-energy KAM non-degeneracy condition \cite{Arnold2013mathematicalmethods}.
\end{description}
\end{defn}

The condition [S1] is called hereafter the separability assumption (a more general notions of separability is introduced in  \cite{marshall1988hamiltonian}).  The condition [B] implies that $H_{int}$ has only bounded level sets. The  [S2]  smoothness requirement and the non-degeneracy condition [N]  are added so that, together with the construction of \cite{pnueli2018near},  under small perturbations KAM theory can be employed.

The Hamiltonian system defined by \(H_{int}\) is clearly  Liouville integrable \cite{Arnold2007CelestialMechanics, jovanovic2011completely}:
\begin{defn}\label{def:Liouvilleintegrable}
A 2 d.o.f Hamiltonian system is called \textit{Liouville integrable} (also \textit{completely integrable}) if there are 2 Poisson-commuting smooth integrals $F_1,F_2$, $\{F_1,F_2\}=0$ whose differentials are independent in an open, dense subset of the phase space.
\end{defn}

Each level set of these integrals may be composed of several connected components.  Each connected component is called a Liouville leaf. From the Liouville-Arnold theorem \cite{Arnold2007CelestialMechanics, jovanovic2011completely}, a Liouville integrable system with compact level sets has the property that the motion on regular leaves (leaves on which the differentials of the constants of motion are independent) of the system is diffeomorphic to motion on a torus (for 2 d.o.f, a 2-torus), so, on each such torus and its neighborhood transformation to local action-angle variables, \((q,p)\rightarrow(I,\theta)\) may be defined.
The topological structure of the level sets of integrable system is fully described by the Fomenko graphs  and the  EMBD  \cite{lerman1998integrable,Arnold2007CelestialMechanics,fomenko2004integrable,ShlRK10,ShlRK05chaos} (adding additional numerical marks to the graphs allows to classify the systems according to their Liouville equivalence class \cite{fomenko2004integrable}). In section \ref{sec:integrable} we describe their construction for the S3BN class.
\\
\\
\noindent\textit{II. The smooth perturbing potential:}
 The term $\epsilon_rV_r(q_1,q_2)$ is  a regular coupling term for the potential.  We examine the behavior of the  Hamiltonian system under impacts with and without this small smooth perturbation:  $0\leq\epsilon_r\ll1$ and \(V_r(q_1,q_2)\) satisfies the smoothness condition [S2]: it is    $C^r$ smooth, $r>4$. Without the impacts, by the S3BN assumption and the smoothness of \(V_r(q_1,q_2)\),   the conditions for KAM theory are fulfilled and   thus for sufficiently small \(\epsilon_r\) the majority of the tori on non-degenerate energy surfaces survive  \cite{Arnold2007CelestialMechanics}. The Fomenko graphs and the EMBD of the integrable system supply qualitative classification of the perturbed motion   \cite{litvak2004energy,ShlRK05chaos,ShlRK10,pnueli2018near}.
\\
\\
\noindent\textit{III. The impact term and the wall properties:}
The impact term \(b\cdot{}V_{b}(q_{}^{w}(q;\epsilon_w))\) corresponds to elastic collisions with the billiard boundary, the wall  $q^{w}(q;\epsilon_w)=0$ (the properties of   $q^{w}(q;\epsilon_w)$  are specified below).  Motion is allowed only when  $q\in D=\{q|q^w(q;\epsilon_w)\geqslant0\}$, where, hereafter, \(D\) is the billiard domain. At the billiard boundary, whenever the normal is defined, the particle reflects with the angle of reflection equals to the angle of incidence. When the normal is undefined at the collision point (hereafter called corner points) the motion stops. The reflection rule is formally represented by the singular potential $V_b$ :
\begin{equation}
V_b(q^{w}(q;\epsilon_w))=\begin{cases}
0, & \text{when }q^{w}(q;\epsilon_w)>0\\
1, & \text{when }q^{w}(q;\epsilon_w)<0
\end{cases}\label{eq:billiardpot}
\end{equation}
where  $b$ is a large positive constant so that the billiard boundary is impassable. The formal definition can be made rigorous by introducing soft steep potentials, see \cite{kozlov1991billiards,RK2014smooth,rom2012billiards}. The resulting dynamics correspond to the composition of the smooth flow with the impact map (see e.g. \cite{Woj98,dullin1998linear,RK2014smooth,LRK12}.). For most of this paper (with the exception of section \ref{sec:multiwall})  we consider a billiard domain which is diffeomorphic to the half plane, so that the billiard boundary is a single smooth wall with no singularities:
\begin{defn}
\label{def:gws}A \emph{general wall system} (GWS) is the impact system defined by equations (\ref{eq:Hgeneral}),  (\ref{eq:Hint}), (\ref{eq:billiardpot}) where the potential \(V\) satisfy the S3BN conditions, the potential \(V_{r}\)  is smooth  \((C^{r},r>4)\), and the wall can be represented as a smooth \((C^{r},r>4)\) graph with bounded derivatives over one of the \(q_{i}\) axes.
\end{defn}

For example,  when the motion occurs on the right half plane we set:\begin{equation}
 q^w(q_{1},q_2)={}q_1-\epsilon_w Q(q_{2}), \:q_{ 2}\in\mathbb{R}, \text{ where } \,Q(0)=0, \ \left\Vert Q'(q_{ 2})\right\Vert_{C^{r}}<M, r>4, \label{eq:wavywal}
\end{equation}
where
 $\epsilon_w$ is  arbitrary and \(M\) is a fixed finite positive bound. The zero and small \(\epsilon_w\) regimes, where the wall is vertical/nearly vertical  are discussed in section \ref{sec:integrable} whereas
 $\epsilon_w$ is of order one in section \ref{sec:Hill-region}. Another example which we will discuss are slanted walls  with a tilt angle $\alpha\in[0,\frac{\pi }{2}]$ :
\begin{equation}
q_{\alpha}^w(q_{1},q_2)=\sin{\alpha}\cdot{}q_1-\cos{\alpha}\cdot{}q_2.\label{eq:straightwall}
\end{equation}
For \(\alpha\neq0,\) the slanted wall may be represented as a graph of the form (\ref{eq:wavywal}) with \(\epsilon _{w}(\alpha) Q(q_2)=\cot \alpha \cdot q_2\), whereas for \(\alpha\approx0 \) the roles of \(q_1,q_2\) are interchanged (notice that by the above convention  \(q_{0}^w(q_{1},q_2)=-q_{2}\), so the billiard domain is the lower half plane).
A central class of GWS  which are completely analyzable are the perpendicular wall systems:

\begin{defn}
\label{def:pws}A \emph{perpendicular wall system}, PWS, is a GWS where the wall is a straight line which is perpendicular to either the   $q_1$ or   the $q_2$ axis.\end{defn}
 The PWS corresponds to the case $\epsilon_w=0$ in (\ref{eq:wavywal}) or $\alpha=0$ or $\frac{\pi}{2}$ in (\ref{eq:straightwall}).

\subsection{The main example: the Duffing-center Hamiltonian}

For concreteness, hereafter, all the figures are presented for the Duffing-Center Hamiltonian \(H^{dc}_{int}\) (see Figure \ref{fig:phasespace} for its phase space structure at \(\omega=1)\):

\begin{equation}
H^{dc}_{int}=\frac{p_{1}^{2}}{2}+\frac{p_{2}^{2}}{2}-\frac{1}{2}\cdot(q_{1}-q_{1s})^{2}+
\frac{1}{4}\cdot(q_{1}-q_{1s})^{4}+\frac{\omega^{2}}{2}\cdot(q_{2}-q_{2c})^{2}=\frac{p_{1}^{2}}{2}+\frac{p_{2}^{2}}{2}+V^{dc}_{1}(q_{1};q_{1s})+V^{dc}_{2}(q_{2};q_{2c})\label{eq:doublewell}
\end{equation}
 with either perpendicular or slanted walls. The phase space structure in $(q_1,p_1)$ corresponds to a symmetric double-well potential with a saddle point at $(q_{1s},0)$ and two symmetric center points at $(q_{1s}\pm 1,0)$. The energy $H_1=0$ corresponds to the energy at the saddle point and its respective separatrix. The phase space structure in the $(q_2,p_2)$ plane is that of a single linear center at $(q_{2c},0)$, and the Hamiltonian $H_2$ may be written in global action-angle coordinates as $H_2=\omega{}I$, where $I$ is the action \cite{Arnold2007CelestialMechanics}. The Hamiltonian (\ref{eq:doublewell}) satisfies the S3BN assumption. It has three special feature:  \(I\) is globally defined,  \(H_{2}\) is linear in \(I\) and \(H_{1,2}\) are symmetric in  \((q_{1}-q_{1s})\)  and \((q_{2}-q_{2c})\) respectively. These special features of (\ref{eq:doublewell}) simplify the presentation yet they are not assumed nor used in the general formulation and analysis.

\begin{figure}[h]
\begin{centering}
\includegraphics[scale=0.35]{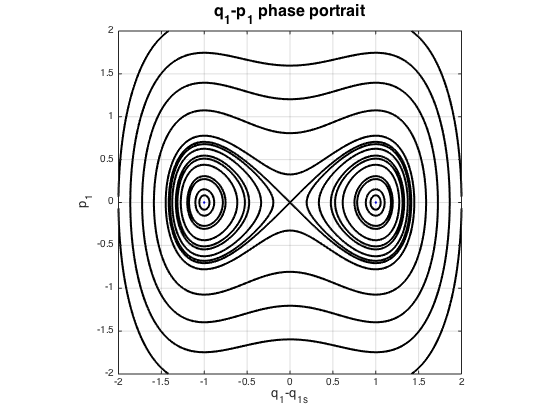}\includegraphics[scale=0.35]{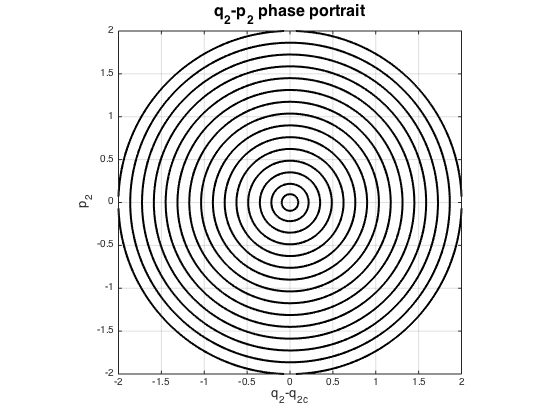}
\par\end{centering}
\protect\caption{\label{fig:phasespace}Phase portraits of the  $(q_{1},p_{1})$ (left), $(q_{2},p_{2})$ (right) dynamics for the Duffing-Center Hamiltonian \(H^{dc}_{int}\) (Eq. (\ref{eq:doublewell})).}
\end{figure}

 \subsection{Main results}

The aim of this paper is to provide tools for the global analysis of GWS.
To this aim, we distinguish between three types of \textit{invariant impact sets} of  an HIS:

\begin{defn}The \textit{impact division } is the division of phase space to  \emph {non-impact, singular and transverse impact sets. The non-impact set } is   the union of orbits of the HIS which do not impact the billiard boundary. The \emph{singular impact set} is the union of orbits which have at least one tangent impact or segments of orbits that end at a billiard corner. The \emph{transverse impact set} is the union of orbits of the HIS for which impacts occur and all the impacts are regular.   \end{defn}

 For billiards, the non-impact set is empty and the properties of the transverse and singular impact sets  depend on the billiard table. For example, for dispersing billiards,  the singular impact set is dense in phase space, and thus to prove hyperbolicity one needs to restrict   the dynamics to the transverse impact set and study carefully its properties (see \cite{chernov2006chaotic} and references therein). In \cite{Woj98} similar results are established for HIS with linear potentials and impacts from wedges.  On the other hand, for smooth convex billiards, the singular impact set is trivial - it consists of points on the boundary and their tangent direction which is out of the billiard. So here there are no interior singular  segments, and away from the whispering gallery orbits (which are confined by KAM curves), all orbits belong to the transverse impact set which is bounded away from the singular impact set.

Here  we propose some tools for studying the\textit{ impact-division} of GWS and  their dynamics.
 First, we observe that such analysis  is possible when the system with impacts  preserves the Liouville integrability and/or the typical  near-integrable dynamics;
To this aim we define:

\begin{defn}\label{def:liouvilimpint}
An HIS with compact level sets is a Liouville-Integrable HIS (LIHIS) if:
\begin{description}
\item[RespF]  The integrals of motion of the smooth Liouville-integrable Hamiltonian flow are preserved under impacts.

\item[Resp\(\theta\)]  The motion  on any component of a regular level set, namely, on components in the allowed region of motion on which the differentials of the constants of motion are independent, is conjugated to a motion on a torus.  \end{description}
\end{defn}
While for smooth systems the Resp\(\theta\) condition follows from the first one by the Arnold-Liouville theorem, for non-smooth systems this is not true. Indeed, in \cite{Issi2019} it is show that there are HIS systems satisfying the first condition and not the second one so integrability and Liouville integrability are not equivalent for HIS  (integrability for HIS  is defined in definition 1.2 of \cite{Woj98}  by the first criterion only; there,  integrability is associated  with the property of zero Lyapunov exponents for almost all initial conditions).

 After constructing the IFG and the IEMBD for  PWS in section \ref{sec:integrable}, we prove in section \ref{sec:proofofmainintegrb}  that:
\begin{mainthm}
\label{thm:integrability} \textit{At  \(\epsilon_{r}=0\)  the perpendicular wall system is a Liouville-Integrable HIS.   The IFG and the IEMBD provide  the impact-division of such systems whereas, for some cases, the FG and the EMBD of the Hamiltonian  (\ref{eq:Hint}) do not provide even the topological classification of the level set foliation of such systems.}
 \end{mainthm}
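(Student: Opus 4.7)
The plan is to verify the two clauses of Definition \ref{def:liouvilimpint} directly for the PWS, giving the Liouville-integrability assertion, then read the impact-division off the IEMBD/IFG construction that will be developed in Section \ref{sec:integrable}, and finally exhibit a specific configuration for which the wall cuts level sets in a topologically nontrivial way that the smooth FG/EMBD cannot detect.

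For \textbf{RespF}, I would fix the wall as $q_1 = c$ without loss of generality (the case of a wall perpendicular to the $q_2$ axis is symmetric). The impact map acts as $(q_1,p_1,q_2,p_2)\mapsto(c,-p_1,q_2,p_2)$. Since $H_1 = p_1^2/2 + V_1(q_1)$ depends on $p_1$ only through $p_1^2$, it is preserved by the impact; $H_2$ is trivially preserved. Hence the two Poisson-commuting integrals $H_1,H_2$ of the smooth Liouville-integrable flow survive the reflection. For \textbf{Resp$\theta$}, I would exploit the product structure: on a connected component of a regular level set in the allowed region, the motion splits into the $(q_1,p_1)$-motion on a level set of $H_1$ restricted to $q_1\geq c$ and the $(q_2,p_2)$-motion on a level set of $H_2$. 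Under the assumptions [S3] and [B], each smooth $H_i$-level set is a finite union of closed curves; restricting the $(q_1,p_1)$ curve to $q_1\geq c$ and gluing its upper and lower halves via $p_1\mapsto -p_1$ at the wall yields again a topological circle, traversed with a finite impact-period $T_1(h_1)$. Combined with the full period $T_2(h_2)$ of the $(q_2,p_2)$ circle, the motion is conjugate to a linear flow on a 2-torus, so impact action--angle coordinates exist. The regularity of the level set is exactly what rules out the tangency $V_1(c)=h_1$, where this construction would fail.

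To extract the impact-division from the IEMBD, I would use the projection of the 1-d.o.f. $H_1$-orbit onto the $q_1$ axis. A regular component lies in the non-impact set iff its $q_1$-projection is contained in $\{q_1>c\}$; it lies in the singular impact set iff the projection is tangent to the wall, equivalently $V_1(c)=h_1$ at a turning point; and it lies in the transverse impact set otherwise. Hence the impact-division is governed entirely by the value of $H_1$ relative to the new wall-induced critical value $V_1(c)$, which the IEMBD construction will include among the singular energies of $H_1$ on $\{q_1\geq c\}$ and which the IFG resolves via a new tangency atom.

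The main obstacle, and the content of the third assertion, is to show the smooth FG/EMBD truly miss this structure. I plan to take a PWS where $V_1$ has a double-well shape (as in the Duffing--center example \eqref{eq:doublewell}) and place the wall at a $c$ strictly between one well minimum and the saddle. For $h_1$ slightly above $V_1(c)$ but below the saddle value, the smooth level set of $H_1$ is a single closed curve entirely in $\{q_1>c\}$ or a pair of closed curves, unchanged by the wall, whereas for $h_1$ in the other range the smooth level set is a closed curve crossing the wall which the PWS replaces by a reflected arc of different period and homology. The new singular energy $h_1=V_1(c)$ therefore splits a family of tori that the smooth FG does not split, producing a vertex of the IFG absent from the FG; moreover, varying $c$ moves this vertex on the IEMBD while leaving the smooth EMBD unchanged, establishing that the smooth diagrams fail to give even the topological classification of the PWS level set foliation.
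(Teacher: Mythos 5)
Your treatment of the first two assertions matches the paper's: RespF via invariance of $H_1$ under $p_1\mapsto -p_1$, Resp$\theta$ via cutting and regluing the affected one-degree-of-freedom circle (this is Lemma \ref{lem:integrabilityonly}), and the impact division read off from the single tangent value $V_1(c)$ of the affected subsystem (Proposition \ref{lem:hetanglevel1}). Two small inaccuracies there: the claim that ``the regularity of the level set is exactly what rules out the tangency $V_1(c)=h_1$'' is not correct --- regularity means independence of the differentials of $H_1,H_2$, and a perfectly regular torus can be tangent to the wall (this is the generic tangent leaf); the construction does not actually fail there, since an interior tangency leaves the smooth torus intact while an exterior tangency leaves the leaf outside the domain. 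Also, classifying a component as singular requires the relevant $H_1$-leaf to actually contain the wall point $q_1=c$, not merely that $h_1=V_1(c)$: the level set $H_1=V_1(c)$ may consist of several leaves, only one of which touches the wall, which is precisely why the paper works leaf-by-leaf rather than level-set-by-level-set.

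The genuine gap is in the third assertion. What you exhibit --- a new vertex of the IFG at $h_1=V_1(c)$ that is absent from the FG and moves with $c$ --- shows that the IFG carries impact information the FG lacks, i.e.\ that the smooth diagrams do not give the impact division. It does not show that they fail to give the \emph{topological classification of the level set foliation}: on both sides of a tangency vertex the cut-leaves are still tori, and a transversely cut-and-glued circle is still a circle, so a priori the foliation could be topologically identical to the smooth one with only the impact labels differing. The paper's argument (end of Section \ref{sec:proofofmainintegrb}) instead produces a configuration in which the leaf count and the connectivity of energy surfaces genuinely differ: for $q_{1s}<-1$ all fixed points of $H_1$ lie outside the billiard, so the PWS energy surface is connected with a single leaf per level set at all energies, whereas the smooth FG/EMBD predict two components at low energy and two leaves at large $H_2$, independently of $q_{1s}$. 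Your own configuration (wall strictly between a well minimum and the saddle) does contain such a difference --- for $h_1<V_1(c)$ the left lobe lies entirely outside the billiard, so those level sets have one leaf instead of two and the low-energy surface loses a connected component --- but you attribute the failure to the tangency vertex rather than to this change in leaf count and connectivity, so as written the argument establishes a weaker statement than the one claimed.
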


As integrability is associated with continuous symmetries
\cite{Arnold2013mathematicalmethods,meiss2007differential}, the above theorem suggests  that for  PWS the wall \emph{respects the  integrable Hamiltonian symmetry}. In section \ref{sec:multiwall} we discuss this perspective  and  possible extensions of Theorem \ref{thm:integrability} to billiard domains defined by any number of vertical and horizontal  infinite straight  lines. There, we also briefly discuss how  integrable HIS systems which  are  not Liouville-integrable arise when some of the lines are  semi-infinite. While the IEMBD for such cases can be similarly constructed, the construction of IFG  for such cases is beyond the scope of this paper (as quasi-integrable dynamics arises \cite{Issi2019}, similar to the dynamics  in billiards with \(270^{o}\) corners \cite{Athreya2012,Dragovic2014,Dragovic2015,Dragovic2015a,Moskvin2018}).

In analogy with the smooth case, we define near-integrability of an HIS as the property of preservation of most of the Liuoville tori on energy surfaces:

\begin{defn}\label{def:nihis}
A family of HIS depending on a parameter \(\epsilon\) is a Near-Integrable HIS  if for \(\epsilon=0\) it is a LIHIS and, on all energy surfaces with, possibly, the exception of a finite number of intervals in the neighborhood of  singular energies, the set of initial conditions for which the motion is conjugated to a motion on a torus approaches the full measure as \(\epsilon\rightarrow0\).
\end{defn}

Applying previous results on KAM theory in impact systems  \cite{pnueli2018near} together with Theorem \ref{thm:integrability} we establish in section \ref{sec:proofofmainintegrb} that the integrable results predict the global  behavior of nearby GWS:  \begin{mainthm} \label{thm:nearintegrability} \textit{For sufficiently small \(\epsilon=|\epsilon_{r}|+|\epsilon_{w}|\) the near perpendicular wall system is a near integrable HIS. The IFG and the IEMBD of the integrable limit (\(\epsilon=0\)) provide, up to sets of small measure,     the impact division. In particular, the division includes a small phase space region where the non-impact, singular and transverse-impact sets are mixed whereas the rest of the phase space is divided to a finite number of open regions in which only one impact type of motion exists.}     \end{mainthm}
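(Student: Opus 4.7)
The plan is to combine Main Theorem \ref{thm:integrability} with the impact-KAM machinery of \cite{pnueli2018near}. Theorem \ref{thm:integrability} supplies, at $\epsilon=0$, a Liouville-integrable HIS in which the IFG and IEMBD provide a complete impact division. The remaining task then splits into two parts: (a) establishing near-integrability in the sense of Definition \ref{def:nihis}, i.e.\ persistence of most Liouville tori for $0<\epsilon\ll 1$, and (b) showing that the trichotomy into non-impact, singular, and transverse impact sets persists, up to a set of small measure, with the surviving open strata dictated by the IEMBD of the $\epsilon=0$ limit.

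For (a) I would invoke the perturbative framework of \cite{pnueli2018near} directly. Its hypotheses are met here: the iso-energetic non-degeneracy [N] gives the KAM twist condition; the smoothness [S2] of $V_{int}$ together with $V_r\in C^r$, $r>4$, handles the regular part of the perturbation; and the $C^r$ bound on $Q'$ in Definition \ref{def:gws} controls the wall deformation in the norm required by the impact-KAM estimates. Applied on any compact energy shell separated from the finitely many singular energies marked on the IEMBD, these estimates give that the measure of surviving invariant tori tends to the full measure of the shell as $\epsilon\to 0$, which is exactly what Definition \ref{def:nihis} demands. The excluded neighborhoods are precisely the intervals around the singular energy values (resonance/separatrix strips), so they are finite in number and uniformly small.

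For (b) the key geometric fact extracted from the $\epsilon=0$ analysis of Section \ref{sec:integrable} is that, on a regular energy shell, the configuration-space projection of each Liouville torus is a product of the one-dimensional Hill regions of $H_1$ and $H_2$, whose endpoints are functions of $H_1, H_2$ alone. Whether the torus is non-impacting, tangent, or transversely impacting a vertical wall $q_1=q_1^{*}$ is determined solely by the sign of $V_1(q_1^{*})-H_1$. Hence the three impact types partition the IEMBD into open strata separated by codimension-one analytic tangency curves. A continuity/transversality argument then shows that a torus lying $O(\epsilon)$-far from a tangency curve keeps its impact type under the deformation: its intersections with the perturbed wall are either uniformly transverse or its projection stays a positive distance from the perturbed wall. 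Combining this with the KAM persistence from (a), the complement of an $O(\sqrt{\epsilon_r}+\epsilon_w)$-neighborhood of the tangency curves in the IEMBD is foliated by surviving KAM tori each carrying a single impact type, producing the finitely many advertised open regions.

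The main obstacle is the mixing region near the tangency strata, where two small effects conspire: standard KAM destruction of resonant tori by the smooth perturbation on an $O(\sqrt{\epsilon_r})$ set, and the geometric fact that a torus $O(\epsilon_w)$-close to tangency can, under the wall deformation, acquire or lose transverse impacts and may even support a Cantor-like family of genuine tangent trajectories. To close the argument I would show (i) that the total measure of this combined region vanishes with $\epsilon$, by combining the quantitative KAM bound with the $C^r$-smoothness of the tangency locus in $(H_1,H_2)$ coordinates, and (ii) that on any surviving KAM torus lying outside this region the impact type of each trajectory is constant, which follows from the $O(\epsilon)$ proximity of the perturbed integrals of motion to those of the $\epsilon=0$ system together with the smoothness of the impact map at transverse collisions. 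The hardest technical step is (ii): verifying that no surviving torus outside the small neighborhood carries a mixture of non-impacting and impacting orbits, since a single tangent orbit would obstruct the sought trichotomy. This is where the explicit $(H_1,H_2)$-description of tangency from the PWS integrable structure is indispensable.
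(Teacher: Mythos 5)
Your overall strategy coincides with the paper's: KAM for the smooth flow handles the non-impact zone, the impact-KAM construction of \cite{pnueli2018near} (Poincar\'e return map to a section inside the billiard, which is a $C^{r}$, $r>3$, near-integrable twist map) handles the transverse zone, and the explicit $(H_1,H_2)$-location of the tangency locus from Section \ref{sec:integrable} isolates the bad region. Part (a) of your argument is essentially the paper's.

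However, in part (b) you are missing the single idea the paper actually uses to close the argument, and it is precisely the idea that would resolve what you flag as your ``hardest technical step (ii)'': for 2 d.o.f.\ systems a KAM torus is $2$-dimensional inside a $3$-dimensional energy surface and therefore \emph{divides} it. The paper does not argue torus-by-torus that each surviving torus keeps its impact type; it only needs \emph{one} surviving KAM torus bounding the non-impact region and \emph{one} surviving KAM torus separating the transverse region from a small neighborhood of the tangent tori. These act as impenetrable barriers, so every orbit confined between them --- including orbits in resonance gaps that lie on no invariant torus --- stays forever in a region bounded away from tangency, and the corresponding open region carries a single impact type. Your proposal controls only the surviving tori themselves (``surviving KAM tori each carrying a single impact type, producing the finitely many advertised open regions''), but the advertised open regions are mostly \emph{not} covered by KAM tori: the gaps between them have positive measure, and without the dividing/confinement argument nothing prevents an orbit in a gap from drifting toward the wall, acquiring a tangency, or escaping the non-impact region entirely. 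Once the dividing property is invoked, your step (ii) becomes unnecessary, and the ``mixed'' region is exactly the residual strip around the tangent tori not shielded by a dividing torus, which has small measure by the quantitative KAM estimates you already cite.
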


The size and the dynamics in the near-tangent region in this near-integrable setup is studied in \cite{Pnueli2020}.
In section \ref{sec:Hill-region} the impact division in the non-perturbative regime is examined (the GWS with arbitrary \(\epsilon_{w}\)). The properties  of the \(\epsilon_{w}-\) dependent IFG and IEMBD for a GWS  are summarized in Proposition \ref{thm:global} and proved in sections \ref{sec:hillfoliation}-\ref{subsec:impactzones}. Briefly, the IFG divides the leaves of an energy surface to \textit{impact zones }according to the nature of the impacts of segments on the leaves  (see section \ref{sec:hierarchyleaves}). In particular, the tangent zone is the only zone with leaves that include tangencies at the wall. For the PWS  the iso-energy tangent zone is of measure zero. Defining: \begin{defn}\label{def:wallgeneralposition}{ A GWS of the form (\ref{eq:wavywal}) is in \textit{general position }if the extremal points of \(V\) are not on the wall and if, at \( q_{2}^{w-min}\), a global minimizer of \(V_{1}(\epsilon _{w}Q(q_2))+V_2(q_2)\),   the following non-degeneracy condition holds:
\begin{equation}\label{eq:generalpositionwall}
 \left(|Q''(q_{2})|+|Q'(q_{2})|\left[\,|V_{1}'(\epsilon_{w}Q(q_{2}))|  +|V_{2}''(q_{2})-V_{1}''(\epsilon _{w}Q(q_{2}))|+|Q'''(q_{2})|\,\right] \right)_{q_{2=}q_{2}^{w-min}}\neq0.
\end{equation} }
 \end{defn}
If  \(Q'(q_{2}^{w-min})\neq0\), e.g. if the wall is a slanted wall with  \(\alpha\in(0,\frac{\pi }{2})\), then (\ref{eq:generalpositionwall}) is automatically satisfied since the extremal points of \(V\) are not on the wall. Proposition \ref{thm:global} states that  provided the GWS is in general position, above some critical known energy, the measure of the iso-energy tangent zone and of the transverse zone are both positive. The division to impact zones defines regions of initial conditions which are, in general, not invariant, so their relation to the impact division is not immediate. Nonetheless,  some insights may be derived with regards to the non-impact zone and the relevance of the tangency zone; We define:

\begin{defn}\label{def:interiorGWS}
An \emph{interior general wall system}  is a GWS with  at least one local minimizer of the potential  \(V\)  inside the billiard domain. Otherwise, the GWS is called an  \emph{exterior GWS}.
\end{defn}

\begin{mainthm}     \label{thm:nonimpactgen}\textit{For sufficiently small \(\epsilon_{r}\) and arbitrary \(\epsilon_{w}\), the phase space measure of the iso-energy non-impact set is \(O(\sqrt{\epsilon_{r}})\) close to the measure of the non-impact set  defined by the IFG and  IEMBD of the GWS  at  \(\epsilon_{r}=0\).  For interior GWS this set has a positive \(O(1)\) measure for a range of energies, whereas for exterior GWS the measure of this set is at most  of \(O(\sqrt{\epsilon_{r}})\).}
\end{mainthm}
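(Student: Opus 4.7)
The plan is to handle the integrable baseline $\epsilon_r=0$ by the IFG/IEMBD characterisation of the non-impact set, then propagate the conclusion to small $\epsilon_r>0$ via the iso-energetic KAM framework for HIS developed in \cite{pnueli2018near} and already invoked in Theorem \ref{thm:nearintegrability}.

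At $\epsilon_r=0$ the separability of $H_{int}$ means each regular Liouville leaf projects onto a configuration-space cell $I_1\times I_2\subset\{V_1\leq H_1\}\times\{V_2\leq H_2\}$, which is the Hill region of that leaf, and the leaf yields a non-impact orbit iff $I_1\times I_2\subset D$. This is precisely the geometric content of the IFG/IEMBD construction of Section \ref{sec:Hill-region} and of Proposition \ref{thm:global}. For an interior GWS I would choose a local minimum $(q_1^*,q_2^*)$ of $V_1+V_2$ lying in $D$; for $(H_1,H_2)$ slightly above $(V_1(q_1^*),V_2(q_2^*))$ the corresponding cell is a small rectangle about $(q_1^*,q_2^*)$ contained in $D$, yielding an open positive-measure family of non-impact tori. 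For an exterior GWS every cell $I_1\times I_2$ contains a product of local minima of $V_1,V_2$, and by hypothesis no such product point lies in $D$; if the cell also meets $D$ it must cross $\partial D$, so the free-flight projection exits $D$ and the orbit impacts the wall. Hence the $\epsilon_r=0$ non-impact set for an exterior GWS is empty.

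Turning on the perturbation, conditions [S2], [N] and the smoothness of $V_r$ make iso-energetic KAM applicable on regular energy surfaces: the set of non-persistent tori has iso-energy measure $O(\sqrt{\epsilon_r})$, and persistent tori deform by $O(\epsilon_r)$ in configuration space. For an interior GWS I would isolate a compact family of tori in the $\epsilon_r=0$ non-impact region whose cells lie at a positive distance $d>0$ from $\partial D$; once $\epsilon_r$ is small enough that the $O(\epsilon_r)$ deformation is much smaller than $d$, the persistent KAM tori in this family stay inside $D$ and remain non-impact, so the iso-energy non-impact measure is bounded below by its $\epsilon_r=0$ value minus $O(\sqrt{\epsilon_r})$, i.e.\ a positive $O(1)$. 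For an exterior GWS, persistent tori whose unperturbed cells properly meet $D^c$ (by an $O(1)$ amount) still produce impacting orbits under an $O(\epsilon_r)$ deformation, so the only candidates for switching impact status are tori whose cells lie in an $O(\epsilon_r)$-tubular neighbourhood of $\partial D$. Integrating this slab against the action-angle Jacobian and adding the non-persistent remainder yields the desired bound $O(\sqrt{\epsilon_r})$.

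The main obstacle is the exterior estimate, since the iso-energy tangent zone of Proposition \ref{thm:global} already contains leaves arbitrarily close to $\partial D$ at $\epsilon_r=0$, so one must verify that $\epsilon_r$-perturbations of these tangent-zone tori do not generate an anomalously large population of newly non-impact orbits. My approach would combine the local normal-form description of tangent impacts from \cite{Pnueli2020} with the iso-energetic KAM measure estimate to bound the tangent contribution, and would control the chaotic (non-KAM) remainder by the global $O(\sqrt{\epsilon_r})$ KAM bound rather than by orbit-by-orbit tracking.
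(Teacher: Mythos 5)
Your proposal follows essentially the same route as the paper: the $\epsilon_r=0$ dichotomy rests on the fact that every projected rectangle of a leaf contains a local minimizer of $V$ in its interior (Lemma \ref{lem:prlintrfixedpoint} and Claims 1--2 of Proposition \ref{thm:global}), and the perturbative step is the observation that a forever-non-impacting orbit is an orbit of the smooth perturbed flow, so smooth iso-energetic KAM theory bounds everything outside the surviving tori by $O(\sqrt{\epsilon_r})$. Both your interior lower bound and your exterior upper bound go through.

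Two points where the paper's accounting is sharper. First, for an exterior GWS the $\epsilon_r=0$ non-impact set is not empty but only of measure zero: on a resonant impact leaf, an individual periodic orbit traces a Lissajous-type curve inside the projected rectangle and need not pass through the minimizer of $V$ that lies outside $D$, so it can avoid the wall even though the leaf is an impact cut-leaf. These non-impacting periodic orbits are exactly what the paper identifies as the source of the exterior $O(\sqrt{\epsilon_r})$ term: under perturbation they can fatten into smooth resonance islands of non-impacting quasiperiodic motion of measure $O(\sqrt{\epsilon_r})$. Your discussion of an $O(\epsilon_r)$-tubular neighbourhood of $\partial D$ and of the tangent zone is not where the obstruction lives (for a wall in general position every cell meeting $D$ contains a minimizer at a fixed positive distance inside $D^c$); the bound you need is supplied, as you say at the end, by the global KAM-complement estimate, so your conclusion survives, but the mechanism is the resonant one. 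Second, the deformation of the surviving tori bounding the non-impact zone should be taken as $O(\sqrt{\epsilon_r})$ rather than $O(\epsilon_r)$, since the zone boundary may be resonant and the nearest persisting torus is only $O(\sqrt{\epsilon_r})$ away in action; this does not change the final estimate.
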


In section \ref{sec:interiortangentsegments}
we study when a tangency on the wall corresponds to an external tangent point (like in convex billiards) or, conversely, corresponds to a tangent segment which is included in the billiard domain (like in dispersing billiards). We prove in Theorem \ref{lem:h2tanofq2}  that dispersing wall regions always give rise to tangent segments above certain energy whereas at focusing wall regions the tangent segments are in the billiard for at most  a finite range of energies. At flat wall segments the sign of the normal force  determines whether the tangent segment is in the billiard or is out of the billiard.

When at a given energy surface there are only external tangencies the impact division is simple (the iso-energy singular impact set consists of a finite number of finite length segments in phase space). On the other hand, in all other cases the iso-energy singular impact set consists of orbits which may be chaotic and possibly mix non-trivially with the transverse impact set. This situation is similar to the case of billiards with a dispersing boundary component, yet here the structure of the singularity set changes with the energy. Utilizing the results of Proposition \ref{thm:global} and Theorem \ref{lem:h2tanofq2} we prove in section \ref{sec:proofsmain2} the following main theorems:   \begin{mainthm} \label{thm:singulconvex}  \textit{A  GWS in general position, for which the billiard is convex   and the potential \(V\) increases along the wall normal, has, at  \(\epsilon_{r}=0\),  only external tangent points. If the billiard is strictly convex, and there exists a finite \(K\) such that the potential \(V\) increases along the wall normal for all \(|q_{2}|\geqslant K\),  then,  for sufficiently large energy , the same statement holds. } \end{mainthm}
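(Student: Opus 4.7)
The strategy is local and second-order. At any tangent point $(q^{*},p^{*})$ on the wall, the signed wall distance along the Hamiltonian trajectory $g(t):=q^{w}(q(t);\epsilon_{w})$ satisfies $g(0)=\dot g(0)=0$, and the tangent is external precisely when $\ddot g(0)<0$, so that $g(t)<0$ on both sides of $t=0$. Differentiating twice along the flow yields
\[
\ddot g(0)\;=\;(p^{*})^{T}\,\nabla^{2}q^{w}(q^{*})\,p^{*}\;-\;\nabla V(q^{*})\cdot\nabla q^{w}(q^{*}),
\]
so the proof reduces to controlling the signs of the curvature term and the normal-force term; this is exactly the expansion already used in Theorem~\ref{lem:h2tanofq2}, where focusing/dispersing/flat wall segments were classified by the signs of these two contributions.

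For Part~1, for a GWS wall of the form $q^{w}=q_{1}-\epsilon_{w}Q(q_{2})$ convexity of the billiard domain $\{q^{w}\geqslant 0\}$ is equivalent to $\epsilon_{w}Q''(q_{2})\geqslant 0$, so $\nabla^{2}q^{w}=\mathrm{diag}(0,-\epsilon_{w}Q'')$ is negative semidefinite and the curvature term is non-positive. Adopting the inward-normal convention $n=\nabla q^{w}/|\nabla q^{w}|$ --- the one consistent with the PWS sub-case $\epsilon_{w}=0$, where the hypothesis reduces to $V_{1}'(c)>0$ and gives $\ddot g(0)=-V_{1}'(c)<0$, matching the flat-wall statement of Theorem~\ref{lem:h2tanofq2} --- the hypothesis that $V$ increases along the wall normal reads $\nabla V\cdot\nabla q^{w}>0$, so the normal-force term is strictly negative. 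Summing yields $\ddot g(0)<0$ at every tangent point, the desired external-tangency statement.

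For Part~2 I would split the wall into a compact near-region $\{|q_{2}|\leqslant K\}$ and a far region $\{|q_{2}|>K\}$. On the far region the hypotheses of Part~1 hold verbatim, so $\ddot g(0)<0$ at every energy. On the near-region, strict convexity combined with compactness gives a uniform lower bound $\epsilon_{w}Q''(q_{2})\geqslant\delta>0$, while smoothness of $V$ together with the general-position hypothesis of Definition~\ref{def:wallgeneralposition} give a uniform bound $|\nabla V\cdot\nabla q^{w}|\leqslant M$. The identity $|p^{*}|^{2}=2(E-W(q_{2}^{*}))/(1+(\epsilon_{w}Q'(q_{2}^{*}))^{2})$ with $W(q_{2}):=V_{1}(\epsilon_{w}Q(q_{2}))+V_{2}(q_{2})$ shows that the curvature term scales like $-\delta'E$ uniformly on the near-region as $E\to\infty$, while the normal-force term stays bounded by $M$; choosing $E^{*}$ with $\delta'E^{*}>M$ then forces $\ddot g(0)<0$ on the near-region as well. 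The main obstacle is this uniformity step: the bounds $\delta$ and $M$ must be effective over all tangent points at a given energy, and the threshold $E^{*}$ must depend only on $(Q,V,K)$ and not on the particular orbit, which is exactly what the general-position assumption is designed to ensure.
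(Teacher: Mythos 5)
Your proposal is correct and follows essentially the same route as the paper: the second-order expansion $\ddot g(0)=p^{T}\nabla^{2}q^{w}\,p-\nabla V\cdot\nabla q^{w}$ is precisely the quantity (\ref{eq:q12inbilliard}) underlying Theorem \ref{lem:h2tanofq2}, and your Part 2 split into a compact near-region (uniform curvature lower bound $\delta$ beating a bounded normal-force term $M$ at large energy) and a far region reproduces the paper's $K_{1}/c$ and $K_{2}$ estimates. The only quibble is one of attribution: the uniform bound on $|\nabla V\cdot\nabla q^{w}|$ over $\{|q_{2}|\leqslant K\}$ follows from smoothness and compactness (the S3BN conditions together with the bounded-derivative clause of Definition \ref{def:gws}), not from the general-position condition of Definition \ref{def:wallgeneralposition}, which plays no real role in this particular proof beyond the wall-non-degeneracy already guaranteed by the hypothesis $\hat n\cdot\nabla V>0$.
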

In formula, if \(\epsilon _{w}Q''(q_{2})\leqslant0\) (see Eq.  (\ref{eq:wavywal})) and the force is directed towards the wall  \(((\hat n\cdot\nabla V)_{q^w(q_2)}>0 \) for all \(q_{2}\in\mathbb{R}\)   where \(\hat n \) denotes the normal pointing into the billiard), the singular impact set of the GWS is, as in smooth convex billiards, restricted to the billiard boundary. Conversely,
\begin{mainthm}  \label{thm:singulconcave}\textit{A GWS in general position, defined on a semi-dispersing billiard with a potential \(V\) which decreases along the wall normal, has, at  \(\epsilon_{r}=0\), on any energy surface, tangent segments inside the billiard at all  wall points which are in the Hill region. If the billiard is dispersing and there exists a finite \(K\) such that the potential \(V\) decreases along the wall normal for all \(|q_{2}|\geqslant K\),  then,  for sufficiently large energy, the same statement holds. } \end{mainthm}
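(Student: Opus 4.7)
The plan is to reduce Theorem \ref{thm:singulconcave} to a pointwise application of Theorem \ref{lem:h2tanofq2} at each wall point contained in the iso-energy Hill region. Fix a wall point $q_{2}^{\ast}$ and an energy $E$ with $V(q^{w}(q_{2}^{\ast}),q_{2}^{\ast})\leq E$, so that a tangent orbit with velocity along the wall is realized. Theorem \ref{lem:h2tanofq2} classifies such tangencies by two local ingredients: the signed curvature of the wall (dispersing, flat, or focusing) and the sign of the normal force $-(\hat{n}\cdot\nabla V)_{q^{w}(q_{2}^{\ast})}$. Under the hypothesis of Theorem \ref{thm:singulconcave}, the assumption that $V$ decreases along the wall normal translates to $(\hat{n}\cdot\nabla V)_{q^{w}(q_{2})}<0$, so the force $-\nabla V$ has a strictly positive component in the inward direction $\hat{n}$, and general position (Definition \ref{def:wallgeneralposition}) removes the degenerate wall points at which both the curvature and the normal force vanish simultaneously.

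For the first (global) assertion, I would distinguish the two semi-dispersing cases. At a strictly dispersing wall point the straight-line tangent extension already enters the billiard and the inward force reinforces this, so Theorem \ref{lem:h2tanofq2} produces an interior tangent segment without any energy threshold, since both geometric and force effects point into the billiard. At a flat wall point the straight-line tangent grazes the wall, but the inward normal force lifts the trajectory off the wall into the billiard, and the same theorem again produces an interior segment. Since semi-dispersing excludes focusing points, these two cases exhaust all wall points in the Hill region, which gives the first claim.

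The second (asymptotic) assertion concerns walls that are strictly dispersing everywhere but enjoy the inward-force property only on $|q_{2}|\geq K$. On the unbounded region $|q_{2}|\geq K$ the argument of the first assertion applies verbatim at every energy. On the complementary compact arc $|q_{2}^{\ast}|\leq K$, the outward component of the normal force may compete with the dispersive curvature, and Theorem \ref{lem:h2tanofq2} furnishes a local energy threshold $E^{\ast}(q_{2}^{\ast})$ above which the geometric dispersion dominates, heuristically because the tangential kinetic energy grows with $E$ while $\nabla V$ remains bounded on the compact arc. Continuity of $E^{\ast}$ in $q_{2}^{\ast}$ and compactness of $\{|q_{2}^{\ast}|\leq K\}\cap\{V\leq E\}$ then yield a uniform threshold $\bar{E}=\sup E^{\ast}<\infty$, so that for every $E>\bar{E}$ interior tangent segments exist at every wall point in the Hill region.

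The main obstacle is the uniform version of Theorem \ref{lem:h2tanofq2} used in the second assertion. One must show that the local threshold $E^{\ast}(q_{2}^{\ast})$ depends continuously on $q_{2}^{\ast}$ and is uniformly bounded on the compact arc, which requires a quantitative form of Theorem \ref{lem:h2tanofq2} expressing $E^{\ast}$ in terms of a strict lower bound on the dispersive curvature $|\epsilon_{w}Q''(q_{2}^{\ast})|$ and the pointwise magnitude $|(\hat{n}\cdot\nabla V)_{q^{w}(q_{2}^{\ast})}|$. The general position assumption is then needed to keep these quantities controlled away from degeneracy across the compact arc, thereby making $\bar{E}$ finite.
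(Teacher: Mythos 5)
Your proposal is correct and follows essentially the same route as the paper: a pointwise application of Theorem \ref{lem:h2tanofq2}, with the sign analysis of \(H^{tsegm}\) giving the first assertion at every wall point in the Hill region, and a compactness/uniform-bound argument on the arc \(|q_{2}|\leqslant K\) giving the second. The ``main obstacle'' you flag is already resolved by the explicit formula (\ref{eq:H2tansegment}), from which the paper bounds the threshold \(H^{tmin}(q_{2})+H^{tsegm}(q_{2})\) on the compact arc by \(K_{2}+K_{1}/c\) (with \(c\) the lower bound on \(|\epsilon_{w}Q''|\)); note only that the needed wall-non-degeneracy comes from the theorem's own strict-sign hypotheses (Definition \ref{def:nondegwallpoint}), not from the general-position condition of Definition \ref{def:wallgeneralposition}.
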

In the general case the singular impact set is expected to be non-trivial:
\begin{mainthm}  \label{thm:concconv}\textit{Above a critical energy, a GWS in general position which has both concave and convex wall segments has a non-trivial singular impact set. }  \end{mainthm}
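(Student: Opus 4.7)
The plan is to reduce the claim to the structural information provided by Theorem~\ref{lem:h2tanofq2} (from Section~\ref{sec:interiortangentsegments}), which for a GWS relates the position of a wall-tangent segment (inside vs.\ outside the billiard) to the local curvature of the wall and to the energy. The general-position hypothesis (\ref{eq:generalpositionwall}), together with the $C^{r}$ smoothness of $Q$ and of $V$, ensures that the tangent locus depends smoothly on $(q_{2},h)$ and is free of degenerate coincidences at the extrema of $V$.

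First I would isolate an open dispersing sub-interval $I_{d}\subset\mathbb{R}$ of the wall parameter: by hypothesis $\epsilon_{w}Q''(q_{2})$ changes sign, and continuity of $Q''$ then gives an open set on which the wall is dispersing with respect to $D$. Applying the dispersing-side conclusion of Theorem~\ref{lem:h2tanofq2}, there exists a finite threshold $h^{*}>0$ such that for every $h>h^{*}$ and every $q_{2}\in I_{d}$ with $V(\epsilon_{w}Q(q_{2}),q_{2})\le h$, the wall-tangent segment at that point lies inside $D$. Each such $q_{2}$ yields a pair of grazing orbits (one per along-wall direction) on the energy surface $\{H_{int}=h\}$, tangent to the wall at an interior point, and by the impact-division definition all these orbits belong to the singular impact set. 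The focusing-side conclusion of Theorem~\ref{lem:h2tanofq2} complements this on the convex sub-intervals: above some $h^{**}$ the tangent segments there exit $D$, so the convex portions contribute only external tangencies. Setting the critical energy of the statement as $\max(h^{*},h^{**})$, the iso-energy singular impact set at every $h$ above that threshold contains genuine interior-grazing orbits coming from $I_{d}$.

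The final step is to upgrade ``contains interior-grazing orbits'' to ``non-trivial'' in the sense used after Theorem~\ref{thm:singulconvex}, i.e.\ not a finite union of boundary arcs. The grazing family is parametrized smoothly by $(q_{2},h)\in I_{d}\times(h^{*},\infty)\cap\{V\le h\}$; flowing these initial conditions by the smooth Hamiltonian vector field produces an open tube in phase space whose $C^{r}$ boundary coincides with a stratum of the transverse impact set. Hence the singular impact set shares a codimension-one boundary with the transverse impact set, which is the required non-triviality. The main obstacle I anticipate is precisely this last smoothness/non-degeneracy step: one must exclude the possibility that the interior tangent segments collapse to isolated points for exceptional energies. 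This is where (\ref{eq:generalpositionwall}) is essential — it rules out alignments between wall-tangencies and critical points of $V$ and thereby lets the implicit function theorem parametrize the tangent locus as a smooth submanifold of $I_{d}\times(h^{*},\infty)$, ensuring the family of interior-grazing orbits persists for all $h$ above the critical energy.
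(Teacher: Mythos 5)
Your proposal follows essentially the same route as the paper: both reduce the claim to Theorem \ref{lem:h2tanofq2}, using dispersing wall points ($\epsilon_{w}Q''<0$) guaranteed by the mixed-curvature hypothesis to produce, above a finite energy threshold $H^{tmin}(q_{2})+H^{tsegm}(q_{2})$, a tangent segment lying inside the billiard, whose forward and backward impact flow makes the singular impact set non-trivial. The paper's proof is leaner --- a single wall point with $\epsilon_{w}Q''(q_{2})<-c<0$ (automatically wall-non-degenerate) suffices, so neither the uniform threshold over an open dispersing interval (which would in any case require passing to a compact subinterval, since $H^{tsegm}$ can blow up where $Q''\to0$), nor the focusing-side discussion, nor the closing submanifold/tube argument and the role you assign there to the general-position condition (\ref{eq:generalpositionwall}) is actually needed.
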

For particular cases, the tools introduced in sections \ref{sec:integrable}-\ref{sec:Hill-region} can be utilized to find the relative measure of leaves on which tangent segments occur. For example:
\begin{mainthm}\label{thm:tangslantdc}
\textit{ For sufficiently large energy, the slanted wall Duffing-Center system with \(\alpha\in(0,\frac{\pi }{2})\) has a non-trivial singular impact set; For such energies, the tangent segments occur  on a \(\sin^2(\alpha)(1+O(\frac{1}{\sqrt{H}}))\) portion of the leaves. }
\end{mainthm}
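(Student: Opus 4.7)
The plan is to parameterize the tangent branch of the slanted-wall IEMBD by the wall coordinate $q_2^{*}$ of the tangent point and then read off the portion carrying tangent segments in the billiard from the sign of a single derivative. At a tangent configuration $(q_1^{*},q_2^{*})=(q_2^{*}\cot\alpha,\,q_2^{*})$ the velocity is parallel to the wall, $p_1\sin\alpha=p_2\cos\alpha$, and combining this with $H_i=p_i^2/2+V_i(q_i^{*})$ and $H_1+H_2=H$ yields
\[
H_2(q_2^{*})\;=\;\sin^2\alpha\,\bigl(H-V_1(q_2^{*}\cot\alpha)\bigr)+\cos^2\alpha\,V_2(q_2^{*}),
\]
with $q_2^{*}$ ranging over the wall Hill set $\{q_2 : V_1(q_2\cot\alpha)+V_2(q_2)\le H\}$.

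Second, since the slanted wall is flat, the flat-wall clause of Theorem \ref{lem:h2tanofq2} distinguishes tangent segments in the billiard from tangent segments out of the billiard by the sign of the inward normal force $\hat n\cdot F$, with $\hat n=(\sin\alpha,-\cos\alpha)$ and $F=-\nabla V$; the former occurs where $\hat n\cdot F>0$. A direct differentiation of the displayed formula gives
\[
\frac{dH_2}{dq_2^{*}}\;=\;\cos\alpha\,\bigl(\hat n\cdot F\bigr)\big|_{(q_2^{*}\cot\alpha,\,q_2^{*})},
\]
so the wall points whose tangent segments lie in the billiard coincide exactly with the strictly increasing portion of $H_2(q_2^{*})$.

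Third, for the Duffing--Center Hamiltonian (translating so $q_{1s}=q_{2c}=0$), one computes
\[
\frac{dH_2}{dq_2^{*}}\;=\;q_2^{*}\cos^2\alpha\,\bigl[(1+\omega^2)-(q_2^{*})^2\cot^2\alpha\bigr],
\]
whose zeros are at $q_2^{*}=0$ and at $\pm q_2^{c}:=\pm\tan\alpha\sqrt{1+\omega^2}$. The function $H_2$ is even, equal to $H\sin^2\alpha$ at $0$, to $\sin^2\alpha(H+(1+\omega^2)^2/4)$ at $\pm q_2^{c}$, and to $V_2(q_2^{max})=\omega^2\tan^2\alpha\sqrt{H}\,(1+O(1/\sqrt{H}))$ at the wall Hill boundary $\pm q_2^{max}$, whose leading order for large $H$ is $\pm(4H)^{1/4}\tan\alpha$ (from the leading $u^4/4$ of $V_1$). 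Once $H$ is large enough that $q_2^{max}>q_2^{c}$, the in-billiard tangent region $(-q_2^{max},-q_2^{c})\cup(0,q_2^{c})$ has positive length, giving the non-triviality of the singular impact set. The images of $H_2$ on these two intervals are $(H\sin^2\alpha,\,\sin^2\alpha(H+(1+\omega^2)^2/4))$ and $(V_2(q_2^{max}),\,\sin^2\alpha(H+(1+\omega^2)^2/4))$ respectively, and their union is the single interval $\bigl(V_2(q_2^{max}),\,\sin^2\alpha(H+(1+\omega^2)^2/4)\bigr)$ of length $H\sin^2\alpha+O(\sqrt{H})$. Dividing by the total leaf range $[0,H+1/4]$ on the energy surface yields the ratio $\sin^2\alpha\,(1+O(1/\sqrt{H}))$.

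The hard part will be the book-keeping in the last step: one must invoke the $q_2^{*}\to -q_2^{*}$ symmetry of $H_2$ to verify that every tangent leaf in the computed $H_2$-interval already carries an in-billiard tangent, so that the union of images above is the full set of tangent-zone leaves carrying in-billiard segments (and the exterior tangents do not contribute additional leaves); and one must justify that the natural measure on leaves agrees with $dH_2$ up to a bounded factor, which here follows because $H_2=\omega I_2$ and $H_1$ is a diffeomorphism of $I_1$ on the single-well range $H_1>0$ relevant for the tangent-inside interval.
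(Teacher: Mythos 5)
Your computation is essentially the paper's: you use the same tangency relation $H_{2}^{t}(q_{2},H)=\sin^{2}\alpha\,(H-V_{1}(\cot\alpha\cdot q_{2}))+\cos^{2}\alpha\,V_{2}(q_{2})$ (Eq.~(\ref{eq:tanghh2slant})), the same flat-wall criterion from Theorem \ref{lem:h2tanofq2} (sign of the inward normal force), the same $q_{2}^{a,b}(H)\sim\pm(4H)^{1/4}\tan\alpha$ asymptotics, and the same division by the full $H_{2}$-range. The one genuinely nice addition is the identity $\frac{d}{dq_{2}}H_{2}^{t}=\cos\alpha\,(\hat n\cdot F)$, which turns the in-billiard/out-of-billiard dichotomy of Theorem \ref{lem:h2tanofq2} into ``the increasing portion of $H_{2}^{t}$ along the wall''; the paper instead works with the cubic $-V_{1}'(\cot\alpha\cdot q_{2})+\cot\alpha\,V_{2}'(q_{2})$ directly and sandwiches the interior-tangent zone between $[\,\cdot\,,\sin^{2}\alpha H+G(q_{2}^{tsegm,-})]$ and $H_{2}^{tan}(H)=[\,\cdot\,,\sin^{2}\alpha H+G_{max}]$, which differ by $O(1)$. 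Your reformulation is cleaner and makes the ``exterior tangents contribute no new leaves'' step transparent in the symmetric case.

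The one genuine defect is the reduction ``translating so $q_{1s}=q_{2c}=0$.'' The wall is pinned at the origin ($Q(0)=0$), so translating the potential moves the wall relative to it; setting $q_{1s}=q_{2c}=0$ is not a change of variables but a restriction to a single parameter choice --- and a degenerate one, since it places the saddle of $V$ on the wall, violating the general-position hypothesis and the condition (\ref{eq:extremalinslantedbil}) under which Proposition \ref{prop:globalslanted} is proved. The theorem (and the paper's proof) covers arbitrary $q_{1s},q_{2c}$. The repair is routine but should be stated: for general parameters $\frac{d}{dq_{2}}H_{2}^{t}=\cos\alpha\,(\hat n\cdot F)$ is still a cubic in $q_{2}$ with negative leading coefficient $-\cos^{4}\alpha/\sin^{2}\alpha$, so the increasing set is a union of at most two intervals, always containing a left tail $(-q_{2}^{max},q_{2}^{tsegm,-})$ with $q_{2}^{tsegm,-}=O(1)$; its image under $H_{2}^{t}$ then covers $[V_{2}(q_{2}^{a}(H)),\sin^{2}\alpha H+O(1)]$, while the full tangent zone is contained in $[V_{2}(q_{2}^{a}(H))\wedge V_{2}(q_{2}^{b}(H)),\sin^{2}\alpha H+G_{max}]$, and the two differ by $O(1)$ in $H_{2}$-measure --- without any appeal to evenness of $H_{2}^{t}$, which is lost for $q_{1s}\neq 0$ or $q_{2c}\neq 0$.
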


\noindent The proofs of Theorems \ref{thm:integrability},\ref{thm:nearintegrability} are presented is section \ref{sec:proofofmainintegrb} and those of Theorems \ref{thm:nonimpactgen}-\ref{thm:tangslantdc} in section \ref{subsec:impactzones}.

\section{Hierarchy of bifurcations of separable systems}\label{sec:integrable}

In this section we set the grounds for the construction of  the impact Fomenko graphs (IFG) and the impact energy momentum diagrams (IEMBD) for a general wall system (GWS), and construct them specifically for the perpendicular wall systems (PWS) at \(\epsilon_r=0\). All along we use the main example for concreteness. We then use these constructions to prove Theorems \ref{thm:integrability} and \ref{thm:nearintegrability}.

We begin with proving the first statement of Theorems \ref{thm:integrability} as the proof reveals some of the basic properties of the PWS:

\begin{lem}\label{lem:integrabilityonly} At  \(\epsilon_{r}=0\)  the perpendicular wall system is a Liouville integrable HIS.    \end{lem}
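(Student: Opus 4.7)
Fix a PWS whose wall is (WLOG) perpendicular to the \(q_1\)-axis at \(q_1=c\), so the billiard domain is \(D=\{q_1\geqslant c\}\) and, at \(\epsilon_r=0\), the Hamiltonian is the separable one \(H_{int}=H_1(q_1,p_1)+H_2(q_2,p_2)\) with two smooth integrals in involution, \(F_1=H_1\) and \(F_2=H_2\). My plan is to verify the two conditions of Definition \ref{def:liouvilimpint} in turn. For \textbf{RespF}, I would write down the impact map explicitly: since the wall normal is along \(\hat q_1\), elastic reflection sends \((q_1,p_1,q_2,p_2)\mapsto(c,-p_1,q_2,p_2)\). Then \(H_1(c,-p_1)=\tfrac12p_1^2+V_1(c)=H_1(c,p_1)\) and \(H_2\) is left untouched, so both \(F_1,F_2\) are invariant under impacts (and of course under the smooth flow, since \(\{H_1,H_2\}=0\)).

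For \textbf{Resp\(\theta\)}, I would exploit the decoupled product structure. The \((q_2,p_2)\) motion never sees the wall and, on any regular \(\{H_2=h_2\}\) oval (compact and smooth by [B], [S3]), is standard \(1\)-dof periodic motion of period \(T_2(h_2)\). The \((q_1,p_1)\) motion takes place on \(\{H_1=h_1\}\cap\{q_1\geqslant c\}\). By [B] and [S3] the full curve \(\{H_1=h_1\}\) is a finite union of smooth ovals; intersecting with \(\{q_1\geqslant c\}\) produces either an oval entirely in \(\{q_1>c\}\) (no impacts) or a compact arc with two endpoints \((c,\pm p_1^\ast)\) on the wall, glued by the reflection \(p_1^\ast\leftrightarrow-p_1^\ast\). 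A leaf is regular exactly when the wall is not tangent to the curve and does not pass through an extremum of \(V_1\) at energy \(h_1\), so \(p_1^\ast\neq0\); in that case the reduced \((q_1,p_1)\) orbit is topologically a circle, traversed with some finite period \(T_1(h_1,c)\) (time along the arc plus the instantaneous reflection).

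Once each factor is a circle, the allowed component of the common level set \(\{F_1=h_1,F_2=h_2\}\) is the Cartesian product of the two reduced circles, hence a \(2\)-torus in phase space. Because the two subsystems are decoupled, the combined impact flow is a direct product of two independent periodic flows. Introducing angles \(\theta_i=2\pi t/T_i\pmod{2\pi}\) and the actions \(I_i=\tfrac{1}{2\pi}\oint p_i\,dq_i\) taken around each reduced circle (with the \(I_1\) integral evaluated along the impact-closed arc), the flow is conjugate to the linear flow \(\dot\theta_i=\omega_i(h_1,h_2,c)\) on \(\mathbb T^2\), yielding Resp\(\theta\) and completing the proof.

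The main technical point, and what the rest of Section \ref{sec:integrable} will be needed for, is the careful delimitation of which \((h_1,h_2,c)\) give regular leaves: one must exclude wall-tangent trajectories (\(p_1^\ast=0\)), energies at which the wall passes through an extremum of \(V_1\), and must track how the number of connected components of \(\{H_1=h_1\}\cap\{q_1\geqslant c\}\) changes with \(h_1\) (e.g.\ in the Duffing case where two wells may be separately or jointly accessible). None of this obstructs the two pointwise conditions used above, but it is exactly the data encoded by the IFG/IEMBD constructed in the sequel, and this is where I would defer those bookkeeping details.
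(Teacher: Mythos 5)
Your proposal is correct and follows essentially the same route as the paper's proof: RespF is verified from the invariance of \(H_{1},H_{2}\) under the reflection \(p_{1}\to-p_{1}\) at a perpendicular wall, and Resp\(\theta\) is obtained by observing that the impacted one-degree-of-freedom leaf becomes a cut-and-glued circle on which the motion remains conjugate to a rotation, so the product with the unaffected factor is again a torus. The additional bookkeeping you defer (tangencies, walls through extrema of \(V_{1}\), counting components) is exactly what the paper likewise postpones to the IFG/IEMBD construction.
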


\begin{proof}Using the separability of the Hamiltonian, the action-angle coordinates in each degree of freedom may be used to verify globally the conditions  Resp\(F\),Resp\(\theta\)  (see definition \ref{def:Liouvilleintegrable}). In the perpendicular cases the wall preserves the energies $H_{1,2}$ upon impact: the reflection rule translates to $p_{1}\rightarrow-p_{1}$ for $\alpha=\frac{\pi}{2}$, and  $p_{2}\rightarrow-p_{2}$ for $\alpha=\frac{\pi}{2}$, so  the values of the constants of motion \(H_{1,2}\) do not change upon collision and condition RespF is verified.
The separability implies that the property of Liouville integrability needs to be verified only for the d.o.f. which is affected by the impact. Indeed, for a one d.o.f. system with impacts, on each impacting periodic orbit, the reflection rule $p_1\rightarrow -p_1$ corresponds to a jump in the angle parameterizing the invariant circle and a gluing between the angle values before and after impact (see \cite{Issi2019}). The motion remains conjugate to rotation on an invariant (cut and glued) circle, with an accordingly modified rotation number, so the condition for Liouville integrability remains fulfilled.
\end{proof}
In  \cite{neishtadt2008jump} action-angle variables are defined for   1 d.o.f. impact systems and it is shown that these are smooth away from the tangent orbit.
\subsection{The  hierarchy of bifurcation for S3BN  systems }

We briefly review the construction of Energy-Momentum Bifurcation Diagrams (EMBD) and Fomenko graphs for systems of the form (\ref{eq:Hint}). A two d.o.f integrable, autonomous Hamiltonian system has two constants of motion - two independent smooth functions of the phase space which remain constant along trajectories. One of which is the Hamiltonian $H$, and the second invariant will be denoted hereafter by \(H_{2}\), and will be taken to be the partial energy of the second d.o.f. (the choice of the invariants in the EMBD is inessential for the integrable dynamics  \cite{Arnold2007CelestialMechanics,fomenko2004integrable}, yet, for the near-integrable and impact settings specific choices are revealing, see   \cite{Arnold2007CelestialMechanics,litvak2004energy,ShlRK10} for the classical smooth theory and \cite{pnueli2018near} for application in impact systems). An Energy-Momentum Bifurcation Diagram (EMBD) is a plot in $(H,H_{2})$ space, which depicts the regions of allowed motion in phase space, and includes the bifurcation set - the  singular values  of $(H,H_{2})$  that correspond to singular level sets of the system (see \cite{lerman1998integrable,Arnold2007CelestialMechanics,radnovic2008foliations,litvak2004energy}).  Regular energy values are
values at which
the bifurcation set does not bifurcate (the singularity curves of nearby leaves do not intersect and do not fold as a function of the energy, see \cite{radnovic2008foliations} for the formal definition). Bifurcation points of this set define the singular energy values.

\begin{figure}
\begin{centering}
\includegraphics[scale=0.2]{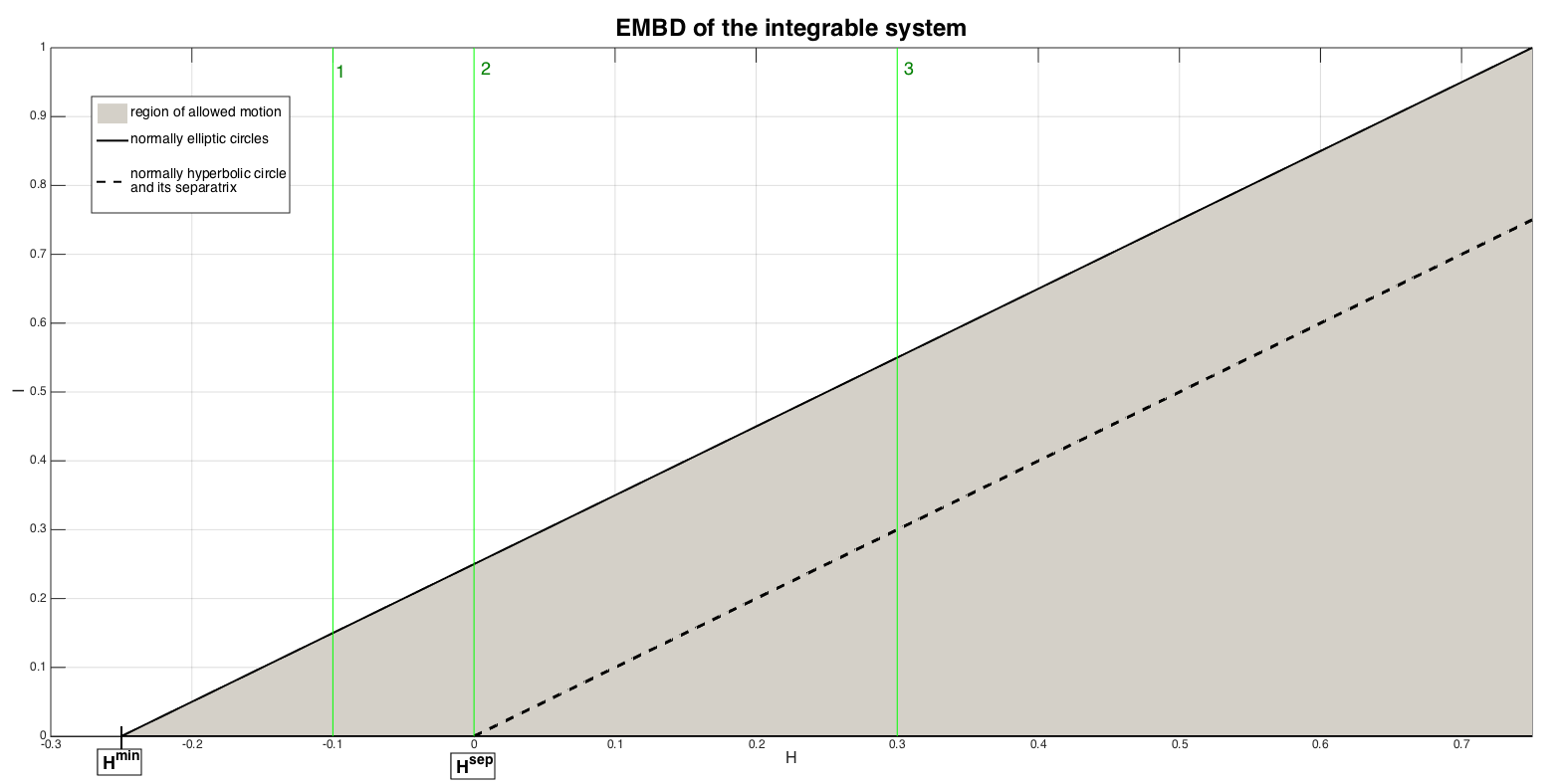}\includegraphics[scale=0.3]{FomenkoFig8table}
\end{centering}
\centering{}\protect\caption{(left) \label{fig:EMBD-integrable}EMBD and (right) Fomenko graphs of the integrable Hamiltonian (\ref{eq:doublewell}).  The two stable types of Fomenko graphs  are separated by the graph at the singular energy value \(H^{sep }\).\ The full circles correspond to elliptic circles with \(I>0\) whereas those with open circles correspond to circles with \(I=0,  \) see \cite{ShlRK05chaos,radnovic2008foliations,ShlRK10}.}
\end{figure}
\noindent{\textbf{The EMBD for the DC Example:}}
\textit{
To  classify the structure of the level sets  of  \(H_{int}^{dc}\) (\ref{eq:doublewell}), let $V_{1,min}=V^{\pm}_{1,min}=V^{dc}_{1}(q_{1s}\pm1), \ V_{2,min}=V^{dc}_2(q_{2c})=0$,   \(H^{min}=\min _{q}V=V_{1,min}+V_{2,min}=-\frac{1}{4}\) and \(H^{sep}=V_1^{dc}(q_{1s})+V^{dc}_2(q_{2c})=0\).
 There are exactly three distinct singular curves in the EMBD (see Figure \ref{fig:EMBD-integrable}) corresponding to the fixed points of the Hamiltonians \(H_1, H_2\) :
\begin{itemize}
\item   The solid line $H_{2}=\omega I=V_{2,min}=0$ corresponds to the elliptic fixed point of \(H_{2}\) at $(q_{2c},0)$. The corresponding level  sets consist of two
normally elliptic circles for $H\in(H^{min},H^{sep})$ (oscillatory motion around each of the centers of \(H_{1}\)) and one normally
elliptic circle for $H>H^{sep}$ (rotational motion around the figure eight of \(H_{1}\)). The line $H_{2}=0$ bounds the allowed region
of motion from below.
\item The solid lines $H^{ell,\pm}(H_{2})=H_{2}+V^{\pm}_{1,min}=H_2+H^{min}$, $H_{2}>0$, correspond to the two elliptic
fixed points  of \(H_{1}\) at $(q_{1s}\pm1,0)$.
Each of these lines correspond to a normally elliptic circle in the full phase
space (due to the symmetric form of the potential these lines coincide, so hereafter we denote them by  \(H^{ell}(H_2)=H^{ell,\pm}(H_2)\)).  These lines bound the region of allowed motion from
above.
\item The dashed line $H^{hyp}(H_{2})=H_{2}=H_2+H^{sep}$, $H_{2}>0$, corresponds to  the level set of the hyperbolic fixed
point   of \(H_{1}\) at   $(q_{1s},0)$. This  level set (\(H_{1}=0\))\ also includes the separatrices in the \((q_{1},p_{1})\) plane. For $H_{2}>0$,
the corresponding level sets are composed of a normally hyperbolic circle
and its separatrices.
\end{itemize}
Since both $H_2$ and $\frac{p_{1}^{2}}{2}\geq0$,
 the region of allowed motion (grey region) is bounded in between the  curves \(H_{2}=0\) and $H^{ell}(H_{2})$.   Each regular point in the EMBD, where regular point is a point in the allowed region of motion which does not belong to any of the singular curves, corresponds to either a single or two disconnected tori.
To distinguish between these cases, namely, for describing the Liouville foliation of energy surfaces, Fomenko graphs are constructed  \cite{fomenko2004integrable}.}

 The isoenergy surfaces correspond to a vertical line in the EMBD for a certain $H$ value. In these graphs (called molecules in \cite{fomenko2004integrable}), each foliation leaf is represented by a point, and hence each smooth family of Liouville tori, a branch, constitutes an edge in the graph. The edges connect vertices which correspond to the singular leaves of the foliation - the intersection of the energy level set with the singular curves in the EMBD. These vertices (called atoms in \cite{fomenko2004integrable}) have different designations according to the type of the singularity. In  \cite{fomenko2004integrable,lerman1998integrable} the  topological classification of isoenergy surfaces of 2 d.o.f Hamiltonian systems is derived. The terminology used here is based on \cite{ShlRK10,ShlRK05chaos,radnovic2008foliations} in which the main ideas behind Fomenko's method are summarized and the hierarchy of bifurcations framework is developed. In particular, the Fomenko graphs supplement the EMBD representation by providing information about the number of tori corresponding to each level set and how these families of tori are connected on a given energy surface.

\noindent{\textbf{The FG for the DC Example:}}
\textit{
For \(H^{min}<H<H^{sep}, \) energy surfaces are composed of two disconnected surfaces, each of them corresponding to a single family of tori connecting the two circles \(q_2=p_2=0,H_{1}(q_1,p_1)=H\)
 (for such \(H\) values this \((q_{1},p_1)\) level set has two circles) with the corresponding circles \(q_{1}=q_{1s}\pm1,p_{1}=0, H_{2}(q_{2},p_2)=H-H^{min}. \) The Fomenko graph for this case corresponds to a pair of edges (so each edge corresponds to a branch of Liuoville leaves) with vertices that correspond to elliptic circles (atoms A) of different topological types (so the marks on these graphs are \(r=0\), and for clarity we denote them by different symbols in the Fomenko graphs, see Fig \ref{fig:EMBD-integrable}). For \(H>H^{sep}\) the energy surface is connected, and the level sets on it have two components for \(H_{2}>H\) and one component for \(H_{2}<H\), so the Fomenko graph is the Y shape molecule, each edge with mark  \(r=\infty\), corresponding to two families of tori connecting two circles (two atoms A) to a separatrix (an atom B) and  another family of tori connecting the separatrix to the single circle \((H_{2}=0,H_{1}(q_1,p_1)=H).  \) In particular, here there are exactly two singular energies, \(H^{min}=V_{1,min}, H^{sep}\),  dividing the energy surfaces to two robust types, those with \(H^{min}<H<H^{sep}\) and those with \(H>H^{sep}\), see Fig \ref{fig:EMBD-integrable}.}

For S3BN Hamiltonians  the level sets are products of the \(H_{1}-\)level sets with the  \(H_{2}-\)level sets, so the leaves of an \((H_{1},H_2)\) level set are the product of all the \(H_{1}-\)leaves with the \(H_{2}-\)leaves.  Thus, the bifurcation set  \(\Sigma\) consists of  a finite number of curves that correspond to  level sets on which one of the d.o.f. has a fixed point, namely\begin{equation}
\begin{array}{ll}
\Sigma _{1,j_{1}}&=\{(H,H_{2}
)|H=H_{2}+V_1(q^{ext}_{1,j_{1}}),H_{2}\geqslant H_{2,min}\}, \\
\Sigma _{2,j_{2}}&=\{(H,H_{2}
)|H_{2}=V_2(q^{ext}_{2,j_{2}}),H\geqslant V_2(q^{ext}_{2,j_{2}})+H_{1,min}\},\\ \\
\Sigma&=\bigcup_{i=1,2}\bigcup_{j_i=1,..,n_i}\Sigma _{i,j_{i}}
\end{array}\label{eq:bifurcationsetnowall}\end{equation}
 where \(H_{i,min}:=\min_{q_{i\in\mathbb{R}}} V_{i}(q_i)\). The singular energies (see \cite{ShlRK10,ShlRK05chaos,radnovic2008foliations}) are the energies at which the curves \(\Sigma _{1,j_{1}}\) and \(\Sigma _{2,j_{2}}\) cross:\begin{equation}\label{eq:critbifsetsm}
\Sigma^{s}=\left\{H|H=H^{s,i_{1},i_{2}}=V_1(q^{ext}_{1,i_{1}})+V_2(q^{ext}_{2,i_{2}}), i_{k}=1,..,n_k,k=1,2   \right\}
\end{equation}
so  \(H^{min}=H_{1,min}+H_{2,min}\in \Sigma^{s}\) is the minimal energy at which motion is allowed.
\subsection{\label{sec:hierarchyleaves}The  hierarchy of bifurcations for general wall systems}

We now add to the EMBD and the IFG information about impacts and tangencies of trajectories that belong to a given level set. Recall that every regular level set of the integrable Hamiltonian is a union of a finite number of tori, the regular Liouville leaves. Singular leaves of (\ref{eq:Hint}) are connected components of singular level sets on which at least one of the Hamiltonians has a fixed point. In a product Hamiltonian (like (\ref{eq:Hint})) each leaf corresponds to a product of the leaves of the one d.o.f. subsystems \(H_{1,2}\), and is spanned by an infinite number of trajectories. Non-resonant regular leaves are covered  by these trajectories densely, resonant regular leaves are covered by infinite number of closed periodic trajectories, whereas singular leaves (e.g. a figure eight separatrix times a circle, an  atom B leaf), are covered by a union of several families of trajectories - periodic ones and bi-asymptotic ones.

Now consider an HIS where the Hamiltonian is integrable (\(\epsilon_r=0\)) and the billiard boundary defines the walls at which impacts occur. Then, some of the leaves of the integrable motion are cut by the boundary, causing trajectories to jump from one cut-leaf to another cut-leaf, where by cut-leaf we mean the union of trajectory segments belonging to a leaf of the integrable Hamiltonian \(H_{int}\)  that reside in the allowed region of motion, namely inside the billiard domain:

\begin{defn} A \emph{  cut-leaf} of   the system (\ref{eq:Hgeneral})\(|_{\epsilon_r=0}\) is the intersection of a leaf of the system  (\ref{eq:Hint}) with the impact allowed region of motion. The \textit{cut-leaf is regular} if the constants of motion are independent on every point on the cut-leaf.  \end{defn}
 In between impacts, a trajectory of  (\ref{eq:Hgeneral})\(|_{\epsilon_r=0}\)   moves on a segment of the smooth motion on the cut-leaf, and there is one-to-one correspondence between the cut-leaf and the leaf. For impact systems, we distinguish between three types of   cut-leaves:

\begin{defn} A \emph{tangent  cut-leaf} is a  cut-leaf which contains at least one tangent segment (the tangent segment may consist of only one point, an exterior tangency point).  A  \emph{transverse impact cut-leaf}   is a  leaf on which some segments impact the wall transversely and all other segments belong to orbits which do not reach the wall at all. A  \emph{non-impact  leaf}  consists only of orbits which do not reach the wall.    \end{defn}

 For a transverse cut-leaf which is a cut leaf of a regular leaf, all trajectory segments impact the wall (transversely). On the other hand, a transverse cut-leaf which is  a cut leaf of a singular leaf (e.g. one which corresponds to separatrix level set in one of the d.o.f.), may contain segments which do not reach the wall (e.g. see Fig \ref{fig:ps-alphapi2}b).
 \begin{defn} A \emph{tangent branch} is an iso-energetic family of  regular Liouville leaves (represented by an edge of the Fomenko graphs) which contains   tangent leaves in its interior.    \end{defn}

\begin{defn} {A \emph{tangent level set} on an isoenergy surface \(H,\) $(H_{1,tan}(H),H_{2,tan}(H)),$ is a level set which contains tangent leaves. A \emph{transverse impact level set} is a level set which does not contain tangent leaves and contains transverse impact leaves. An \emph{non-impact level set} is a level set which contains only non-impacting leaves. }   \end{defn}

Notice that a transverse impact level set may also have some non-impacting leaves - these are leaves that are not affected by the wall (see below for examples).
Along a family of tori belonging to a tangent branch we can find two possible boundary tori:
\begin{defn} \label{def:boundarytanleaf} A tangent leaf is a \textit{tangent boundary   leaf}  if  it divides the tangent branch to impacting and non-impacting tori. A tangent leaf is a   \textit{transverse boundary  leaf} if it divides  a tangent branch to tangent and transverse impacting tori.
   \end{defn}
    Notice that by definition, a tangent leaf  may also contain impacting segments and non-impacting trajectories (see section \ref{sec:Hill-region}) whereas a boundary tangent leaf contains only tangent and non-impacting segments.    \begin{defn}\label{def:iembd}The \emph{Impact EMBD (IEMBD)} is the EMBD of the underlying integrable Hamiltonian where the level sets in the allowed region of motion are divided to transverse impacting zone, non-impacting zone and tangent zone. The \textit{impact zone} of the IEMBD includes the tangent and transverse impact zones. \end{defn}

    \begin{defn}\label{def:ifg} The \emph{Impact Fomenko graphs} (IFG) of a GWS are derived from the Fomenko graphs of the integrable system  (\ref{eq:Hint}) as follows; for non-impacting leaves the IFG is identical to the FG. The tangent and transverse impacting cut-leaves are marked with the symbol \(\tau\) and \(im\) respectively, with edges denoting families of regular cut leaves and atoms denoting the singular cut-leaves. The edges  and atoms of the FG which are not in the billiard domain are  eliminated.\end{defn}

\begin{defn} \label{def:bifurcationset} The \textit{bifurcation set \(\Sigma\) of the IEMBD }of a GWS consists of the bifurcation set of the EMBD and the curve corresponding to the level sets of the tangent and transverse boundary  leaves. An energy level \(h\) of this system is \textit{singular} if the bifurcation set of the IEMBD has singularities at \(h\) and is \textit{regular} otherwise.
   \end{defn}

In particular, at regular energies, the tangent and transverse \textit{boundary leaves} are regular, namely  are cut leaves of regular tori and not of any of the singular level sets.
At singular energies the structure of the IFG changes (see below).

With the above definitions at hand we are ready to fully describe the IEMBD and the IFG and the corresponding structure of the impact and the non-impact sets of   PWS at \(\epsilon_r=0\). For concreteness, we  first describe the impact division for the  horizontal wall case ($q^{w}_{\alpha=0}$), where the billiard is the lower half plane (the vertical wall case is realized by replacing the indices \(1\leftrightarrow2\)).

Define the tangent energy, the minimal singular tangent energy, the tangent curve:
\begin{equation}
H_{2,tan}:=V_{2}(0), \quad H^{s1}=H_{1,min}+H_{2,tan},\quad \Sigma_{2,tan}=\{(H,H_{2,tan})|H\geq H^{s1}\}
\end{equation}
and the $q^{w}_{\alpha=0}$ bifurcation set : \begin{equation}\label{eq:sigmaalfa0}
\Sigma_{2}=\Sigma_{2,tan}\cup\bigcup_{j_1=1,..,n_1 }\Sigma _{1,j_{1}}\cup\ \bigcup_{\{j_2=1,..,n_2|q^{ext}_{2,j_{2}}\leq 0 \} }\Sigma _{2,j_{2}}.
\end{equation}
where \(\Sigma _{i,j_{i}} \) are the bifurcation curves of the smooth system (Eq. (\ref{eq:bifurcationsetnowall})). The curves in  \(\Sigma_{2}\) intersect at the singular tangent energies:\begin{equation}
\Sigma_{2}^{s,tan}=\left\{H=H^{tan,i_{1}}=V_1(q^{ext}_{1,i_{1}})+V_2(0),i_{1}=1,..,n_{1}   \right\}
\end{equation}
and at the singular interior critical energies:
\begin{equation}
\Sigma_{2}^{s}=\left\{H=H^{s,i_{1},i_{2}}=V_1(q^{ext}_{1,i_{1}})+V_2(q^{ext}_{2,i_{2}}) \text{ with } q^{ext}_{2,i_{2}}\leqslant0,i_{1,2}=1,..,n_{1,2} \right\}.
\end{equation}

\begin{prop}\label{lem:hetanglevel1}      The impact structure of the PWS  with $q^{w}_{\alpha=0}$ at  \(\epsilon_{r}=0\)  changes at the bifurcation set \(\Sigma_{2}\) (Eq. \ref{eq:sigmaalfa0}). In particular, for \(H\geq H^{s1}\)  the line \(\Sigma_{2,tan}\)  separates the IEMBD between the impacting (\(H_{2}>H_{2,tan}\)) the tangent  (\(H_{2}=H_{2,tan}\)) and the non-impacting (\(H_{2}<H_{2,tan})\) level sets. The  IFG  changes as the energy is varied only at    the singular impact energies,  namely at    \(\Sigma^{s}_2\cup\Sigma^{s,tan}_{2}\).
\end{prop}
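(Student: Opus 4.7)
The plan is to exploit separability: by Lemma~\ref{lem:integrabilityonly}, the reflection $p_2\mapsto -p_2$ at the wall $q_2=0$ preserves both $H_1$ and $H_2$, so whether a trajectory is impacting, tangent, or non-impacting is decided entirely by its one-degree-of-freedom $(q_2,p_2)$-dynamics inside $\{q_2\le 0\}$. A tangent encounter at the wall corresponds to $p_2=0$ at $q_2=0$, which by $H_2=\tfrac{p_2^2}{2}+V_2(q_2)$ is equivalent to $H_2=V_2(0)=H_{2,tan}$. This identifies $\Sigma_{2,tan}$ as the locus of tangent level sets, and, combined with the non-emptiness condition $H\ge H_2+H_{1,\min}$ for the $H_1$-level set, yields the lower endpoint $H^{s1}=H_{1,\min}+V_2(0)$.

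Next I would classify the level sets $\{H_2=h\}$ component by component in $(q_2,p_2)$. For $h>V_2(0)$, the connected component containing $q_2=0$, which exists by continuity of $V_2$, crosses the wall at $p_2=\pm\sqrt{2(h-V_2(0))}\neq 0$ and hence yields transverse impact leaves; for $h=V_2(0)$ the same component becomes tangent; for $h<V_2(0)$ every component meeting $\{q_2\le 0\}$ is bounded away from the wall and is non-impacting. Other components coming from wells of $V_2$ in $\{q_2<0\}$ contribute only non-impacting leaves, so the three-zone partition is as claimed. To complete the description of $\Sigma_2$, the classical curves $\Sigma_{1,j_1}$ persist intact since every $q_1$ is still accessible, whereas those $\Sigma_{2,j_2}$ with $q^{ext}_{2,j_2}>0$ correspond to critical points of $H_2$ located outside the billiard, so the restricted $(q_2,p_2)$-level set is regular in $\{q_2\le 0\}$ and contributes no singular cut-leaf; this leaves exactly the curves in~(\ref{eq:sigmaalfa0}).

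For the second claim, the plan is to verify that the IFG is invariant on each connected component of the $H$-axis disjoint from $\Sigma_2^s\cup\Sigma_2^{s,tan}$. On such an interval no two curves of $\Sigma_2$ meet, so the atoms arising from fixed points of $H_1$ and from admissible fixed points of $H_2$ persist, and the tangent line $\Sigma_{2,tan}$ does not cross any other singular curve; hence all edges of the IFG deform continuously in $H$. Moreover, no edge can switch its impact marking ($im$, $\tau$, or unmarked), because such a switch would force an $H_2$-subinterval bounded by atom values to cross $V_2(0)$ at a non-endpoint, i.e.\ the edge to cross $\Sigma_{2,tan}$ transversally, which by construction can only occur at an energy in $\Sigma_2^{s,tan}$. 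The main obstacle is the combinatorial bookkeeping in this last step: for each edge of the Fomenko graph one must track the position of $V_2(0)$ relative to the atom values bounding that edge, and the mutual ordering of the admissible atom values, as functions of $H$. Because for S3BN Hamiltonians the atom values are affine functions of $H$, the bookkeeping reduces to comparing a finite family of affine expressions whose pairwise equalities are exactly the energies in $\Sigma_2^s\cup\Sigma_2^{s,tan}$, delivering the claim.
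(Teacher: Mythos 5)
Your proposal is correct and follows essentially the same route as the paper's proof: reduce to the one-degree-of-freedom $(q_2,p_2)$ subsystem to identify the unique wall-reaching leaf and the tangency condition $H_2=V_2(0)$, lift to the product system to obtain the three zones, and show the IFG can only change where the tangent level set meets a singular $H_1$-leaf or where admissible singular curves cross. Your closing observation that the atom values are affine in $H$, so the bifurcation energies are exactly the pairwise equalities of finitely many affine expressions, is a compact repackaging of the paper's case-by-case $\Leftarrow$ arguments rather than a genuinely different method.
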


\begin{proof}
The key observation is that for PWS  the tangency property can be studied in one of the \(H_{i}\) systems, here the \(H_{2}\) system. Indeed, the wall equation for  $q^{w}_{\alpha=0}$  is $ q^{wall}(q_1,q_2)=-q_{2}=0$ (\(D=\{q|q_{2}\leqslant 0\}\), see Eq. (\ref{eq:straightwall})), hence,  a tangency    occurs when  \(q_{2}=p_{2}=0\) and    \(H_{2}(q_{2},p_{2})=V_2(0)=H_{2,tan}\). When    \(H_{2}=H_{2}(q_{2},p_{2})=\frac{p_2^2}{2}+V_2(0)>H_{2,tan}\) a transverse impact  occurs on the \(H_{2}\) level set at  \(q_{2}=0\).   Since for 1 d.o.f. \textit{mechanical} systems the points \((\pm p_{2},q_2)\) belong to the same unique leaf which includes \(q_{2}\),  for all \(H_{2}\geqslant H_{2,tan}\) there is a unique leaf of \(H_{2}\) that reach the wall.  By the S3BN assumption all other \(H_2\)-leaves of the \(H_{2}\geqslant H_{2,tan}\) level sets are bounded away from the wall, and some of these may reside outside of the billiard domain (\(H_{2}\)-leaves with \(q_{2}>0\)).

Extending these results to the iso-energetic level sets of the two d.o.f. product system, where the leaves of the product system at the level set \((H_{1}=H-H_2,H_2)\) are the product of the smooth \(H_{1}\)-leaves with the leaves and the cut-leaves of the \(H_2\) system completes the proof.
First, by the above analysis transverse impacts occur only if \(H_{2}> H_{2,tan}\). The level set   \((H_{1,tan}(H)=H-H_{2,tan},H_{2,tan}\)) is in the allowed region of motion iff  \(H\geq H^{s1}\in \Sigma^{s,tan}_{2}\), and thus,  \(\Sigma_{tan}\) indeed divides the allowed region of motion to impacting and non-impacting level sets.

  Next we prove that the IFG change exactly at     \(\Sigma^{s,tan}_{2}\cup\Sigma^{s}_2\), namely, that any change that appear as a result of the wall are included in \(\Sigma^{s,tan}_{2}\) and that all other  singular values at which the IFG changes are included in  \(\Sigma^{s}_2\).

Notice that the tangent/transverse impact  leaves of the full system are the product of the unique tangent/transverse impact \(H_{2}\)-leaf  with all the \(H_{1}\)-leaves, where  \(H_{1}=H-H_2\). There is a finite number of such leaves by the S3BN assumption.

The  \(\Sigma^{s,tan}_{2}\) singularities:

  \(\Rightarrow\)For energies \ \(H^{tan,i_{1}}\in\Sigma^{s,tan}_{2}\) the \(H_{1}\)-level set  \(H_{1,tan}(H^{tan,i_{1}})=V_1(q^{ext}_{1,i_{1}})\) includes a singular \(H_{1}\)-leaf.  At  \(H_{1,tan}(H^{tan,i_{1}}-\Delta H)\)  the   \(H_{1}\)-level set below the singular value  \(V_1(q^{ext}_{1,i_{1}})\) impact the wall whereas for  \(H_{1,tan}(H^{tan,i_{1}}+\Delta H) \)    \(H_{1}\)-leaves  with \(H_1>V_1(q^{ext}_{1,i_{1}})\)  impact the wall as well. Thus the tangent branches change at such values and the IFG changes (even though the FG does not).

\(\Leftarrow\)If the IFG changes at a value \(H^{s}\)  as a result of the wall,  there is a change in its division to non-impact, tangent and transverse impact edges. The claim is that such changes occur only when \ \(H^{s}\in\Sigma^{c,tan}_{2}\), namely, only when the   \(H_{1,tan}(H^{s})\) level set includes a singular leaf. Assume it is not, namely that     \(H_{1,tan}(H^{s})\) is a regular level set. Then, if \(V'(0)\neq0\), the tangent level set is regular, so that tangent leaves at \(H^{s}\) are a finite number of tangent tori that divide a finite number of edges of the FG to impacting and non-impacting parts. A small change in \(H\) can only change the position of the dividing tangent torus along the tangent branches and thus there can be no topological change in the IFG.  If \(V'(0)=0\), namely the tangent \(H_{2}\)-leaf  is singular yet the     \(H_{1,tan}(H^{s})\) is a regular level set, all the tangent leaves are singular of the same type, and they divide the IFG to impacting and non-impacting edges.  Small changes in \(H\) near \(H^{s}\) preserve this same singular product structure -  the impacting and non-impacting edges do not change across  \(H^{s}\), so, again, no change in the IFG can occur.

The  \(\Sigma^{s}_D\) singularities:

\(\Rightarrow\) For the singular energy \(H^{s}=H^{s,i_{1},i_{2}}=V_1(q^{ext}_{1,i_{1}})+V_2(q^{ext}_{2,i_{2}})\in\Sigma^{s}\) the FG of the product system (\ref{eq:Hint}) changes  at the neighborhood of the unique singular leaf which includes the fixed point  \((q^{ext}_{1,i_{1}},p_1=0,q^{ext}_{2,i_{2}},p_2=0)\) (see \cite{fomenko2004integrable,dragovic2009bifurcations}). If  \((q^{ext}_{1,i_{1}},q^{ext}_{2,i_{2}})\in D\), namely \(H^{s,i_{1},i_{2}}\in\Sigma_{D}^{s}\), at least part of this \((q^{ext}_{1,i_{1}},q^{ext}_{2,i_{2}})-\)leaf and the leaves in its neighborhood of the EMBD (i.e. on the FG of the iso-energy surfaces near   \(H^{s}\)) are  in the billiard domain, so the  topological changes that are reflected in the changes of the FG at \(H^{s,i_{1},i_{2}}\) also induce specific changes in the IFG. These topological changes may be listed for the different types of the fixed points of S3BN systems (center-center, saddle-center and saddle-saddle of the mechanical kinds,  so these are the simplest  cases of the general classification in \cite{fomenko2004integrable,dragovic2009bifurcations}) and the different possible positions of the tangency on the singular leaf (see e.g. Fig. \ref{fig:ps-alphapi2}c,d).
Note that if \(V_{2}\) has an extremal point at the origin (\(V_{2}'(0)=0\)) then the singular energies \(V_1(q^{ext}_{1,i_{1}})+V_2(0)\) are in   \(\Sigma^{s,tan}_{2}\cap \Sigma^{s}_{D}\).

   \(\Leftarrow \) There are no other singular energies. We already showed that all the singular energies that are produced by the wall are included in \(\Sigma^{s,tan}_{2}\). All other singular energies that stem from changes in the FG structure near singular level sets are included in \(\Sigma^{s}\). Hence, to complete the proof, we need to show that energy values in \(H^{s,i_{1},i_{2}}=V_1(q^{ext}_{1,i_{1}})+V_2(q^{ext}_{2,i_{2}})\in(\Sigma^{s}\backslash\Sigma^{s}_D)\setminus\Sigma^{s,tan}_2\) do not  induce changes in the IFG. For such values,  \((q^{ext}_{1,i_{1}},q^{ext}_{2,i_{2}})\notin D\), so, there are two cases to consider. The first is  when the entire \((q^{ext}_{1,i_{1}},q^{ext}_{2,i_{2}})-\)leaf  and its neighborhood are all outside of \(D\) (this is always the case for the center-center singularity). Then, all the changes in the FG near this level set are in branches that are eliminated from the FG and are not include in the IFG, so indeed no changes in the IFG are induced. The second case is when part of the \((q^{ext}_{1,i_{1}},q^{ext}_{2,i_{2}})-\)leaf   is in \(D\), namely, its \(H_{2}\) component is an \(H_{2}\)-cut leaf.  Such a situation is possible for the S3BN and   $q^{w}_{\alpha=0}$  case only if  \((q^{ext}_{2,i_{2}}>0,p_{2}=0)\) is a saddle point of the \(H_{2}\) system. Since  \(H^{s,i_{1},i_{2}}\in\Sigma^{s}\backslash\Sigma^{s}_D\)   if there are additional \(H_{2}\)-fixed points on this \(H_{2}\)-level set they  are not in \(D\). Since  \(H^{s,i_{1},i_{2}}\in\Sigma^{s}\backslash\Sigma^{s,tan}_2\)   the impacts with the wall are transverse.   Namely, the \(H_{2}\)-cut-leaf is a transverse impact regular cut leaf, and by the integrability of the PWS, it is  a circle. Hence, its nearby cut-leaves are also regular transverse cut leaves, namely circles.   So, while the \(H_{1}\)-level sets near  \(V_{1}(q^{ext}_{1,i_{1}})\)  change their topology and the IFG reflects these changes, the impact level sets of \(H_{2}\) near   \(V_{2}(q^{ext}_{2,i_{2}})\)  do not change their topology. Hence the IFG does not change at  \(H^{s,i_{1},i_{2}}\).

\end{proof}

The above proof provides the construction of all the robust IFG of the PWS  with $q^{w}_{\alpha=0}$ (and correspondingly for  $q^{w}_{\alpha=\pi/2}$) at  \(\epsilon_{r}=0\). For  \(H<H^{s1}\)   each connected component of  the \(H\)-iso-energy surface either belongs to the non-impact set and then its IFG is identical to that of the FG, or, it is not in the allowed region of motion and then its graph is eliminated (projects to the upper half plane). For  \(H\geq H^{s1}\), at each singular energy, the structure of the IFG changes;
At singular energies of \(\Sigma^{s,tan}_{2}\) new impact branches of the \(H_{1}\) system appear. At the singular energy values \(\Sigma_{2}^{s}\)  the structure of the level sets changes as in the smooth theory, with, when applicable, the cut leaves replacing the leaves of the full system. We demonstrate this construction for the main example.

 \begin{figure}
\begin{centering}
\includegraphics[scale=0.35]{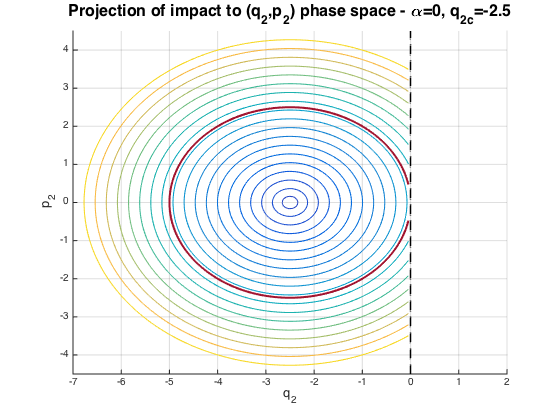}\includegraphics[scale=0.35]{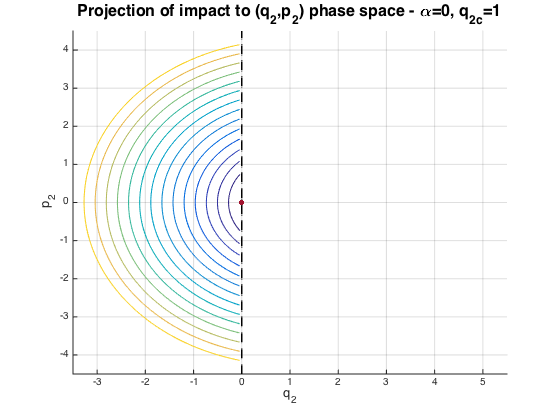}
\par\end{centering}
\centering{}\protect\caption{\label{fig:ps-alpha0}The  \((q_{2},p_{2})\)  plane of the impact system  (SW) with $H_{int}^{dc}$ (\ref{eq:doublewell}) with the wall $q^{w}_{\alpha=0}$. The wall $q_{2}=0$ (dashed black) cuts the leaves of the \(H_{2}\) levels sets when \(H_2>H_{2,tan}\), where the \(H_{2,tan} \) defines the tangent leaf (bold magenta). (a) $q_{2c}=-2.5, H=10$ (center point is inside
the billiard) (b) $q_{2c}=1, H=10$ (center point is outside).}
\end{figure}

\subsection{The perpendicular wall system of the Duffing-Center Hamiltonian.}
{The proof of Proposition \ref{lem:hetanglevel1} demonstrates that the first step in the construction of the IEMBD and IFG is the identification  of the tangent level sets. Figure \ref{fig:ps-alpha0} depicts the two possible robust impact geometries of the linear oscillator  (\(H_2\) of the Hamiltonian (\ref{eq:doublewell})) with the horizontal wall $q^{w}_{\alpha=0}$) - the interior and the exterior cases.  In the first case the center (which is the only singular level set of \(H_{2}\)) is inside the billiard, so the tangent level set is interior and thus the corresponding    \((H_{1,tan}(H),H_{2,tan})\) leaf divides the tangent branch to impacting and non-impacting leaves. In the second case,  the center and thus the tangent level sets are exterior to the billiard, so all the level sets in the allowed region of motion are impacting. The singular case is when the center is on the wall, where the only tangent \(H_{2}\)-level set is the center and all other level sets are impacting.}
 Figure \ref{fig:ps-alphapi2} depicts some of  the impact geometries of the double well potential  (\(H_1\) of the Hamiltonian (\ref{eq:doublewell})).    The top two images depict two of the five possible robust, non-singular cases. The bottom figures depict two of the four singular cases, cases where
the wall (at $q_1=0$) is either tangent to the left separatrix  or passes through one of the three fixed points.

In Fig  \ref{fig:ps-alphapi2}a, the level set \(H_{1,tan}\)
 is outside the separatrix. It divides the nearby level sets to impacting and non-impacting level sets and induces similar division of the product system.  In Fig  \ref{fig:ps-alphapi2}b, the level set \(H_{1,tan}\)
 is inside the separatrix and consists of two components. Here, only the left component contains the origin, and thus only the left branch of the  level sets changes from non-impacting to impacting, namely the left branch is the tangent branch. This multiplicity also carries to the product system.   The other three
non-singular scenarios may be similarly analyzed.
The two bottom images present two singular cases.
In the left bottom image, $q_{1s}=\sqrt{2}$,
i.e. - the left lobe of the separatrix is exactly tangent to the wall
equation. All level sets outside the separatrix impact,
and all level sets inside do not. In the right bottom image $q_{1s}=0$,
and once again all level sets outside the separatrix impact, and all
level sets inside the right branch, do not (the left branch is out of the billiard domain). Tangency
is expected to occur on the right separatrix level set at $q_{1}=0$, yet, this point cannot be reached
in finite time.  So, the right separatrix solutions of the product system are homoclinic to a tangent periodic orbit. See Appendix \ref{appdx:classification} for a detailed description  of the resulting trajectories.

\begin{figure}
\begin{centering}
\includegraphics[scale=0.4]{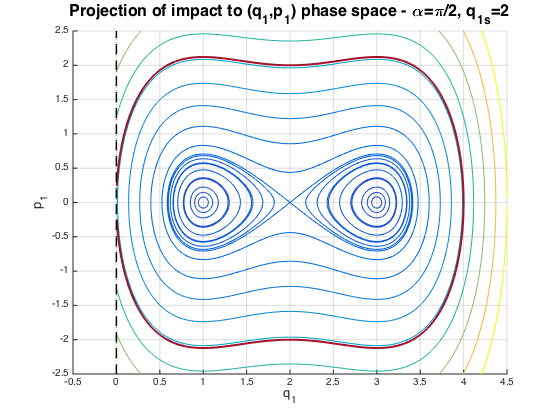}
\includegraphics[scale=0.4]{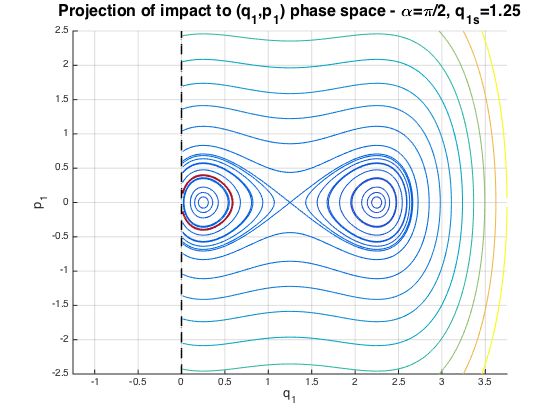}
\par\end{centering}
\begin{centering}
\includegraphics[scale=0.4]{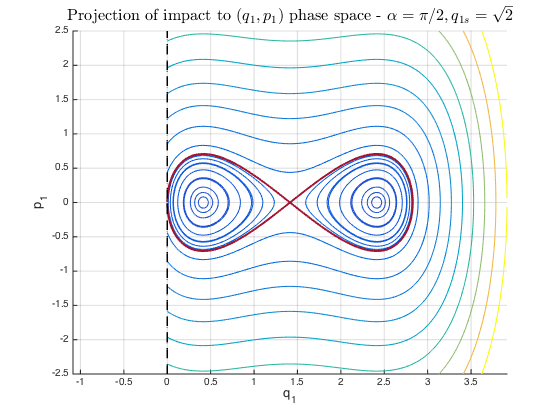}
\includegraphics[scale=0.4]{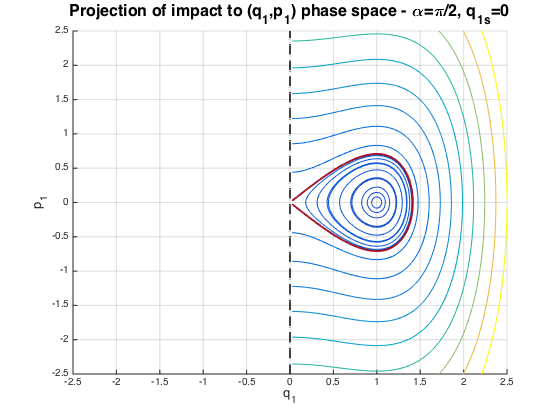}
\par\end{centering}
\centering{}\protect\caption{\label{fig:ps-alphapi2}The  \((q_{1},p_{1})\)  plane of the impact system  (SW) with $H_{int}^{dc}$ (\ref{eq:doublewell}) with the wall $q^{w}_{\alpha=\frac{\pi }{2}}$ and energy $H=10$. The wall $q_{1}=0$ (dashed black) cuts the leaves belonging to the tangent branch of the \(H_{1}\) levels sets when \(H_1>H_{1,tan}\), where  \(H_{1,tan} \) defines the tangent leaf (bold magenta).  Two of the regular cases (a,b) and two of the singular cases (c,d) are shown. (a) $q_{1s}=2$ (wall to the left of separatrix) (b) $q_{1s}=1.25$ (wall intersects the left loop of the separatrix). (c)  $q_{1s}=\sqrt{2}$ (wall tangent to the left lobe of the separatrix) (d) \(q_{1s}=0 \) (wall tangent to the hyperbolic fixed point of \(H_{1}\)).}
\end{figure}

Figures \ref{fig:ps-alpha0} and \ref{fig:ps-alphapi2} demonstrate that for each of the \(H_{i}\) systems there is a unique tangent level set which  divides the \(H_{i}\)-plane to two distinct types of level sets - those having impacting \(H_{i}\)-leaves and those which have only non-impacting \(H_{i}\)-leaves (possibly outside the billiard).

 \begin{figure}
\begin{centering}
\includegraphics[scale=0.5]{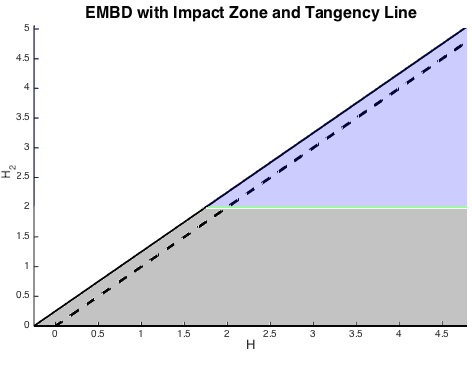} \qquad \includegraphics[scale=0.33]{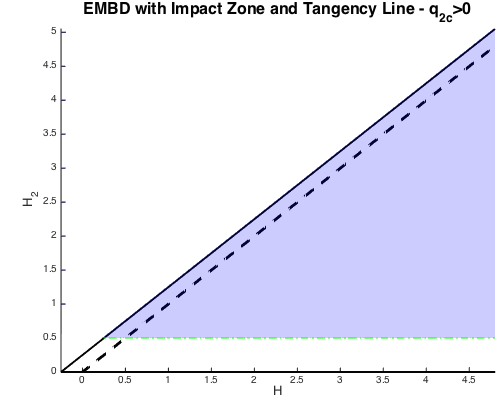}
\par\end{centering}
\protect\caption{\label{fig:EMBD-alpha0} Two of the three  robust IEMBDs of  the impact system (SW) with $H_{int}^{dc}$ (\ref{eq:doublewell}) with the horizontal wall $q^{w}_{\alpha=0}$. (left) The \(H_{2}\) center point is inside
the billiard ($q_{2c}=-2$ ) (right) The center point is outside the billiard ($q_{2c}=1)$. Impact (blue), non-impact (grey) and tangency (green) zones are depicted.
The tangency zone reduces here to the line $(H, H_{2}=V_2(0))$. For
$q_{2c}>0$ the tangent level sets is exterior to the billiard, and thus the impact zone ia an open set which does not include the tangency line (marked by a dashed green line). }
\end{figure}

Figure \ref{fig:EMBD-alpha0} presents the IEMBD for the PWS for the Duffing-Center Hamiltonian (\ref{eq:doublewell}) with the wall $q^{w}_{\alpha=0}$  for  two different cases corresponding to positive and negative \(q_{2c}\) (see Figure \ref{fig:ps-alpha0}). The three different types of level sets (impacting, tangent and non-impacting) are marked by different colors (blue, green and grey, respectively). The corresponding Impact Fomenko Graphs (IFG) are shown in Figure \ref{fig:Fomenko0in}, where the color code for the three different types of leaves is the same as in the IEMBD (with black replacing grey for non-impacting). The  tangent leaves are denoted by a diamond with a subscript $\tau$ and the impacting leaves by a dashed blue line and subscript  $_{im}$.  Impacting singular  level sets are denoted by a  $+$ sign within the atom symbol.
Tangent singular level sets (which appear at the singular energies) are denoted by a diamond sign within the atom symbol.

  To fully classify all possible changes in the IEMBD and the IFG,  we study how the \textit{singular energy values} change with parameters. For example, for the Impact-Duffing-Center system - we study how the singular energies depend on \(q_{2c} , q_{1s}\), the signed distances of the saddle-center point of the potential $V$ from the horizontal/vertical  walls.

\begin{figure}
\begin{centering}
\includegraphics[scale=0.4]{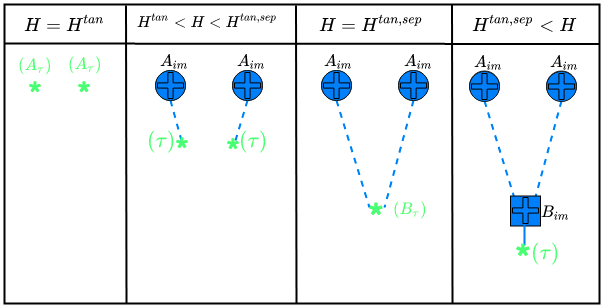}
\includegraphics[scale=0.4]{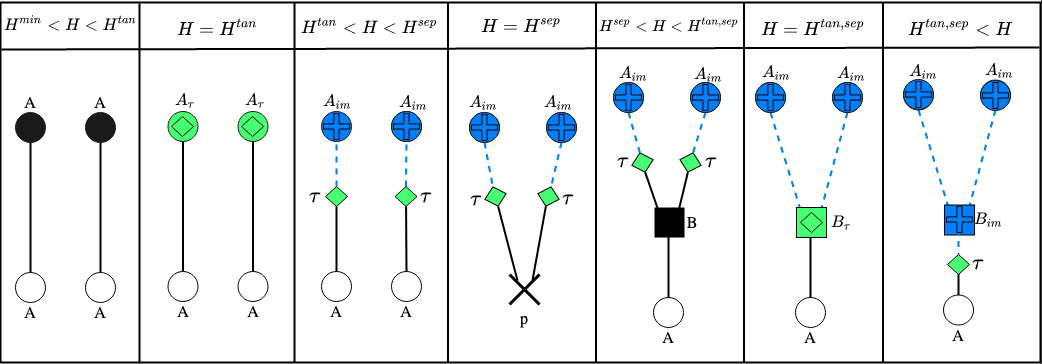}
\includegraphics[scale=0.4]{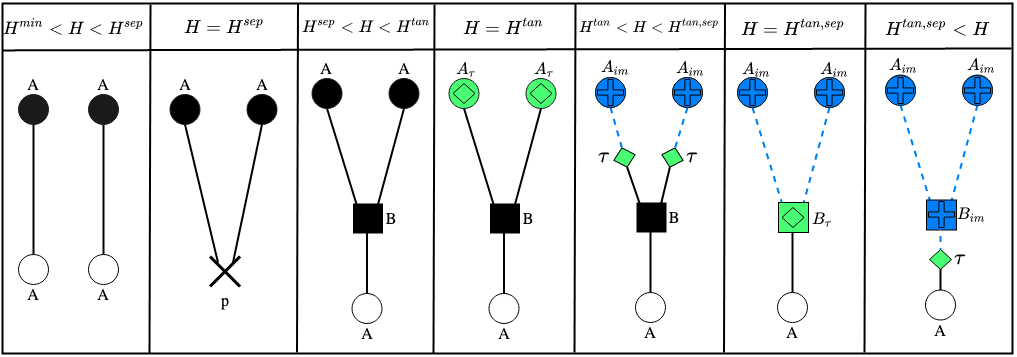}
\par\end{centering}
\centering{}\protect\caption{\label{fig:Fomenko0in} The Impact Fomenko graphs  of the Impact-Duffing-Center system with the wall $q^{w}_{\alpha=0}$ for the three robust regimes (A-C of Fig \ref{fig:Hsingalpha0})   (top) The center point is outside
the billiard (case A of Fig \ref{fig:Hsingalpha0}) (middle) The center point is inside the billiard and the tangent circles appear at lower energy than the separatrix circle ($ q_{2c}<0$ and $ H^{tan}< H^{sep}$ , case B of Fig \ref{fig:Hsingalpha0}.) (bottom) $ q_{2c}<0$ and $ H^{tan}> H^{sep}$ (case C of Fig \ref{fig:Hsingalpha0}). Tangency is marked by a diamond and singular impacting atoms are distinguished by singular non-impacting atoms by the $+$ sign inside the atom. Tangent atoms which are outside the allowed region of motion are denoted by an asterisk and their designation is in parentheses. }
\end{figure}

\begin{figure}
\begin{centering}
\includegraphics[scale=0.45]{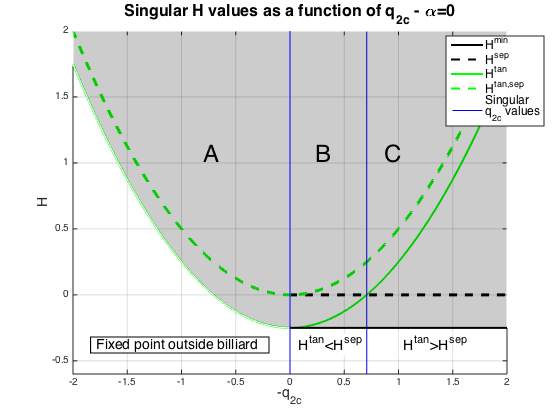}
\par \end{centering}
\centering{}\protect\caption{\label{fig:Hsingalpha0}The dependence of the singular values of $H$ on the center location (the parameter $q_{2c}$) for the horizontal wall $q^{w}_{\alpha=0}$. At the blue vertical lines two singular energy values coincide and the IEMBD and the IFG change at these values as shown in Fig. \ref{fig:EMBD-alpha0} and \ref{fig:Fomenko0in}. The shaded area depicts the allowed region of motion  in \(H\) of the HIS. }
\end{figure}

For the Hamiltonian (\ref{eq:doublewell}) the singular energy values dictated by the smooth dynamics, \(\Sigma^{s}\), are $H^{min}=V_1(q_{1s}\pm1)+V_2(q_{2c})=-\frac{1}{4}$ and $H^{sep}=V_1(q_{1s})+V_2(q_{2c})=0 $, and these values are in \(\Sigma^{s}_D\)  when these fixed points are inside the billiard domain. The singular energies dictated by the tangency  to the the horizontal wall, \(\Sigma_{2}^{s,tan}\), are  $  H^{tan,\pm}=V_1(q_{1s}\pm1)+V_2(0)=-\frac{1}{4}+\frac{\omega^{2}q_{2c}^{2}}{2}$  and $H^{tan,sep}=V_1(q_{1s})+V_2(0)=\frac{\omega^{2}q_{2c}^{2}}{2}$. Figure \ref{fig:Hsingalpha0} depicts the dependence of these functions on \(-q_{2c}\). There are exactly two bifurcation points where two of the singularity curves meet, defining three distinct regions of robust behaviors: A:   $ q_{2c}>0$ , B: $q_{2c}<0$ and  \ $ H^{tan,\pm}> H^{sep}$ and C:  $q_{2c}<0$ and  \ $ H^{tan,\pm}< H^{sep}$ .

Two of the robust IEMBD  (cases A, C) are shown in Fig. \ref{fig:EMBD-alpha0} (case B corresponds to a downward shift of the tangency line in C) and all the three robust IFG  are shown in Fig. \ref{fig:Fomenko0in}, completing the full classification of all possible behaviors of the Hamiltonian (\ref{eq:doublewell}) with impacts with a horizontal wall.
For the DC with a horizontal wall we see that the topology of the level set foliation on the energy surfaces in the three different scenarios is unchanged. The difference between the three scenarios has to do with the order of the singular energies at which the tangent branches change.

\begin{figure}
\begin{centering}
\includegraphics[scale=0.45]{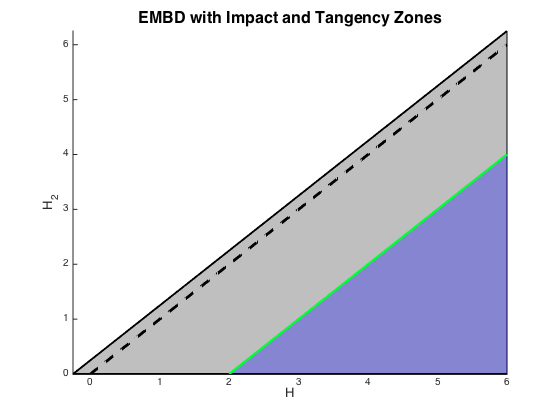}\includegraphics[scale=0.45]{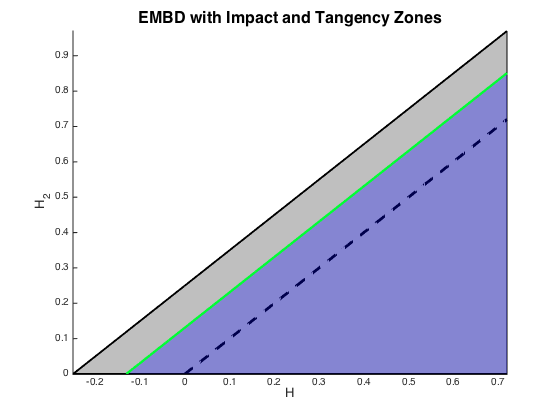}
\par\end{centering}
\centering{}\protect\caption{\label{fig:EMBD-alphapi2}Two of the regular IEMBDs of the Impact-Duffing-Center system with the vertical wall $q^{w}_{\alpha=\frac{\pi}{2}}$.  Impact (blue), non-impact (grey) and tangency (green) zones are depicted. The tangency zone reduces to the line
 $(H,H_{2}=H_{2,tan}=H-V_{1}(0)$).  (a) The  \(H_{1}\) separatrix is  inside
the billiard (region E of Fig. \ref{fig:Hsingalphapi2}, here, $q_{1s}=2$) (b) The  \(H_{1}\) separatrix is  cut by the wall (regions B-D of Fig. \ref{fig:Hsingalphapi2}, here, $q_{1s}=1.3$). See top two figures in Fig \ref{fig:ps-alphapi2} for phase space representation.}
\end{figure}

A similar analysis of the vertical wall ($q^{w}_{\alpha=\frac{\pi }{2}}$), where $H^{tan}=V_1(0)=-\frac{q_{1s}^{2}}{2} -\frac{q_{1s}^{4}}{4} $), is summarized by   Figures \ref{fig:EMBD-alphapi2}-\ref{fig:Fomenkopi2sing}. In particular, Fig. \ref{fig:Hsingalphapi2} demonstrates that for a vertical wall there are five different robust regimes of the IEMBD and IFG. Some examples of these diagrams are depicted in Figs \ref{fig:EMBD-alphapi2},\ref{fig:Fomenkopi2out}, \ref{fig:Fomenkopi2sing}.  Here,  different scenarios correspond to different topology of the energy surfaces, even though the IEMBDs are identical. Indeed, the multi-branch ambiguity in the IEMBD of regions B-E  of Fig \ref{fig:Hsingalphapi2} is lifted by the IFG and the singular energy diagram which specifies which of the two possible branches undergoes impacts, see e.g.  Fig. \ref{fig:Fomenkopi2sing}. These diagrams demonstrate that the level set topology described by the IFG differs from the corresponding FG and that the information encoded in the IFG is not encoded in the IEMBD alone.

\begin{figure}
\begin{centering}
\includegraphics[scale=0.45]{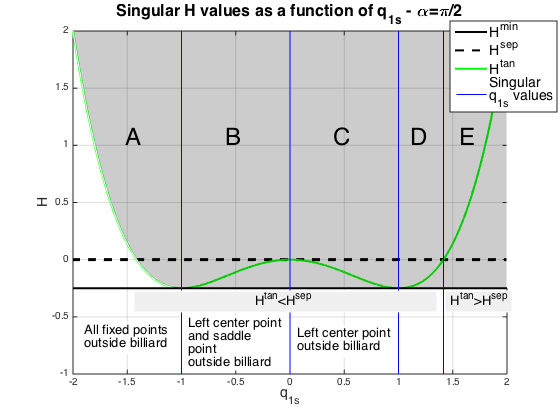}
\par \end{centering}
\centering{}\protect\caption{\label{fig:Hsingalphapi2}The dependence of the singular values of $H$ on the parameter $q_{1s}$  - the signed distance of the saddle-center point of the potential $V$ from the vertical wall  $q^{w}_{\alpha=\frac{\pi }{2}}$.  Four bifurcation values (blue vertical lines) are identified, leading to five robust regimes A-E. The shaded area depicts the allowed region of motion  in \(H\) of the HIS.}
\end{figure}

\begin{figure}
\begin{centering}
\includegraphics[scale=0.41]{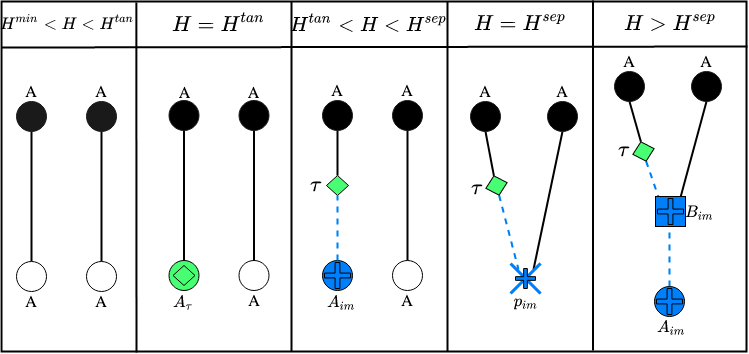}\qquad
\includegraphics[scale=0.4]{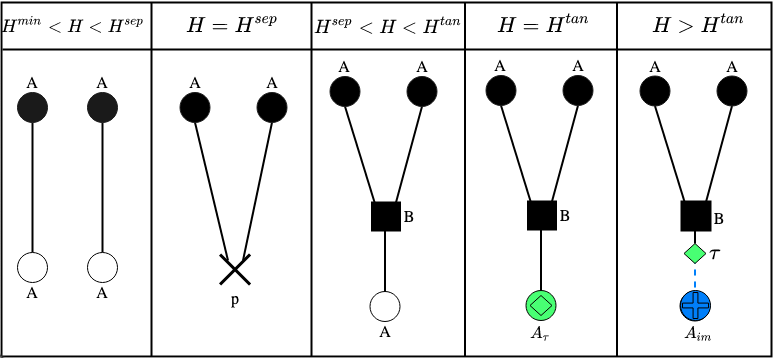}
\par \end{centering}
\centering{}\protect\caption{\label{fig:Fomenkopi2out}Two of the regular Impact Fomenko graphs of the Impact-Duffing-Center system with the vertical wall $q^{w}_{\alpha=\frac{\pi}{2}}$ (a) IFG corresponding to region D in Fig . \ref{fig:Hsingalphapi2} (b) IFG corresponding to region E in Fig . \ref{fig:Hsingalphapi2}.}
\end{figure}

\begin{figure}
\begin{centering}
\includegraphics[scale=0.4]{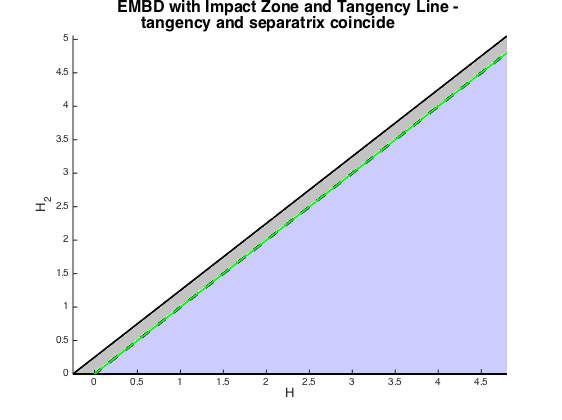}\qquad
\includegraphics[scale=0.26]{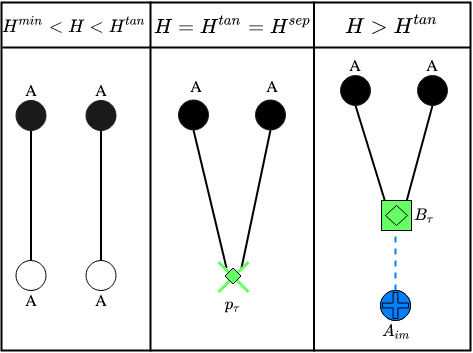} \includegraphics[scale=0.26]{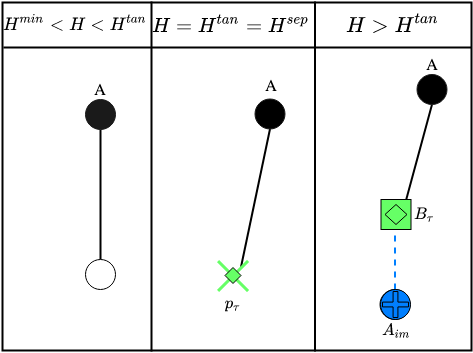}
\par \end{centering}
\centering{}\protect\caption{\label{fig:Fomenkopi2sing}Singular IEMBD and the correspondning two singular IFG. The IFGs correspond to  Figure \ref{fig:ps-alphapi2}c,d respectively, and to the lines separating regimes D and E and B and C respectively.}
\end{figure}
\subsection{\label{sec:proofofmainintegrb}Proofs of the main Theorems \ref{thm:integrability} and  \ref{thm:nearintegrability}}

\noindent\textbf{\\ Proof of theorem \ref{thm:integrability}:
}

\begin{proof}In lemma \ref{lem:integrabilityonly} we proved that the PWS are Liouville integrable HIS. Proposition \ref{lem:hetanglevel1} shows that for any energy surface, the IFG of these systems labels which branches of the level sets impact the wall while the bifurcation curves in the IEMBD provide the extent of each branch, so the impact-division  follows from the construction. The proof was presented for the horizontal wall, $q^{w}_{\alpha=0}$ case,  and the   same results hold for the vertical wall, $q^{w}_{\alpha=\pi/2}$, with interchanging the indices \(1\leftrightarrow2\) and the words ``lower half plane" with ``right half plane".
 Notice that the measure of each of the impact sets may be found since the dynamics is Liouville integrable.

Finally, Figures \ref{fig:Fomenkopi2out} and \ref{fig:Fomenkopi2sing}  provide a construction which demonstrates that the smooth EMBD and the FG of a Hamiltonian of the form (\ref{eq:Hint}) may be insufficient for describing even the topology of the energy surfaces for the PWS.  Indeed,  the FG and the EMBD shown in  Figure \ref{fig:EMBD-integrable} do not depend on the values of \((q_{1s},q_{2c})\), and show that for all parameter values for small energies the energy surface has two connected components and each level set has two leafs whereas for large energies the energy surface has a single component and the level sets have one leaf for small \(H_{2}\) and two leaves for large \(H_{2}\). On the other hand, for the PWS with   $q^{w}_{\alpha=\frac{\pi }{2}}$, we see that when \(q_{1s}<-1\), namely  in region A of  Figure \ref{fig:Fomenkopi2out}, the energy surface of the PWS has, for all energies, a single connected component, and on each level set it has a single leaf for all energy values.  Figure \ref{fig:Fomenkopi2sing} demonstrates  that the IEMBD by itself may be  insufficient for describing the topology of the impact level sets - the figure presents two different IFG for the same IEMBD - showing that the energy surfaces are topologically different and have topologically different foliation.

\end{proof}

\noindent\textbf{Proof of theorem \ref{thm:nearintegrability}:
}\begin{proof}
For sufficiently small \(\epsilon\), away from the tangent tori, the non-impact set is foliated by KAM tori that are \(\epsilon_{r}\) close to the unperturbed tori, so the open region corresponding to non-impact tori remains invariant for sufficiently small \(\epsilon\). Similarly, by choosing a proper local cross-section in the interior of the billiard domain, it is proven in    \cite{pnueli2018near} that  for sufficiently small \(\epsilon\) a Poincare return map near a regular  unperturbed torus that is in the transverse impact unperturbed region (away from tangencies and separatrices and satisfying generic non-resonance and twist condition) is a smooth near integrable twist map, so KAM theory can be applied to this map (the smoothness requirements in the GWS definition guarantee that the return map is \(C^{r}\) smooth with r\textgreater3). Since for 2 d.o.f. systems KAM tori divide the energy surfaces, for sufficiently small \(\epsilon\), there exist  KAM tori bounding the non-impact regions and KAM tori bounding the transverse regions from a small region around the tangent tori, where mixed dynamics occurs.      \end{proof}

\subsection{Multiple vertical and horizontal walls}\label{sec:multiwall}

When the billiard boundary is composed of any combination of horizontal and vertical segments,  the S3BN Hamiltonian flow impacts with the boundary preserve the partial energies, so condition RespF of Definition \ref{def:liouvilimpint} is fulfilled. Namely, the \textit{impact rule} preserves the separability symmetry.
When the billiard boundary is composed of a finite number of infinite vertical and horizontal lines,  namely the  \textit{billiard boundary} also respects the separability symmetry, we propose that the Resp\(\theta\) condition is also fulfilled and the motion in the allowed region of motion is Liouville integrable. This can be proved by using the same construction as in  \cite{Issi2019}, showing that on each regular level set the motion is conjugated, via the angle variables, to directional motion on a finite number of rectangles that are glued in such a way that it is possible to tile with them the plane.

On the other hand, there are cases in which the impact rule respects the separable symmetry yet the billiard boundary does not. Then, the Resp\(\theta\) condition might be violated. For example, when the billiard table of the HIS is exterior to a corner (or, more generally,  is constructed from finite or semi-infinite horizontal and vertical segments having corner points which are larger than \(\pi/2\) as in the billiard tables in \cite{Athreya2012}), the billiard boundary definition involves both coordinates. Then, the dynamics for some level sets is  conjugate to directional motion on a compact, oriented surface of genera larger than 1 \cite{Issi2019}, return maps on this surface produce interval exchange maps, and the motion, in general, is not Liouville integrable. Billiards with such behavior are called quasi-integrable  \cite{Athreya2012,Dragovic2014a,Dragovic2014,Dragovic2015,Dragovic2015a,Moskvin2018}.

The construction of the IEMBD may be easily extended to both the Liouville IHIS (Figure  \ref{fig:IEMBDcorner}a) and the IHIS which are not LIHIS~(Figure  \ref{fig:IEMBDcorner}b).  The bifurcation set includes as many tangency rays as the number of horizontal or vertical segments. The starting points of the rays and the identification of the EMBD regions in which impact is made  with each of the walls are determined by the level set structure of  \(H_{i}\).

 Figure \ref{fig:IEMBDcorner}a shows the IEMBD of the Hamiltonian (\ref{eq:doublewell}) when the motion is confined to the upper quadrant of the \((q_{1},q_{2})\) plane  and both the \(H_{2}\) center and the separatrix loop are in the upper quadrant  \((q_{2c}>0,q_{1s}>\sqrt{2})\). Level sets on which impacts occur with only one of the walls or with both walls are marked on the IEMBD, and, since the motion on regular leaves is always rotational, the corresponding IFG may be  defined in a similar manner to PWS (here it may be beneficial to distinguish impacts with different walls if different types of perturbations are expected).

Figure \ref{fig:IEMBDcorner}b shows the IEMBD of the Hamiltonian (\ref{eq:doublewell}) (with the same   \(q_{2c}>0,q_{1s}>\sqrt{2}\)) when  the billiard boundary is a combination of two semi-infinite walls lying on the negative side of the $q_1$- and $q_2$-axes.  Here, impacts cannot occur with a single wall - to hit the corner both subsystems need to have sufficient energy (the partial energies must belong to the wedge between the two tangency rays). For energies in this wedge, the motion on each level set is conjugated to the directional motion on an L-shaped billiard with changing dimensions and direction, or, equivalently, to the directional motion on a flat surface of genus 2  \cite{Issi2019}.   The resulting IFG and their classification for the various cases are yet to be developed (the  graphs of topological billiards appear to be relevant  \cite{fomenko2019singularities,fomenko2019topological,Moskvin2018}).

\begin{figure}[h]
\begin{centering}
\includegraphics[scale=0.41]{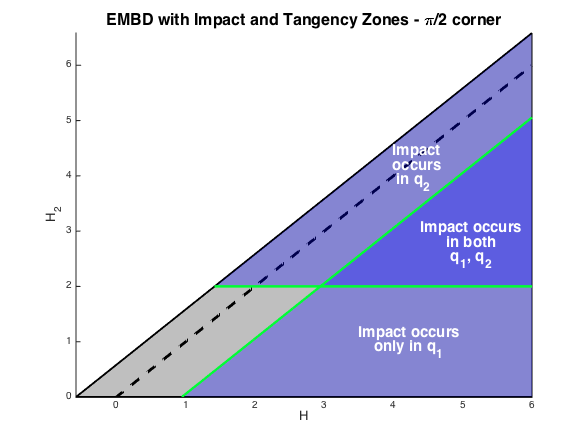}
\includegraphics[scale=0.42]{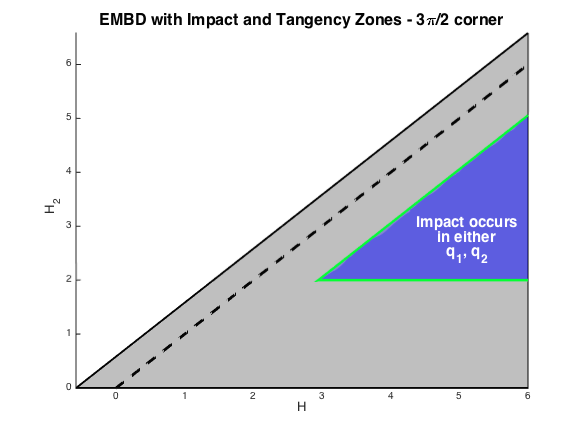}
\par\end{centering}
\protect\caption{\label{fig:IEMBDcorner}IEMBD for motion in (a) The upper quadrant (b) The complement to  the lower quadrant of the \((q_{1},q_2)  \) plane, with elastic reflection from the walls. Impacts with  one wall (light blue) with both walls (dark blue), tangent (green) and no-impacts (grey) zones are depicted. }
\end{figure}

\section{Impacts with a general wall }\label{sec:Hill-region}

When a particle impacts a smooth wall which is not aligned with one of the symmetry axes, energy transfers between the two d.o.f. and thus even when \(\epsilon_{r}=0\) the level sets and the corresponding cut leaves are not invariant under impacts.  The IEMBD and the IFG are  used to distinguish between\textbf{ leaves} that do not impact the wall (these remain invariant when \(\epsilon_{r}=0\)), those which, at first impact, must impact the wall transversely, and those which may touch the wall tangentially. The impact and tangential leaves are, in general, not invariant,  so the classification applies to initial conditions belonging to the corresponding cut leaves (and not to the full trajectories, which jump to other cut-leaves after  impact). By analyzing the structure of the Hill region for the S3BN  systems we show that for a wall in a general position (see definition \ref{def:wallgeneralposition}) the tangent zone becomes non-trivial - it does not degenerate to a line as in the horizontal and vertical wall cases.

\begin{prop}
\label{thm:global} Consider a GWS of the form (\ref{eq:wavywal}) at \(\epsilon _{r}=0\) which is in general position (satisfies Eq. (\ref{eq:generalpositionwall})).   Then, there exist finite  $H$ values, $H^{tmin}_{min}<H^{tr}_{min}$  (given by Eqs. (\ref{eq:htminimin}) and (\ref{eq:htrmin}) respectively), and $H_2$ intervals $H_{2}^{tan}(H)\subseteq H_2^{m}(H),$ (defined by Eq. (\ref{eq:tanghh2}) and (\ref{eq:h2mdef})), such that: \begin{enumerate}
\item
An exterior GWS has no allowed motion for \(H<H^{tmin}_{min}\) and, for  all  \(H>H^{tmin}_{min}\), all the  leaves with segments of non-zero length in the allowed region of motion are  impact cut-leaves.
\item For an interior GWS, for all  \(H<H^{tmin}_{min}\)  the allowed region of motion is non-empty
  iff at at least one of the local minimizers of    \(V\) which is in the billiard domain \(V\) is smaller than \(H^{tmin}_{min}\). Then, all the allowed leaves are in the non-impact zone.   \item For \(H>H^{tmin}_{min}\) the  allowed leaves of the level set \((H_{1}=H-H_{2},H_{2})\)
belong to the tangency zone iff \(H_{2}\in H_{2}^{tan}(H)\) and to the  impact zone iff  \( H_2\in H_2^{m}(H)\).
\item For all   \(H>H^{tmin}_{min}\) the set  \(H_{2}^{tan}(H)\) is a finite collection of segments and it has a positive length.

\item  For  \(H>H^{tr}_{min}\) the  set of \(H_{2}\) values belonging to the iso-energy transverse impact  zone   \( (H_{2}^{tr}(H)=H_2^{m}(H)\setminus H_{2}^{tan}(H)\)) has positive measure, in fact, there exist  \(H_{2}^{lower-tr}(H)>0\) such that this set   includes the interval \([0,H_{2}^{lower-tr}(H)]\).
\end{enumerate}
 \end{prop}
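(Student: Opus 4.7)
The plan is to combine a Hill-region analysis in configuration space with an explicit parametrization of the tangency locus on the wall. First I would introduce the \emph{wall potential} $V^{w}(q_{2}):=V_{1}(\epsilon_{w}Q(q_{2}))+V_{2}(q_{2})$, so that $H^{tmin}_{min}=\min_{q_{2}}V^{w}(q_{2})$ is the infimum of $V$ along the wall, attained at $q_{2}^{w-min}$. The allowed configuration region at energy $H$ is $\mathcal{A}(H):=\{V\leq H\}\cap D$ with $D=\{q_{1}\geq\epsilon_{w}Q(q_{2})\}$, and the key structural input is that by condition [B] every 1 d.o.f.\ leaf of $H_{i}$ is a bounded oscillation, hence every Liouville leaf of $H_{int}$ projects to a rectangle $R=[q_{1}^{-},q_{1}^{+}]\times[q_{2}^{-},q_{2}^{+}]$ whose interior contains a local minimizer of $V$.

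For parts (1) and (2), observe that for $H<H^{tmin}_{min}$ the wall is disjoint from $\{V\leq H\}$, so every allowed leaf has $R\subset D$ and is non-impact; non-emptiness of $\mathcal{A}(H)$ in this regime is then equivalent to a local minimizer of $V$ lying in $D^{\circ}$ at level below $H^{tmin}_{min}$, which proves (2) and the first half of (1). For $H>H^{tmin}_{min}$ in the exterior case, by definition no local minimizer of $V$ lies in $D^{\circ}$, so every rectangle $R$ intersecting $D$ must be cut by the wall, producing an impact cut-leaf and completing (1). Part (3) follows once $H_{2}^{m}(H)$ and $H_{2}^{tan}(H)$ are defined as the $H_{2}$-projections of the impact and tangent iso-energy level sets at energy $H$.

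For (4) and (5) I would derive the tangency equation explicitly: tangency at a wall point $q_{2}$ means the momentum is parallel to the wall tangent $(\epsilon_{w}Q'(q_{2}),1)$, i.e.\ $p_{1}=\epsilon_{w}Q'(q_{2})\,p_{2}$; combining with $H_{2}=p_{2}^{2}/2+V_{2}(q_{2})$ and $H=H_{1}+H_{2}$ yields
\[
H_{2}^{tan}(q_{2};H)=\frac{H-V_{1}(\epsilon_{w}Q(q_{2}))+\epsilon_{w}^{2}Q'(q_{2})^{2}V_{2}(q_{2})}{1+\epsilon_{w}^{2}Q'(q_{2})^{2}},
\]
so $H_{2}^{tan}(H)$ is the image of this map as $q_{2}$ ranges over the accessible wall $\{V^{w}\leq H\}$. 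By [S2]--[S3] the accessible wall is a finite union of compact intervals, hence $H_{2}^{tan}(H)$ is a finite union of segments. Positive length (part 4) reduces to showing that $q_{2}\mapsto H_{2}^{tan}(q_{2};H)$ is non-constant near $q_{2}^{w-min}$: a Taylor expansion there, using $(V^{w})'(q_{2}^{w-min})=0$, produces leading coefficients that are precisely the combinations appearing inside the parenthesis of (\ref{eq:generalpositionwall}), so the general-position hypothesis rules out simultaneous vanishing. This bookkeeping is the main technical obstacle, especially in the degenerate subcase $Q'(q_{2}^{w-min})=0$, where one must carefully track how the $Q''$ and normal-force $V_{1}'$ terms drive the expansion. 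Finally, for (5), as $H_{2}\to 0^{+}$ one has $q_{2}\approx q_{2,c}$ (the global minimizer of $V_{2}$) and $p_{2}\to 0$, while $|p_{1}|=\sqrt{2(H_{1}-V_{1}(q_{1}))}$ can be made uniformly large by choosing $H>H^{tr}_{min}$ large enough; since $|Q'|$ is bounded by $M$, the inward-normal component of the velocity at impact, $(p_{1}-\epsilon_{w}Q'(q_{2})p_{2})/\sqrt{1+\epsilon_{w}^{2}Q'(q_{2})^{2}}$, is bounded away from zero, so every impact in a neighborhood of $H_{2}=0$ is transverse, yielding the interval $[0,H_{2}^{lower-tr}(H)]$ inside the iso-energy transverse zone.
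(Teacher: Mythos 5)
Your proposal is correct and follows essentially the same route as the paper: the Hill-region/projected-rectangle argument (each leaf projects to a rectangle containing a local minimizer of $V$) for claims 1--3, the explicit tangency relation $p_{1}=\epsilon_{w}Q'(q_{2})p_{2}$ leading to the same formula for $H_{2}^{t}(q_{2},H)$ and a Taylor expansion at $q_{2}^{w-min}$ controlled by the general-position condition for claim 4, and transversality of impacts near $H_{2}=0$ for claim 5. The only place you are lighter than the paper is in claim 3 (the paper verifies that the projection of the impact level sets actually coincides with the explicit min/max interval formulas via continuity of $V_{i}$ on each wall segment) and in claim 5 (the paper builds the monotone threshold function $H^{tr}(H_{2})$ and inverts it to get $H_{2}^{lower-tr}(H)$ explicitly), but these are quantitative refinements of the same argument.
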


\noindent\textit{
Proof Outline:}
In sections  \ref{sec:hillfoliation} we develop the tools needed for proving this proposition: we establish that the smooth motion on iso-energy level sets projects to disjoint rectangles in the configuration space and that each rectangle corresponds to a leaf of the level set (Proposition \ref{thm:hillregionprl}). We then study how the wall intersects the Hill region and the rectangular projection of the leaves as a function of \(H \),   and in lemmas \ref{lem:claim1}-\ref{lem:claim5} we use these observations to prove  claims 1-5.
Notably, in claims 3-5, the tangent level sets  include cases of external tangencies, where the tangent segments are not in the billiard domain (as in convex billiards).
In section \ref{sec:interiortangentsegments} we study when tangent segments are in the billiard domain and utilize the above proposition and these results to prove Theorems \ref{thm:nonimpactgen}-\ref{thm:tangslantdc}.

\subsection{The Hill region foliation for separable systems} \label{sec:hillfoliation}

 We study  first how  the smooth motion projects to the configuration space, the space at which impacts are defined.

\begin{defn}
\cite{Arnold2007CelestialMechanics} The Hill region of a smooth Hamiltonian system is the
allowed region of motion in the configuration space. \end{defn}

 For a general mechanical 2 d.o.f. system the level sets of \(V\) determine the Hill region geometry ($\mathcal{D}_2^{Hill}(H)=\{(q_{1},q_{2})\in\mathbb{R}^{2}:V(q_{1},q_{2})\leq H\})$. If the motion is not ergodic on the energy surface, typical orbits may project to subsets of the Hill region. In particular, when the motion is integrable, the energy surface foliation induces specific projections to the configuration space. As proved below, in the separable setup (S3BN systems),  the projection of the foliation is very simple - the  Hill region is foliated by rectangles in  the configuration space, where each rectangle corresponds to a leaf of an iso-energy  level set. We thus define:

\begin{defn}\label{def:PRL} The \textit{Projected Rectangle of a Leaf} of a S3BN Hamiltonian is the projection of the leaf to  the configuration space. \end{defn}

\begin{mainthm}
\label{thm:hillregionprl}\textit{ For the S3BN systems the Hill region \(\mathcal{D}_2^{Hill}(H)\) is foliated by a collection of  rectangles, the Projected Rectangles of Leaves (PRLs), denoted by \(R^{k}(H_1,H_2)\): \begin{equation}
\mathcal{D}_2^{Hill}(H)=\bigcup_{H_2\in[V_{2,min},H-V_{1,min}]} \biguplus _{k_{i}=1,...,K(H_{i})}R^{(k_{1},k_2)}(H_{1},H_{2})|_{H_{1}=H-H_{2}}\label{eq:dhillprl}
\end{equation}  where \(K(H_{i})\) denotes  the number of  the \(H_i\)-Liouville leaves.
The rectangles belonging  to the same level set are disjoint: \(R^{k}(H_{1},H_2)\cap R^{m}(H_{1},H_2)=\emptyset\) for all \(k\neq m\).}   \end{mainthm}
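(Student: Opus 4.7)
The plan is to exploit the product structure of S3BN Hamiltonians. Since $H_{int}=H_1(q_1,p_1)+H_2(q_2,p_2)$, the Poisson-commuting integrals $F_1=H_1$, $F_2=H_2$ define a level set that factors as a Cartesian product of the two 1 d.o.f.\ phase-space level sets; consequently every Liouville leaf of the 2 d.o.f.\ system is a product of 1 d.o.f.\ leaves, and its configuration-space projection is a product of 1D projections, hence a closed rectangle.

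First I would analyze each 1 d.o.f.\ subsystem. By boundedness [B] every orbit of $H_i$ is bounded, and by [S3] (finite number of simple extrema) the 1D Hill set $\{q_i:V_i(q_i)\leq h_i\}$ is a finite disjoint union of closed intervals $I_i^{k_i}(h_i)$, $k_i=1,\dots,K(h_i)$. Standard phase-plane analysis for 1 d.o.f.\ mechanical Hamiltonians provides a bijection between these intervals and the connected components of the level curve $\{H_i=h_i\}$ in the $(q_i,p_i)$ plane: each $I_i^{k_i}(h_i)$ is the $q_i$-projection of exactly one connected component, which is a smooth closed loop at regular $h_i$, and either a point (at a minimum) or a separatrix figure (at a saddle) at a singular $h_i$.

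Next, by separability the level set $\{H_1=H_1,\,H_2=H_2\}$ in the 4D phase space equals the Cartesian product of the two 1D level sets, so its connected components---the Liouville leaves---are indexed by pairs $(k_1,k_2)$ and project in configuration space to $R^{(k_1,k_2)}(H_1,H_2)=I_1^{k_1}(H_1)\times I_2^{k_2}(H_2)$. Two such rectangles at the same $(H_1,H_2)$ with $(k_1,k_2)\neq(k_1',k_2')$ differ in at least one factor; since distinct connected components of a 1D Hill set are separated by regions where $V_i>H_i$, the rectangles are disjoint. For the covering in (\ref{eq:dhillprl}), any $(q_1,q_2)\in\mathcal{D}_2^{Hill}(H)$ satisfies $V_1(q_1)+V_2(q_2)\leq H$; picking any $H_2\in[V_2(q_2),H-V_1(q_1)]$ and setting $H_1=H-H_2$ places $(q_1,q_2)$ inside the unique rectangle at that level set whose factors contain $q_1$ and $q_2$, while every rectangle produced by such a split clearly lies in $\mathcal{D}_2^{Hill}(H)$.

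The only delicate step, and the main technical care in the proof, is the handling of singular 1D energies $h_i\in\{V_i(q_{i,k_i}^{ext})\}$: there an interval $I_i^{k_i}(h_i)$ may collapse to a point (at a minimum) or two intervals may merge at a saddle, and the corresponding 1 d.o.f.\ leaf becomes a fixed point or a figure-eight covering the combined interval. At such isolated values the indexing in (\ref{eq:dhillprl}) must be read with the convention that a degenerate interval is still a component and a separatrix projects onto a single interval shared by the branches meeting at it; once this bookkeeping is fixed, the factorization and disjointness arguments go through verbatim. Beyond this point, no further analysis is required---the theorem is essentially a corollary of separability together with the topology of the 1 d.o.f.\ mechanical phase planes.
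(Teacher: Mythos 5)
Your proposal is correct and follows essentially the same route as the paper's proof: decompose via separability into products of 1 d.o.f.\ Hill intervals, use the S3BN conditions to get finitely many disjoint closed intervals in each factor, identify each interval with a unique $H_i$-leaf, and conclude that product leaves project to disjoint rectangles whose union over $H_2\in[V_{2,min},H-V_{1,min}]$ covers $\mathcal{D}_2^{Hill}(H)$. Your explicit treatment of the covering direction and of the bookkeeping at singular 1D energies (collapsed and merged intervals) matches the paper's remarks on the same points.
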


\begin{proof}
Since the smooth motion is separable, \(\mathcal{D}_2^{Hill}(H)=\bigcup_{H_1+H_2=H}\mathcal{D}^{Hill}_1(H_{1})\times\mathcal{D}_1^{Hill}(H_{2})\), where   \(\mathcal{D}_1^{Hill}(H_{i})\) denotes   the Hill region of the one d.o.f. mechanical systems \(H_{i}\). Under the S3BN assumption, for all \(H_{i}\geq H_{i,min}\), the region \(\mathcal{D}_1^{Hill}(H_{i})\)  is composed of a finite collection of closed disconnected intervals,
 \([q^{k_i}_{i,min}(H_{i}),q^{k_i}_{i,max}(H_{i})],k_i=1,..,K_i(H_i)\),
  which are the maximal intervals on which \(V_i(q_{i})\leq H_i\) for all \(q_{i}\in [q^{k_i}_{i,min}(H_{i}),q^{k_i}_{i,max}(H_{i})]\)  (if \(V_{i}(q)\) has a local maximum it is an interior point in this interval, so \(q^{k_i}_{i,min}(H_{i})=q^{k_i}_{i,max}(H_{i})\) iff \(q^{k_i}_{i,min}(H_{i})=q^{ext}_{i,k_{i}}:=q^{min}_{i,k_{i}}\) is a local minimizer of \(V_i\)).
More generally, on the singular level sets, where \(H_{i}=V_i(q^{ext}_{i,k_{i}}), \) the \(k_i\) index changes as at this critical energies the leaf structure changes.
So, with each branch of PRLs, \(R^{(k_{1},k_2)}\) there is a rectangle of  \((H_{1},H_2)\) values on which this family is supported. Above the local maxima of \(V_{i}\), i.e. for all \(H_{i}>H_{i,max}=\max_{k_{i}} V_i(q_{i,k_{i}}^{ext})\), by the S3BN assumption, there is a single finite length interval (\(K_i(H_i)=1\)),   the \(H_{i}\) range on which the corresponding PRL family is defined becomes infinite and the smallest PRL belonging to this family is determined by the \(H_{i}\)-level set of \(H_{i,max}\).

For lower partial energies, when \(K_i(H_i)>1\),  by the S3BN assumption, the segments are finite and disjoint.
  By the mechanical form of \(H_i\), each of the  intervals \([q^{k_i}_{i,min}(H_{i}),q^{k_i}_{i,max}(H_{i})]\) is a projection of a single leaf of the level set of \(H_i\), namely there is one-to-one correspondence between all the \(H_{i}\)-leaves and the intervals \([q^{k_i}_{i,min}(H_{i}),q^{k_i}_{i,max}(H_{i})]\) .

Since the Liouville leaves of  the level set \((H_{1},H_2)\) correspond to a product of the individual \(H_{i}\) leaves, for all finite \(H_{i}\geq H_{i,min},i=1,2\), the level set  \((H_{1},H_2)\)  has  a finite number of leaves,  \(K_{1}(H_1)K_2(H_2)\geqslant1\). The projection of  the \(k=(k_{1},k_2)\)  Liouville leaf of this level set to the configuration space is the product of the two corresponding intervals, namely the rectangle :   $$R^{k}(H_{1},H_{2})=[q^{k_{1}}_{1,min}(H_{1}),q^{k_{1}}_{1,max}(H_{1})]\times[q^{k_{2}}_{2,min}(H_2),q^{k_{2}}_{2,max}(H_2)]$$where \(k_i=1,..,K_i(H_i)\).  All distinct  rectangles belonging to the same level set are disjoint since the intervals  \([q^{k_i}_{i,min}(H_{i}),q^{k_i}_{i,max}(H_{i})]\) are all disjoint. Setting, on a given energy surface, \(H_{1}=H-H_2\), and letting \(H_{2}\) vary between its minimal to its maximal allowed value,  formula (\ref{eq:dhillprl}) is established.
\end{proof}

 Figure \ref{fig:Hill-rectangles} shows trajectories of the Duffing-Center integrable system in the configuration space. The trajectories are  non-resonant and thus densely fill rectangles inside the Hill region (the boundary of the Hill region is marked in black) - these are the  PRLs inside the Hill region. Notice that there are exactly two flow directions passing through each interior point in a given  PRL  (as \(q_i\) and \(H_i\) uniquely define \(|p_i|\)) and that the PRLs structure reflects the symmetries of the potentials (\((q_{1}-q_{1s})\rightarrow-(q_{1}-q_{1s})\) and \((q_{2}-q_{2c})\rightarrow-(q_{2}-q_{2c})\)) and of the mechanical form of the Hamiltonian.
While the PRL of leaves belonging to the same level set are disjoint, iso-energy PRL do overlap since their boundaries, set by $q^{k}_{i,min/max}(H_i)$, are piecewise continuous in $H_i$, and on an energy surface \(H_{1,2}\) vary on  the intervals of the allowed region of motion.

\subsection{Walls intersecting the Hill region}

Next, we study how the collection of all the PRLs corresponding to a given energy surface intersect the wall.
 Let \(q_w(q_{2})=(\epsilon _{w}Q(q_{2}),q_2)\) denote the wall (\ref{eq:wavywal}) parameterization by \(q_2\), and denote the intersection of the wall with the Hill region by \(S_{w}(H)\):\begin{equation}\label{eq:swofH}
S_{w}(H)=\{q | q=q_w(q_{2}) \text{ and } q_{w}(q_{2})\in{\mathcal{D}_2^{Hill}(H)}\}.
\end{equation}
Since the Hill regions are nested for increasing \(H\), so do \(S_{w}(H)\) and the projection of \(S_{w}(H)\) to the \(q_{2}\) axis, \(L_{w}(H)\).
  By the S3BN assumption, for any fixed \(H\), \(S_{w}(H)\) consists of  a finite collection of  closed  segments with disjoint interiors, \(S_{w}^{j}(H)\) : \(S_{w}(H)=\bigcup_{j=1,..,k_{w}(H)}S_{w}^{j}(H)\).
Let  \(\partial S_{w}(H)=\{q_{w}^{j}( H)\}_{j=1,..,2k_w(H)}=\{(q_{w,1}^{j}( H),q_{w,2}^j(H))\}_{j=1,..,2k_w(H)}=\{(\epsilon _{w}Q(q_{w,2}^{j}( H)),q_{w,2}^{j}( H))\}_{j=1,..,2k_w(H)}={\mathcal{\partial D}_2^{Hill}(H)\cap S_{w}(H)} \) denote all the boundary points of these segments (for ease of notation,  tangent points with the wall are counted twice).

Next we study the nature of impacts at  \(q_{w}(q_{2})\in S_{w}(H)\).
 \begin{lem}
\label{lem:prlwall}A  point on the wall, \(q_{w}(q_{2})\), belongs to the Hill region \(\mathcal{ D}_2^{Hill}(H)\) iff  \(H>H^{tmin}(q_{2})\), where
\begin{equation}\label{eq:htmingen}
H^{tmin}(q_{2})=V_{1}(\epsilon _{w}Q(q_2))+V_2(q_2).
\end{equation} For    \(H>H^{tmin}(q_{2})\), the wall point \(q_{w}(q_{2})\) is an interior point of  \(S_{w}(H)\). It is an interior point of a PRL of the iso-energy level set \((H-H_2,H_2)\) iff \(H_2\in\mathcal{H}_{2}(q_2,H)=( V_2(q_2),H-V_{1}(\epsilon _{w}Q(q_2)))\) and belongs to such a PRL boundary  iff \(H_{2}\in\partial\mathcal{\bar H}_{2}(q_2)\), where \(\mathcal{\bar H}_{2}(q_2)\) denotes the closure of \(\mathcal{ H}_{2}(q_2)\).     \end{lem}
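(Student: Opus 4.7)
The plan is to reduce each claim to a direct substitution into the defining inequalities of the Hill region and of the PRLs supplied by Proposition~\ref{thm:hillregionprl}. The lemma is essentially a bookkeeping statement that repackages the level-set decomposition into the language needed later to talk about impact geometry, so I do not expect any hard step.

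First I would prove the membership characterization. Plugging the wall parametrization \(q_w(q_2)=(\epsilon_w Q(q_2),q_2)\) into the defining inequality \(V_1(q_1)+V_2(q_2)\le H\) of the Hill region gives immediately that \(q_w(q_2)\in\mathcal{D}_2^{Hill}(H)\) iff \(V_1(\epsilon_w Q(q_2))+V_2(q_2)\le H\), i.e.\ iff \(H\ge H^{tmin}(q_2)\). Strict inequality \(H>H^{tmin}(q_2)\) places \(q_w(q_2)\) in the open interior of \(\mathcal{D}_2^{Hill}(H)\).

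Next I would show that such a wall point is also interior to \(S_w(H)\). By the GWS assumption the wall is a smooth graph over \(q_2\), so locally it is parametrized by the single variable \(q_2\) and \(S_w(H)\) is a one-dimensional set. Since \(V\circ q_w\) is continuous and \(V(q_w(q_2))<H\) by assumption, continuity produces an open \(q_2\)-neighborhood on which \(V(q_w(\tilde q_2))<H\); the corresponding arc of the wall is contained in the Hill region, hence in \(S_w(H)\), which means \(q_w(q_2)\) is an interior point of \(S_w(H)\).

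Finally, for the PRL characterization I would invoke the explicit form of the PRLs from the proof of Proposition~\ref{thm:hillregionprl}: each PRL of the level set \((H-H_2,H_2)\) is a closed rectangle
\[
R^{(k_1,k_2)}(H-H_2,H_2)=[q_{1,\min}^{k_1}(H-H_2),q_{1,\max}^{k_1}(H-H_2)]\times[q_{2,\min}^{k_2}(H_2),q_{2,\max}^{k_2}(H_2)],
\]
whose \(q_i\)-extent is the connected component of \(\{V_i\le H_i\}\) containing the relevant point. Hence \(q_w(q_2)\) is interior to some such rectangle iff the two strict one-d.o.f.\ inequalities \(V_1(\epsilon_w Q(q_2))<H-H_2\) and \(V_2(q_2)<H_2\) hold simultaneously, which rearranges to \(H_2\in\bigl(V_2(q_2),\,H-V_1(\epsilon_w Q(q_2))\bigr)=\mathcal{H}_2(q_2,H)\). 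Similarly, \(q_w(q_2)\) is on the boundary of some such PRL iff at least one of the two inequalities is saturated, equivalently iff \(H_2\in\{V_2(q_2),\,H-V_1(\epsilon_w Q(q_2))\}=\partial\bar{\mathcal{H}}_2(q_2)\). The only minor point to verify is that the component indices \(k_1,k_2\) are unambiguously determined by the requirement that the corresponding intervals contain \(\epsilon_w Q(q_2)\) and \(q_2\) respectively, which is immediate since by the S3BN hypothesis the intervals \([q_{i,\min}^{k_i}(H_i),q_{i,\max}^{k_i}(H_i)]\) have disjoint interiors.
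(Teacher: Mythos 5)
Your proposal is correct and follows essentially the same route as the paper: both arguments reduce every claim to the two one-d.o.f.\ inequalities \(V_1(\epsilon_w Q(q_2))\le H-H_2\) and \(V_2(q_2)\le H_2\) via the product/rectangle structure of the PRLs, with interior versus boundary corresponding to strict versus saturated inequalities (the paper phrases this in terms of the vanishing of the kinetic energies \(p_1^2/2,p_2^2/2\), which is the same computation). The only cosmetic difference is that your version states membership in the closed Hill region as \(H\ge H^{tmin}(q_2)\), which is in fact slightly more precise than the lemma's wording.
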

\begin{proof}By the separable mechanical form of \(H,H_1\) and \(H_2\),  each partial energy of a trajectory  reaching  \(q_{w}(q_{2})\)  must be larger or equal to the  potential energy at  \(q_{w}(q_{2})\). Since, by definition, at   \(S_{w}(H)\) boundary points the kinetic energy vanishes, \(H>H^{tmin}(q_{2})\)  necessarily means that  \(q_{w}(q_{2})\)  is  an interior point of  \(S_{w}(H)\).   Since   \(H-V_{1}(\epsilon _{w}Q(q_2)= V_2(q_2)+H-H^{tmin}(q_{2})\) it is clear that both kinetic energies are strictly positive for all \(H_2\in\mathcal{H}_{2}(q_2)\), so   \(q_{w}(q_{2})\) is also an interior point of a PRL.  When \(H_2 =V_2(q_2)\) the vertical momentum vanishes,  \(p_2=0\), so the wall point belongs to a horizontal boundary of the PRL and when  \(H_2 =H-V_{1}(\epsilon _{w}Q(q_2))\),  \(p_1=0\), and the wall point belongs to a  vertical boundary of a PRL.
\end{proof}

The minimal energy at which impacts occur is\begin{equation}\label{eq:htminimin}
H^{tmin}_{min}=\min_{q_2}H^{tmin}(q_{2})=H^{tmin}(q_{2}^{w-min}),
\end{equation}
and the wall is tangent to the Hill region of  \(H^{tmin}_{min}\) at  \(q_{w}(q_{2}^{w-min})\), where  \(q_{2}^{w-min}\) is the minimizer of (\ref{eq:htmingen}) (indeed \((V_{1}'(q_1),V_2'(q_2))\cdot(\epsilon _{w}Q'(q_2),1)=V_{1}'(q_1)\epsilon _{w}Q'(q_2)+V_2'(q_2)=\frac{d}{dq_{2}}H^{tmin}(q_{2})\) vanishes at \(q_{w}(q_{2}^{w-min})\)).
 For walls in general position, since the minima of \(V\) is not on the wall, \( H^{tmin}_{min}>H^{min}\).

Let\begin{equation}\label{eq:tanghh2def}
H_{2}^{t}(q_{2,}H)=V_{2}(q_2)+\frac{H-H^{tmin}(q_{2})}{1+(\epsilon _{w}Q'(q_{2}))^{2}}.
\end{equation} \begin{lem}
\label{lem:prlwalltrans}  For   \(H>H^{tmin}(q_{2})\), the impact at \(q_{w}(q_{2})\) on the iso-energy level set \((H-H_2,H_2)\) is transverse if \(H_{2}\neq H_{2}^{t}(q_{2,}H)\) and has a tangency when   \(H_{2}= H_{2}^{t}(q_{2,}H)\).  The level set \((H-H_{2}^{t}(q_{2,}H),H_{2}^{t}(q_{2,}H))\) has also a  transverse impact segment  at \(q_{w}(q_{2})\) iff  \(\epsilon _{w}Q'(q_{2})\neq0\).     \end{lem}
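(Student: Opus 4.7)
The proof plan is to translate the geometric condition ``velocity parallel to wall tangent'' into a scalar equation in $H_2$ and $q_2$, and then to enumerate the velocity directions on the tangent level set to determine whether any of them are transverse.

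First, I would fix a wall point $q_w(q_2)$ in the Hill region (so $H>H^{tmin}(q_2)$) and use the mechanical form together with the separability to read off the squared momenta on the level set $(H-H_2,H_2)$:
\begin{equation*}
p_1^{2}=2\bigl(H-H_2-V_1(\epsilon_w Q(q_2))\bigr),\qquad p_2^{2}=2\bigl(H_2-V_2(q_2)\bigr).
\end{equation*}
The wall is the graph $q_1=\epsilon_w Q(q_2)$, so its tangent vector at $q_w(q_2)$ is $(\epsilon_w Q'(q_2),1)$. The impact is tangent precisely when the velocity $(p_1,p_2)$ is proportional to this vector, i.e.\ $p_1=\epsilon_w Q'(q_2)\,p_2$; otherwise it is transverse. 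Squaring this proportionality and substituting the expressions above yields
\begin{equation*}
H-H_2-V_1(\epsilon_w Q(q_2))=(\epsilon_w Q'(q_2))^{2}\bigl(H_2-V_2(q_2)\bigr),
\end{equation*}
which rearranges into $H_2-V_2(q_2)=(H-H^{tmin}(q_2))/(1+(\epsilon_w Q'(q_2))^{2})$, i.e.\ $H_2=H_2^{t}(q_2,H)$. This gives the first equivalence: tangency $\Longleftrightarrow$ $H_2=H_2^{t}(q_2,H)$, and transversality in the complementary case.

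For the second claim, I would examine exactly which of the (at most four) velocity vectors $(\pm|p_1|,\pm|p_2|)$ sitting over $q_w(q_2)$ in the tangent level set $(H-H_2^{t},H_2^{t})$ are actually tangent to the wall. If $\epsilon_w Q'(q_2)\neq 0$, then both $|p_1|$ and $|p_2|$ are strictly positive (the right-hand side of the tangency relation shows $|p_1|=|\epsilon_w Q'(q_2)|\,|p_2|>0$, and $|p_2|>0$ because $H>H^{tmin}(q_2)$ forces $H_2^{t}>V_2(q_2)$); so only the two sign combinations compatible with the sign of $\epsilon_w Q'(q_2)$ are tangent, while the other two sign combinations give genuine transverse impact segments sitting in the same cut-leaf. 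Conversely, if $\epsilon_w Q'(q_2)=0$, the tangency formula collapses to $p_1=0$, so all four velocity vectors reduce to $(0,\pm|p_2|)$ which align with the (locally vertical) wall tangent, and no transverse impact segment survives in the tangent level set over this wall point.

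The computation is largely bookkeeping, so no step is truly hard; the only mild subtlety is keeping the squaring argument honest, i.e.\ checking that the apparent sign ambiguity introduced by squaring is precisely what produces the transverse branch asserted in the second statement. Care is also needed so that the boundary-of-Hill-region cases (where $|p_1|$ or $|p_2|$ vanishes) are read correctly: these were already quarantined by Lemma \ref{lem:prlwall} via the hypothesis $H_2\in\mathcal{H}_2(q_2,H)$ used implicitly here, and for $H_2=H_2^{t}(q_2,H)$ with $\epsilon_w Q'(q_2)\neq 0$ one has $H_2^{t}\in\mathcal{H}_2(q_2,H)$ automatically, so $q_w(q_2)$ is an interior point of the relevant PRL and the four-fold multiplicity of velocity vectors is genuine.
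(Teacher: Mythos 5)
Your proposal is correct and follows essentially the same route as the paper: separability gives $|p_1|,|p_2|$ at $q_w(q_2)$ as functions of $H_2$, tangency is the proportionality $p_1=\epsilon_w Q'(q_2)p_2$ yielding the unique value $H_2^t(q_2,H)$, and the opposite sign branch $p_1=-\epsilon_w Q'(q_2)p_2$ supplies the transverse segment on the tangent level set precisely when $\epsilon_w Q'(q_2)\neq 0$. The only cosmetic difference is that the paper establishes uniqueness of the tangent $H_2$ by showing $p_1/p_2$ varies monotonically from $0$ to $\pm\infty$ as $H_2$ sweeps $\mathcal{\bar H}_2(q_2,H)$, whereas you solve the (linear in $H_2$) squared tangency relation directly — both are valid, and your check that $H_2^t$ lies in the open interval $\mathcal{H}_2(q_2,H)$ exactly when $\epsilon_w Q'(q_2)\neq 0$ correctly handles the degenerate vertical-tangency case.
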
\begin{proof}
  By lemma \ref{lem:prlwall}, impacts occur at \(q_w(q_2)\) on  a leaf of  \((H-H_2,H_2)\) iff \(H_2\in\mathcal{\bar H}_{2}(q_2,H)\), so we need to  consider only this range of \(H_{2}\)  values.
The PRL with \(H_{2}=V_2(q_2)\) is the PRL with a horizontal boundary passing through \(q_w(q_2)\) (since \(p_{2}=0\) at  \(q_w(q_2)\)). The PRL with  \(H_{2}=H-V_1(\epsilon _{w}Q(q_2))\) is  the PRL with a vertical boundary passing through \(q_w(q_2)\) (since \(p_{1}=0\) there), see lemma \ref{lem:prlwall} (notice that these two PRLs are not necessarily of the same family). All the intermediate values of \(H_{2}\) are realized at \(q_w(q_2)\) by PRLs that contain \(q_w(q_2)\) as an interior point with non-zero momenta, in the two directions:\begin{equation}
\frac{p_{1}}{p_2}\vert_{q_w(q_2)}=\pm\frac{\sqrt{H-H_2-V_1(\epsilon _{w}Q(q_2))}}{\sqrt{H_{2}-V_2(q_2)}}
\end{equation}  So,   these directions change continuously and monotonically with \(H_{2}\) between \(0\) and  \(\pm\infty\):\begin{displaymath}
\frac{d}{dH_{2}}(\frac{p_{1}}{p_2}\vert_{q_w(q_2)})=\pm\frac{\sqrt{H_{2}-V_2(q_2)}}{\sqrt{H-H_2-V_1(\epsilon _{w}Q(q_2))}}\frac{H^{tmin}(q_{2})-H}{(H_{2}-V_2(q_2))^2}
\end{displaymath}   The tangent vector to the wall at \(q_w(q_2)\) is \(\frac{(\epsilon _{w}Q'(q_{2}),1)}{\sqrt{1+(\epsilon _{w}Q'(q_{2}))^{2}}}\) and the normal into the billiard domain is \(\frac{(1,-\epsilon _{w}Q'(q_{2}))}{\sqrt{1+(\epsilon _{w}Q'(q_{2}))^{2}}}\) (see Eq.  (\ref{eq:wavywal})).  So, \(p_{1}/p_2\) has exactly one intermediate value \(H^{t}_2(q_{2},H)\)  at which \(p_{1}=\epsilon _{w}Q'(q_{2})p_{2}\), where tangency occurs, and at all other \(H_{2}\) values the impact is non-tangent.  It follows from eqs. (\ref{eq:Hint},\ref{eq:wavywal}) that at  \(H^{t}_2(q_{2},H)\)  :
\begin{equation}
H= H^{t}_2(q_{2},H)  (1+(\epsilon _{w}Q'(q_{2}))^{2})-(\epsilon _{w}Q'(q_{2}))^{2}V_2(q_{2})+V_{1}(\epsilon _{w}Q(q_2))
\end{equation}Hence
\begin{equation}\label{eq:tanghh2}
H_{2}^{t}(q_{2,}H)=\frac{H+(\epsilon _{w}Q'(q_{2}))^{2}V_{2}(q_2)-V_{1}(\epsilon _{w}Q(q_2))}{1+(\epsilon _{w}Q'(q_{2}))^{2}} =V_{2}(q_2)+\frac{H-H^{tmin}(q_{2})}{1+(\epsilon _{w}Q'(q_{2}))^{2}}.
\end{equation}
For the level set  \((H-H_{2}^{t}(q_{2,}H),H_{2}^{t}(q_{2,}H))\), at \(q_w(q_2)\), the  momenta directions are \(\pm(\epsilon _{w}Q'(q_{2}),\pm1) \), where, as noted above,    \((\epsilon _{w}Q'(q_{2}),1) \)  is tangent to the boundary.   The directions  \(\pm(-\epsilon _{w}Q'(q_{2}),1) \)  are distinct from the tangent one iff \(\epsilon _{w}Q'(q_{2})\neq0\) and in this case, since \(\pm(-\epsilon_{w}Q'(q_{2}),1)\cdot\hat  n=\mp2\epsilon _{w}Q'(q_{2})\), they correspond to transverse impacts at  \(q_w(q_2)\) with one direction pointing into the billiard domain.    \end{proof}

In particular, \(H_{2}^{t}(q_{2},H^{tmin}(q_{2}))=V_2(q_2)\) (see (\ref{eq:htmingen})),   \(H_2^t(q_2,H)\) grows linearly with \(H\), and   \(H_2^t(q_2,H)\)  depends smoothly on \(q_2\) for all \(H>H^{tmin}(q_{2})\), namely for all \(q_2\)  such that \(q_w(q_2)\in S_w^{j}(H)\backslash \partial S_{w}^{j}(H)\) for some \(j\). When  \(\epsilon _{w}Q'(q_{2})=0 \), \(H_{2}^{t}(q_{2,}H)=H-V_{1}(\epsilon_{w}Q(q_2))\), so   \(q_w(q_2)\) belongs to the vertical boundary of the corresponding PRL (see lemma \ref{lem:prlwall}).

 \begin{lem}
\label{lem:prlintrfixedpoint}Each PRL of a level set \((H_{1},H_2)\) includes a local minimizer of \(V\) which is an interior point of the PRL (for singular PRL that reduce to a segment, an interior point means an interior point of the segment).   \end{lem}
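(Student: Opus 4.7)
The plan is to reduce the two-dimensional statement to the one-dimensional structure of the factors, exploiting that $V=V_1(q_1)+V_2(q_2)$ splits and that a PRL is a product rectangle, by Theorem~\ref{thm:hillregionprl}.

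First I would recall that a PRL has the form
\[
R^{(k_1,k_2)}(H_1,H_2)=[q^{k_1}_{1,min}(H_1),q^{k_1}_{1,max}(H_1)]\times[q^{k_2}_{2,min}(H_2),q^{k_2}_{2,max}(H_2)],
\]
where each factor $I_i^{k_i}:=[q^{k_i}_{i,min}(H_i),q^{k_i}_{i,max}(H_i)]$ is, by definition, a maximal closed interval on which $V_i(q_i)\le H_i$. A local minimizer of $V$ is precisely a pair $(q_1^\ast,q_2^\ast)$ where $q_i^\ast$ is a local minimizer of $V_i$, so it suffices to prove that each factor interval $I_i^{k_i}$ contains a local minimizer of $V_i$ in its (relative) interior.

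For the one-dimensional claim I would distinguish two cases. If $I_i^{k_i}$ is non-degenerate, then at both endpoints $V_i = H_i$ (by maximality and by the bound assumption [B], so the endpoints are finite), while $V_i\le H_i$ inside; if $V_i$ were identically $H_i$ on any subinterval, that would violate the simple-extremum condition [S3]. Hence there is an interior point with $V_i<H_i$, and so $V_i$ attains its minimum on $I_i^{k_i}$ at some interior point $q_i^\ast$; by smoothness [S2] this is a critical point, and since $V_i(q_i^\ast)<H_i=V_i(q^{k_i}_{i,min/max}(H_i))$, the simple-extremum condition [S3] forces $V_i''(q_i^\ast)>0$, so $q_i^\ast$ is an interior local minimizer as claimed. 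In the degenerate case $q^{k_i}_{i,min}(H_i)=q^{k_i}_{i,max}(H_i)$, the single point must be a zero of $V_i'$ at which $V_i$ attains the value $H_i$ with $V_i\ge H_i$ locally (otherwise the component would be an interval), and by [S3] this is a local minimizer which is, by convention, an interior point of the degenerate segment.

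Combining these factorwise conclusions, the point $q^\ast=(q_1^\ast,q_2^\ast)$ is a local minimizer of $V_1+V_2=V$ and lies in the interior of $R^{(k_1,k_2)}(H_1,H_2)$. The only delicate step is ruling out the possibility that $V_i$ be flat on a subinterval or that an interior zero of $V_i'$ be a non-minimum critical point; both are handled directly by the simple extremum clause of [S3], so I do not expect any further difficulty.
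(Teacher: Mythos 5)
Your proof is correct and follows essentially the same route as the paper, which disposes of the lemma in one sentence by noting that each factor interval is the maximal interval on which $V_i\le H_i$ and hence, by the S3BN assumption, contains an interior local minimizer of $V_i$. Your version simply makes explicit the details the paper leaves implicit (ruling out flat stretches via [S3], the degenerate single-point case, and the fact that a pair of interior minimizers of $V_1,V_2$ gives an interior minimizer of $V$), all of which are handled correctly.
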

\begin{proof}
 This follows from the fact that the \(H_i\)-leaves, namely the segments \([q^{k_i}_{i,min}(H_{i}),q^{k_i}_{i,max}(H_{i})]\), correspond to the maximal extent of the level sets of \(V_i\) (recall the proof of Theorem \ref{thm:hillregionprl}),  which, by the S3BN assumption, must include at least one local minimizer of \(V_{i}\) in their interior.  \end{proof}

\subsection{Proof of Proposition \ref{thm:global} }\label{subsec:impactzones}

First, we note that by the S3BN assumption \(H^{tmin}(q_2)\) (of Eq. (\ref{eq:htmingen})) is bounded from below and grows to infinity for large \(|q_{2}|\) thus its minimum, \(H^{tmin}_{min}\) (Eq. (\ref{eq:htminimin})) is finite. Let
\begin{equation}\label{eq:htrmin}
H^{tr}_{min}=\max _{i}V_{1}(\epsilon _{w}Q(q_{2,i}^{min})), \text{where } V_2 (q_{2,i}^{min})=\min_{q_2} V_2(q_2)=0
\end{equation}denote the maximal value of \(V_{1}\) at the wall points corresponding to global minimizers of \(V_2\). By the S3BN assumption there is a finite number of such minimizers of \(V_{2}\) and they occur in some bounded set (since \(V_2\) grows to infinity at large \(|q_2|\)). Thus,  since \(Q\) and \(V_1\) are  smooth,
\(H^{tr}_{min}\) is bounded.\begin{lem}
(claim 1 of Proposition \ref{thm:global})\label{lem:claim1} An exterior GWS has no allowed motion for \(H<H^{tmin}_{min}\) and, for  all  \(H>H^{tmin}_{min}\), all the  leaves with segments of non-zero length in the allowed region of motion are  impact cut-leaves.
\begin{proof}
  Since each PRL  is a rectangle which includes in its interior a  point which is a local minimizer of \(V\)(by lemma \ref{lem:prlintrfixedpoint}), and since for exterior GWS all local minimizers are not in the billiard domain, we conclude that each PRL  which includes  a point inside the billiard domain must also include a minimizer of \(V\) which is not in the billiard domain, so  it must intersect the wall, namely the leaf is an impact leaf as claimed.
 Since, by lemma \ref{lem:prlwall} and Eq. (\ref{eq:htminimin}), for  \(H<H^{tmin}_{min}\) all the iso-energy leaves do not intersect the wall, by the above argument they all must lie outside of the billiard domain so no motion is allowed for   \(H<H^{tmin}_{min}\).

 \end{proof}  \end{lem}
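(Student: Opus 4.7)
The plan is to combine two facts already established in the excerpt: Lemma \ref{lem:prlintrfixedpoint}, which guarantees that every PRL contains a local minimizer of $V$ in its interior, and the defining feature of an exterior GWS, namely that no local minimizer of $V$ lies in the billiard domain $D$. A PRL is a closed rectangle and hence connected, so whenever it contains both a point of $D$ and a point of the complement of $D$, it must meet the wall $\partial D = \{q : q^w(q;\epsilon_w) = 0\}$. This single connectedness observation is what drives both halves of the statement.

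For the second half I would fix $H > H^{tmin}_{min}$ and pick an iso-energy leaf whose cut-leaf has non-zero length inside $D$. Its PRL $R^{k}(H-H_2, H_2)$ contains an interior local minimizer $q^{*}$ of $V$ by Lemma \ref{lem:prlintrfixedpoint}; by the exterior hypothesis $q^{*} \notin D$, while the non-zero-length assumption supplies a point of $R^{k}$ lying in $D$. Connectedness then forces $R^{k}$ to cross $\partial D$, producing a wall-intersection point $q_w \in S_w(H)$. By Lemma \ref{lem:prlwall} the smooth flow on the leaf actually reaches $q_w$ with nonzero kinetic energy, and Lemma \ref{lem:prlwalltrans} classifies the impact there as either transverse or tangent; either way the cut-leaf is an impact cut-leaf.

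For the first half, when $H < H^{tmin}_{min}$, the definition (\ref{eq:htminimin}) together with Lemma \ref{lem:prlwall} gives $S_w(H) = \emptyset$, so no PRL at energy $H$ meets the wall. Every PRL nevertheless contains an interior local minimizer $q^{*} \notin D$ by the exterior hypothesis; if such a PRL also contained a point of $D$, the connectedness argument of the previous paragraph would produce a wall-crossing, contradicting $S_w(H) = \emptyset$. Hence every iso-energy PRL lies entirely outside $D$, and the iso-energy allowed region is empty.

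The one delicate point I anticipate is the treatment of degenerate PRLs, where some $H_i$-interval collapses to a single local minimizer of $V_i$ and the rectangle degenerates to a segment or a point. Lemma \ref{lem:prlintrfixedpoint} is already formulated so that the minimizer is an \emph{interior} point even of such degenerate leaves, so the connectedness/separation argument still applies verbatim; and the "non-zero length" qualifier in the statement cleanly discards point-leaves that consist solely of a minimizer sitting outside $D$, which carry no dynamics in the impact system.
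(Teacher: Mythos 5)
Your proposal is correct and follows essentially the same route as the paper: Lemma \ref{lem:prlintrfixedpoint} places an interior local minimizer of \(V\) in every PRL, the exterior hypothesis puts that minimizer outside the billiard domain, and connectedness of the rectangle forces any PRL meeting the domain to cross the wall, with the first half following from the fact that no PRL meets the wall below \(H^{tmin}_{min}\) (Lemma \ref{lem:prlwall}). Your added remarks on classifying the impact via Lemma \ref{lem:prlwalltrans} and on degenerate PRLs are harmless refinements of the same argument.
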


 \begin{lem}
\label{lem:wallfixedpoints}(claim 2 of Proposition \ref{thm:global})\label{lem:exteriorPRL} For an interior GWS, for all  \(H<H^{tmin}_{min}\)  the allowed region of motion is non-empty
  iff at at least one of the local minimizers of    \(V\) which is in the billiard domain \(V\) is smaller than \(H^{tmin}_{min}\). Then, all the allowed leaves are in the non-impact zone.

   \end{lem}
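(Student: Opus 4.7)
\textbf{Proof plan for Lemma \ref{lem:wallfixedpoints}.} The plan is to exploit the fact that, below $H^{tmin}_{min}$, the wall sits entirely outside the Hill region, so no PRL can be cut by it. First, by Lemma \ref{lem:prlwall} and the definition (\ref{eq:htminimin}) of $H^{tmin}_{min}$, for every $H<H^{tmin}_{min}$ and every $q_2\in\mathbb{R}$ one has $H<H^{tmin}(q_2)$, so no point of the wall lies in $\mathcal{D}_2^{Hill}(H)$. Consequently, no PRL at this energy meets the wall.

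Second, I would invoke that the wall (\ref{eq:wavywal}) is a $C^r$ graph over the $q_2$-axis, so its complement in $\mathbb{R}^2$ has exactly two unbounded connected components, the open billiard domain $D^\circ=\{q_1>\epsilon_w Q(q_2)\}$ and its complement. Each PRL is a connected rectangle disjoint from the wall, hence lies wholly in $D^\circ$ or wholly outside $D$. By Lemma \ref{lem:prlintrfixedpoint}, every PRL contains a local minimizer of $V$ in its interior, and therefore a PRL is contained in $D$ if and only if the particular minimizer sitting in its interior belongs to $D$. Moreover, the PRL through such a minimizer $q^{min}$ appears on the energy surface $H$ precisely when $V(q^{min})\leq H$.

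Combining these two observations, for $H<H^{tmin}_{min}$ the allowed region of motion is non-empty iff at least one local minimizer of $V$ lying in $D$ has value at most $H$; within the range $H<H^{tmin}_{min}$ this forces the existence of such an interior minimizer with $V(q^{min})<H^{tmin}_{min}$, matching the stated condition. Finally, because at such energies no PRL touches the wall, every allowed cut-leaf coincides with its ambient leaf and is a non-impact leaf, giving the second assertion.

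The only delicate point is the Jordan-type separation claim that a connected rectangle disjoint from the curve $q_1=\epsilon_w Q(q_2)$ must lie on one side of it; this is immediate from the graph property (track the sign of $q_1-\epsilon_w Q(q_2)$ along any path inside the rectangle) but would fail for more general impact boundaries that are not graphs over one of the axes.
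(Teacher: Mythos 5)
Your proof is correct and follows essentially the same route as the paper's: for $H<H^{tmin}_{min}$ the wall lies outside the Hill region so no leaf can impact, and Lemma \ref{lem:prlintrfixedpoint} reduces the question of whether a PRL lies in the billiard domain to whether the local minimizer of $V$ in its interior does. Your only addition is to make explicit the separation step (a connected PRL disjoint from the graph $q_{1}=\epsilon_{w}Q(q_{2})$ lies wholly on one side of it), which the paper's proof uses implicitly in the phrase ``since they do not impact the wall they have no part which is in the allowed region of motion.''
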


\begin{proof} Since, by lemma \ref{lem:prlwall} and Eq. (\ref{eq:htminimin}), for  \(H<H^{tmin}_{min}\) all the iso-energy leaves do not intersect the wall, the leaves in the allowed region of motion belong to the non-impact zone.
We need to check when some of these non-impacting leaves are in the allowed region of motion.

 \(\Rightarrow\)If \((q_{1,i}^{ext},q_{2,i}^{ext})\)   is a minimizer of \(V \) in the billiard domain such that  \(H^{min,i}=V(q_{1,i}^{ext},q_{2,i}^{ext})<H^{tmin}_{min}\), then, for small \(\delta\), the PRL of \(H^{min,i}+\delta\) are small rectangles that include the minimizer and are thus, for sufficiently small \(\delta\) insider the billiard domain.

\(\Leftarrow\) If all the local minima with minimizers in the billiard domain are larger than \(H^{tmin}_{min}\), then all the PRLs that include these minimizers as an internal point belong to iso-energy level sets with energy larger than  \(H^{tmin}_{min}\). Since, by lemma \ref{lem:prlintrfixedpoint}, all PRL must include a minimizer, we conclude that all the level sets with      \(H<H^{tmin}_{min}\)  include minimzers that are outside the billiard domain, and since they do not impact the wall they have no part which is in the allowed region of motion. \end{proof}

 \begin{lem}\label{lem:h2mh2tan}(Claim 3 of Proposition \ref{thm:global}:)
For \(H>H^{tmin}_{min}\) the  allowed leaves of the level set \((H-H_{2},H_{2})\)
belong to the  impact zone iff  \( H_2\in H_2^{m}(H)\) and to the tangency zone iff \(H_{2}\in H_{2}^{tan}(H)\) where
\begin{equation}
\label{eq:h2mdef}
\begin{array}{ll}
H_{2}^m(H) &= \displaystyle \bigcup_{j=1,...,k_{w}(H)}\ [\min _{q_w(q_2)\in S_{w}^{j}(H)}V_2(q_2),\max_{q_w(q_2)\in S_{w}^{j}(H)} (H-V_1(\epsilon _{w}Q(q_2))]
\end{array}
\end{equation} and\begin{equation}\label{eq:h2tanminmaxform}
H_{2}^{tan}(H)=\bigcup_{j=1,...,k_{w}(H)}[\min_{q_w(q_2)\in S_w^{j}(H)}H_{2}^{t}(q_{2,}H),\max_{q_w(q_2)\in S_w^{j}(H)}H_{2}^{t}(q_{2,}H)].
\end{equation}
 \end{lem}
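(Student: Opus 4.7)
The plan is to combine the PRL decomposition of the Hill region (Theorem \ref{thm:hillregionprl}) with Lemmas \ref{lem:prlwall} and \ref{lem:prlwalltrans}, translating the membership of a leaf in the impact/tangency zones into a condition on whether its PRL meets $S_w(H)$ and how. The $H_2$ range of impact and tangency leaves is then read off by letting the wall parameter $q_2$ vary across each connected wall segment $S_w^j(H)$, and one must verify that the resulting sets are the unions of intervals displayed in (\ref{eq:h2mdef}) and (\ref{eq:h2tanminmaxform}).

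First, for the impact zone: a leaf of $(H-H_2,H_2)$ impacts the wall iff its PRL intersects $S_w(H)$. By Lemma \ref{lem:prlwall}, for a fixed wall point $q_w(q_2)\in S_w(H)$, the set of $H_2$ values for which some PRL of $(H-H_2,H_2)$ contains $q_w(q_2)$ is exactly the closed interval $\bar{\mathcal{H}}_2(q_2,H)=[V_2(q_2),\,H-V_1(\epsilon_w Q(q_2))]$. Hence
\begin{equation*}
H_2^m(H)=\bigcup_{j=1}^{k_w(H)}\bigcup_{q_w(q_2)\in S_w^j(H)}[V_2(q_2),\,H-V_1(\epsilon_w Q(q_2))].
\end{equation*}
Second, for the tangency zone: by Lemma \ref{lem:prlwalltrans}, for each $q_w(q_2)\in S_w(H)$ there is a unique value $H_2=H_2^t(q_2,H)$ for which the leaf is tangent at $q_w(q_2)$. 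Therefore $H_2^{tan}(H)$ is the image of the map $q_2\mapsto H_2^t(q_2,H)$ applied to the projection $L_w^j(H)$ of each segment $S_w^j(H)$ onto the $q_2$-axis.

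To obtain the interval form (\ref{eq:h2mdef}), I would argue as follows: by the S3BN smoothness and by the connectedness of each $S_w^j(H)$, its projection $L_w^j(H)$ is a closed interval, and both $V_2(q_2)$ and $H-V_1(\epsilon_w Q(q_2))$ are continuous on it, with $V_2(q_2)\le H-V_1(\epsilon_w Q(q_2))$ (equality only at boundary $q_2$ where the wall is tangent to $\partial\mathcal{D}_2^{Hill}(H)$, by Lemma \ref{lem:prlwall}). Then the planar strip
\begin{equation*}
\Omega^j:=\{(q_2,H_2): q_2\in L_w^j(H),\ V_2(q_2)\le H_2\le H-V_1(\epsilon_w Q(q_2))\}
\end{equation*}
is path-connected, so its projection to the $H_2$-axis is a single closed interval, whose endpoints are exactly $\min_{q_2}V_2$ and $\max_{q_2}(H-V_1(\epsilon_w Q))$ over $q_w(q_2)\in S_w^j(H)$. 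Similarly, for (\ref{eq:h2tanminmaxform}), continuity of $H_2^t(\cdot,H)$ (evident from (\ref{eq:tanghh2})) on the closed interval $L_w^j(H)$ immediately yields a closed image interval $[\min H_2^t,\max H_2^t]$.

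The main obstacle is the connectedness assertion for $H_2^m$ within each $S_w^j(H)$: the functions $V_2(q_2)$ and $H-V_1(\epsilon_w Q(q_2))$ are in general non-monotone in $q_2$ (the wall may run against strong variations of both potentials), so one might worry that the union of intervals $[V_2(q_2),H-V_1(\epsilon_w Q(q_2))]$ splits. The strip argument above resolves this cleanly because $\Omega^j$ is the region between two continuous graphs over a connected base, which is connected whenever the upper graph lies above the lower one, and this is guaranteed inside $S_w^j(H)$ by Lemma \ref{lem:prlwall}. The remaining verifications — that boundary wall points (where the strip collapses to a single $H_2$-value) do not disconnect the image, and that distinct segments $S_w^j(H)$ contribute genuinely disjoint intervals in typical cases — follow from the same continuity together with the finiteness clause in the S3BN assumption.
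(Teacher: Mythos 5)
Your proposal is correct and follows essentially the same route as the paper's proof: membership in the impact zone is reduced via Lemma \ref{lem:prlwall} to the union of the intervals $\bar{\mathcal{H}}_2(q_2,H)=[V_2(q_2),H-V_1(\epsilon_w Q(q_2))]$ over each wall segment, and the tangency zone is obtained as the continuous image of $q_2\mapsto H_2^t(q_2,H)$ from Lemma \ref{lem:prlwalltrans}. Your explicit strip argument for why the union over a connected segment $S_w^j(H)$ is a single interval is just a spelled-out version of the step the paper dispatches with the phrase ``follows from the continuity of $V_i$,'' so the two proofs coincide in substance.
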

\begin{proof}
For   \(H>H^{tmin}_{min}\)   the set of collection of wall intervals, \( S_{w}(H)\), has non-empty interior. We check when a transverse/tangent impact occurs at an interior point  \(q_w(q_2)\) of  \( S_{w}(H)\).  Taking the union on all such level sets leads to the definition of \(H_{2}^{m}(H)\) and  \(H_{2}^{tan}(H)\).
By lemma \ref{lem:prlwall}, for   \(H>H^{tmin}_{min}\)    the union of \(H_{2 }\) intervals of level sets with PRLs which contain a wall point is precisely (see Eq. \ref{eq:swofH}):
\begin{equation}
\label{eq:h2mdef}
\begin{array}{ll}
H_{2}^m(H) &=  \displaystyle\bigcup_{j=1,...,k_{w}(H)} \ \bigcup_{q_w(q_2)\in S_{w}^{j}(H)}\mathcal{\bar H}_{2}(q_2,H) \\ \\
 &= \displaystyle\bigcup_{j=1,...,k_{w}(H)} \ \bigcup_{q_w(q_2)\in S_{w}^{j}(H)}[V_2(q_2),H-V_1(\epsilon _{w}Q(q_2))]  \\ \\
 &= \displaystyle \bigcup_{j=1,...,k_{w}(H)}\ [\min _{q_w(q_2)\in S_{w}^{j}(H)}V_2(q_2),\max_{q_w(q_2)\in S_{w}^{j}(H)} (H-V_1(\epsilon _{w}Q(q_2))]
\end{array}
\end{equation}
where the third line follows from the continuity of \(V_{i}\). Namely, \(H_{2}^m(H)\) is composed of at most \(k_{w}(H)\)  intervals, where  \(k_{w}(H)\)  is the number of disjoint intervals of \( S_{w}(H)\). By construction, for all  \(H_{2}\in H_{2}^m(H)\), there exists at least one \(q_2\in\mathbb{R}\) such that \(q_w(q_2)\in S_{w}(H)\) and \((H_{2}-V_2(q_2)),(H-H_{2}-V_1(\epsilon _{w}Q(q_2))\geqslant0 \), namely the level set \((H-H_{2},H_2)\) has impacting leaves (some of which are tangent leaves and possibly tangent boundary leaves that consist of a single point on the wall).

By lemma \ref{lem:prlwalltrans} a tangency occurs at  \(q_w(q_2)\in S_{w}(H)\)   on a level set  \((H-H_{2},H_2)\) iff  \(H_{2}=H_{2}^{t}(q_{2,}H)\). By  Eq. (\ref{eq:tanghh2}) \(H_{2}^{t}(q_{2,}H)\) depends smoothly on \(q_{2}\) iff   \(q_w(q_2)\in S^{j}_{w}(H)\)    (as \((p_{1},p_2)\) are real iff   \((H_{2}-V_2(q_2)),(H-H_{2}-V_1(\epsilon _{w}Q(q_2))\geqslant0 \)), so Eq. (\ref{eq:h2tanminmaxform}) follows.
\end{proof}

\begin{lem}\label{lem:claim4}(Claim 4 of Proposition \ref{thm:global}:)
For all   \(H>H^{tmin}_{min}\) the set  \(H_{2}^{tan}(H)\) is a finite collection of segments and it has a positive length.\end{lem}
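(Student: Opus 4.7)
The statement has two parts, finiteness and positive length, and I would attack them separately.

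\emph{Finiteness} is essentially immediate from the formula
\[
H_{2}^{tan}(H)=\bigcup_{j=1,...,k_{w}(H)}\bigl[\min_{q_w(q_2)\in S_w^{j}(H)}H_{2}^{t}(q_{2},H),\ \max_{q_w(q_2)\in S_w^{j}(H)}H_{2}^{t}(q_{2},H)\bigr]
\]
established in Lemma \ref{lem:h2mh2tan}, combined with the fact that $k_{w}(H)<\infty$ for every $H$. To verify the latter, I would use that $S_{w}(H)=\{q_{w}(q_{2}):H^{tmin}(q_{2})\le H\}$ and apply the S3BN conditions [S2], [S3], [B] to $H^{tmin}(q_{2})=V_{1}(\epsilon_{w}Q(q_{2}))+V_{2}(q_{2})$: this is a smooth proper function of $q_{2}$ with only finitely many critical points, so its sublevel set has only finitely many connected components. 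Each component is a closed interval $[q_{2}^{b,1},q_{2}^{b,2}]$, and by continuity of $H_{2}^{t}(\cdot,H)$ its image is a closed interval, giving a finite union of closed intervals as claimed.

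\emph{Positive length} is the substantive part. The plan is to exhibit one value of $j$ for which the interval in the union above is non-degenerate. The natural candidate is the index $j_{0}$ such that $q_{w}(q_{2}^{w-min})\in\mathrm{int}\,S_{w}^{j_{0}}(H)$; this choice is available because $H>H^{tmin}_{min}=H^{tmin}(q_{2}^{w-min})$ puts the wall point strictly inside the Hill region by Lemma \ref{lem:prlwall}. I would then argue by contradiction: suppose $H_{2}^{t}(q_{2},H)\equiv c$ on some neighbourhood $U$ of $q_{2}^{w-min}$ inside $S_{w}^{j_{0}}(H)$. Using (\ref{eq:tanghh2}), this is equivalent to the functional identity
\[
(c-V_{2}(q_{2}))(1+(\epsilon_{w}Q'(q_{2}))^{2})=H-V_{1}(\epsilon_{w}Q(q_{2}))-V_{2}(q_{2}),\qquad q_{2}\in U.
\]
Differentiating this identity at $q_{2}=q_{2}^{w-min}$ and repeatedly using $(H^{tmin})'(q_{2}^{w-min})=V_{1}'(\epsilon_{w}Q)\epsilon_{w}Q'+V_{2}'=0$, I would extract a hierarchy of algebraic conditions on $Q,V_{1},V_{2}$ at $q_{2}^{w-min}$ and show that it forces the left-hand side of (\ref{eq:generalpositionwall}) to vanish.

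The case structure splits along whether $Q'(q_{2}^{w-min})$ vanishes. If $Q'(q_{2}^{w-min})\neq0$, a single differentiation gives $V_{1}'a_{0}=2(V_{2}-c)\epsilon_{w}Q''$ (with $a_{0}=1+(\epsilon_{w}Q'(q_{2}^{w-min}))^{2}$), which either pins $H$ to a single exceptional value $H^{*}=H^{tmin}_{min}-V_{1}'a_{0}^{2}/(2\epsilon_{w}Q'')$ or forces $Q''=V_{1}'=0$; higher derivatives of the identity then push one into the $|V_{2}''-V_{1}''|$ and $|Q'''|$ terms of (\ref{eq:generalpositionwall}). If $Q'(q_{2}^{w-min})=0$, the general position condition (\ref{eq:generalpositionwall}) directly yields $Q''(q_{2}^{w-min})\neq0$, and a Taylor expansion of $H_{2}^{t}$ to second order (noting $a_{0}=1$ and $V_{2}'=0$) produces a quadratic coefficient $-\tfrac{1}{2}V_{1}'\epsilon_{w}Q''-(H-H^{tmin}_{min})(\epsilon_{w}Q'')^{2}$, which vanishes for at most one $H$; higher-order expansions using $Q''\neq0$ rule out identically constant behaviour even at that value. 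In every branch of the case split the contradiction is with (\ref{eq:generalpositionwall}), so $H_{2}^{t}(\cdot,H)$ is non-constant on $S_{w}^{j_{0}}(H)$ and the corresponding interval has positive length.

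\textbf{Main obstacle.} The chief difficulty is organizing the case analysis so that each subcase terminates inside one of the four terms assembled in (\ref{eq:generalpositionwall}): one must be careful that, when the lowest-order Taylor coefficients of the identity vanish (the ``exceptional'' $H$ values produced along the way), the next-order coefficients are \emph{not simultaneously} killed by the same constraints, and this is exactly what the specific combination $|Q''|+|Q'|(|V_{1}'|+|V_{2}''-V_{1}''|+|Q'''|)$ is designed to enforce. The routine but tedious part is keeping the signs and the chain-rule factors straight in the iterated differentiations, which I would automate using the substitution $V_{2}'(q_{2}^{w-min})=-V_{1}'\epsilon_{w}Q'$ to reduce the number of independent quantities at each stage.
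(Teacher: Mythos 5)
Your finiteness argument is fine and agrees with what the paper leaves implicit. For the positive-length part you end up computing the same objects as the paper --- the first and second \(q_2\)-derivatives of \(H_2^{t}(\cdot,H)\) at \(q_{2}^{w-min}\), simplified via \((H^{tmin})'(q_{2}^{w-min})=0\) --- but you organize them as a contradiction argument meant to work for \emph{every} \(H>H^{tmin}_{min}\) at once, and that is where the proposal breaks. The general position condition (\ref{eq:generalpositionwall}) involves only \(Q',Q'',Q''',V_1',V_1'',V_2''\) at \(q_{2}^{w-min}\); it guarantees that the first or second derivative, each an affine function of \(\delta=H-H^{tmin}_{min}\), is nonzero for all sufficiently \emph{small} \(\delta>0\). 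It says nothing about the exceptional value(s) \(H^{*}\) you identify: for instance, in the branch \(Q'\neq0\), \(V_1'\neq0\), \(Q''\neq0\) (where (\ref{eq:generalpositionwall}) already holds), at \(H^{*}=H^{tmin}_{min}-V_1'a_0^{2}/(2\epsilon_wQ'')\) the next Taylor coefficient involves \(V_1''',V_2''',Q''''\), none of which are constrained, so the hierarchy does not ``terminate inside one of the four terms'' as you assert. Worse, the contradiction you seek is genuinely unavailable: constancy of \(H_2^{t}(\cdot,H)\) on a neighbourhood of \(q_{2}^{w-min}\) at one specific \(H\) is the single identity \((c-V_2)(1+(\epsilon_wQ')^{2})=H-V_1(\epsilon_wQ)-V_2\), which can be solved as a first-order ODE for \(Q\) given \(V_1,V_2,H,c\), and one can arrange the solution so that \(q_{2}^{w-min}\) lies in its interior with \(Q'V_1'\neq0\) there, i.e.\ with (\ref{eq:generalpositionwall}) satisfied. (A concrete instance: \(V_1,V_2\) quadratic and \(\epsilon_wQ(q_2)=\sqrt2\sin(\arcsin(q_2/\sqrt2)+C')\), \(C'\neq0\), gives \(H_2^{t}(\cdot,2)\equiv1\) on a neighbourhood of \(q_2^{w-min}\).) So no amount of differentiating at \(q_{2}^{w-min}\) will produce a contradiction for such \(H\); the lemma is saved, if at all, by non-constancy elsewhere on \(L_w(H)\).

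The paper avoids all of this by proving non-constancy only for \(H=H^{tmin}_{min}+\delta\) with \(\delta\) small --- exactly the regime where the affine-in-\(\delta\) derivative formulas together with (\ref{eq:generalpositionwall}) give a nonzero first or second derivative --- and then reducing general \(H\) to this case using that the sets \(L_w(H)\) are nested, that \(\partial_HH_2^{t}=(1+(\epsilon_wQ')^{2})^{-1}>0\), and that \(Q'\) is bounded on \(L_w(H)\). Your proposal omits this reduction step entirely, and it is the step that carries the claim from small \(\delta\) to all \(H>H^{tmin}_{min}\). To repair the proof, restrict your derivative computation to small \(\delta\) (there your case split does close up against (\ref{eq:generalpositionwall}), and it reproduces conditions (i)--(v) of the paper's proof) and then supply the nestedness/monotonicity argument for larger energies.
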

\begin{proof} Recall that for all   \(H>H^{tmin}_{min}\)  the set \(S_{w}(H)\) is composed of closed segments with non-empty interior (see Eq. (\ref{eq:swofH})), and their projection to the \(q_{2}\) axis is denoted by \(L_{w}(H)\). By  Eq. (\ref{eq:tanghh2}) and  Eq. (\ref{eq:h2tanminmaxform}), to each wall point \(q_{w}(q_2)\in S_{w}(H) \) the iso-energy level set   \((H-H_{2}^{t}(q_{2,}H),H_{2}^{t}(q_{2,}H))\) has a tangent point at \(q_{w}(q_2)\), where \(H_{2}^{t}(q_{2,}H)\) depends smoothly on its arguments as long as \(H>H^{tmin}(q_{2})\), which is always the case for all  \({q_2\in L_{w}(H) }\). So, to prove the claim we need to prove that for all \(H>H^{tmin}_{min}\) the function \(H_{2}^{t}(q_{2,}H)\) is not a constant on \(L_{w}(H)\) which implies that it maps at least some of the  intervals of \(L_{w}(H)\) to intervals of positive lengths. Since  the intervals  \(L_{w}(H)\) are nested and \(\frac{\partial}{\partial H}H_{2}^{t}(q_{2,}H)=\frac{1}{1+(\epsilon _{w}Q'(q_{2}))^{2}}>0\) and \(Q'(q_{2})\) is bounded on the finite intervals \(L_{w}(H)\), it is sufficient to prove the claim for  \(H=H^{tmin}_{min}+\delta\) at   arbitrary small  \(\delta\). Notice that   \(q_{w}(q_{2}^{w-min})\in S_{w}(H)\) is an interior of   \(L_{w}(H)\)  for all    \(H>H^{tmin}_{min}\),  so the claim follows once we prove that
 \(\frac{\partial}{\partial q_{2}}H_{2}^{t}(q_{2,}H^{tmin}_{min}+\delta)|_{q_{2}^{w-min}}\)
 or
 \(\frac{\partial^2}{\partial q_{2}^2}H_{2}^{t}(q_{2,}H^{tmin}_{min}+\delta)|_{q_{2}^{w-min}}\)  do not vanish at sufficiently small \(\delta\).   Indeed, \begin{equation}
 \begin{array}{ll}
\frac{\partial}{\partial q_{2}}H_{2}^{t}(q_{2,}H)|_{q_{2}^{w-min}}&=
\left[V_{2}'(q_2)-
\frac{H^{tmin'}(q_{2})}{1+(\epsilon_{w}Q'(q_{2}))^{2}}-2\epsilon _{w}Q'(q_{2})Q''(q_{2})\frac{H-H^{tmin}(q_{2})}{(1+(\epsilon _{w}Q'(q_{2}))^{2})^2}\right]_{q_{2}^{w-min}}\\
&=V_{2}'({q_{2}^{w-min}})-\frac{2\delta\epsilon _{w}Q'({q_{2}^{w-min}})Q''({q_{2}^{w-min}})}{(1+(\epsilon _{w}Q'({q_{2}^{w-min}}))^{2})^2}\\
&=-\epsilon _{w}Q'({q_{2}^{w-min}})(V_{1}'(\epsilon _{w}Q({q_{2}^{w-min}}))+\frac{2\delta Q''({q_{2}^{w-min}})}{(1+(\epsilon _{w}Q'({q_{2}^{w-min}}))^{2})^2})
\end{array}\end{equation}
so  if (i):
\(Q'({q_{2}^{w-min}}) V_{1}'(\epsilon _{w}Q({q_{2}^{w-min}}))\neq0\) or (ii): \( Q'({q_{2}^{w-min}})Q''({q_{2}^{w-min}})\neq0\),   for sufficiently  small \(\delta>0\) the first derivative does not vanish and the lemma follows. Otherwise,
\begin{equation}
 \begin{array}{ll}
\frac{\partial^{2}}{\partial q_{2}^2}H_{2}^{t}(q_{2,}H)|_{q_{2}^{w-min}}&=\left[V_{2}''(q_2)-\frac{H^{tmin''}(q_{2})}{1+(\epsilon _{w}Q'(q_{2}))^{2}}+4\epsilon_{w}Q'(q_{2}) Q''(q_{2})\frac{H^{tmin'}(q_{2})}{(1+(\epsilon _{w}Q'(q_{2}))^{2})^2}\right]_{q_{2}^{min}}\\
& +\left[4(\epsilon _{w}Q'(q_{2})Q^{''}(q_{2}))^{2}\frac{H-H^{tmin}(q_{2})}{(1+(\epsilon _{w}Q'(q_{2}))^{2})^3}-2\epsilon _{w}(Q'(q_{2})Q''(q_{2}))'\frac{H-H^{tmin}(q_{2})}{(1+(\epsilon _{w}Q'(q_{2}))^{2})^2}\right]_{q_{2}^{w-min}}\\
&=\left[V_{2}''(q_{2})-\frac{H^{tmin''}(q_{2})}{(1+(\epsilon_{w}Q'(q_{2}))^{2})^2}-2\epsilon _{w}(Q''(q_{2})^{2}+Q'(q_{2})Q'''(q_{2}))\frac{\delta}{(1+(\epsilon _{w}Q'(q_{2}))^{2})^2}\right]_{q_{2}^{w-min}}\\
&=\begin{cases}\left[-\epsilon _{w}Q''(q_{2})V_{1}'(\epsilon _{w}Q(q_{2}))-2\epsilon _{w}\delta Q''(q_{2})^{2}\right]_{q_{2}^{w-min}} & \text{if }\qquad Q'({q_{2}^{w-min}})=0  \\
\\
\left[\frac{\epsilon _{w} Q'(q_{2})\left(\epsilon _{w}Q'(q_{2})(V_{2}''(q_{2})-V_{1}''(\epsilon _{w}Q(q_{2})))-2\delta Q'''(q_{2})\right)}{(1+(\epsilon _{w}Q'(q_{2}^{min}))^{2})^2}\right]_{q_{2}^{w-min}} &  \\
\qquad\qquad\qquad \text{if }|V_{1}'(\epsilon _{w}Q({q_{2}^{w-min}}))|+|Q''(q_{2}^{w-min})|=0 &\\
\end{cases}\\
\end{array}
\end{equation}(where we used in the second equality that \(Q'(q_{2}^{w-min})Q''(q_{2}^{w-min})=0\)). Thus, if
 \(\frac{\partial}{\partial q_{2}}H_{2}^{t}(q_{2},H)|_{q_{2}^{w-min}}=0\), the condition for
 \(\frac{\partial^2}{\partial q_{2}^2}H_{2}^{t}(q_{2},H)|_{q_{2}^{w-min}}\neq0\) for small \(\delta\) is that (iii)  \(Q''(q_{2}^{w-min})\neq0\) or (iv)   \(Q'({q_{2}^{w-min}})(V_{2}''({q_{2}^{w-min}})-V_{1}''(\epsilon _{w}Q(q_{2}^{min})))\) or  (v) \( Q'({q_{2}^{w-min}})Q'''(q_{2}^{min}))\neq0\) are satisfied. Eq. (\ref{eq:generalpositionwall}) implies that at least one of the conditions (i-v) are satisfied so for a GWS in general position  \(H_{2}^{t}(q_{2,}H)\) is non constant on \(S_{w}(H)\).      \end{proof}
\begin{lem}\label{lem:claim5}(Claim 5 of Proposition \ref{thm:global}:)
 For  \(H>H^{tr}_{min}\) the  set of \(H_{2}\) values belonging to the iso-energy transverse impact  zone   \( (H_{2}^{tr}(H)=H_2^{m}(H)\setminus H_{2}^{tan}(H)\)) has positive measure, in fact, there exist  \(H_{2}^{lower-tr}(H)>0\) such that this set   includes the interval \([0,H_{2}^{lower-tr}(H)]\).
\end{lem}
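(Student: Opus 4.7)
The plan is to construct $H^{lower\text{-}tr}_2(H)$ explicitly by exhibiting a small interval $[0,H^{lower\text{-}tr}_2(H)]$ that sits inside $H_2^m(H)$ and is separated from $H_2^{tan}(H)$ by a positive gap. The argument rests on locating a global minimizer of $V_2$ on the wall and exploiting the precise form of $H^{tr}_{min}$ in Eq. (\ref{eq:htrmin}), which is tailored so that such a minimizer is strictly interior to the Hill region for $H>H^{tr}_{min}$.

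First, I would pick a global minimizer $q_{2,i}^{min}$ of $V_2$ (so $V_2(q_{2,i}^{min})=0$) and verify that for every $H>H^{tr}_{min}$ it lies strictly inside some maximal segment $L_w^{j_\star}(H)$. Indeed, by (\ref{eq:htrmin}),
\begin{equation*}
H^{tmin}(q_{2,i}^{min})=V_1(\epsilon_w Q(q_{2,i}^{min}))\leq H^{tr}_{min}<H,
\end{equation*}
so continuity of $H^{tmin}$ places a whole neighborhood of $q_{2,i}^{min}$ in the interior of some component $L_w^{j_\star}(H)$. Substituting $V_2(q_{2,i}^{min})=0$ into (\ref{eq:h2mdef}) then shows that the $j_\star$-th interval of $H_2^m(H)$ is $[0,\max_{q_w(q_2)\in S_w^{j_\star}(H)}(H-V_1(\epsilon_w Q(q_2)))]$, and in particular contains $[0,a_H]$ with $a_H:=H-V_1(\epsilon_w Q(q_{2,i}^{min}))>0$.

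Second, and this will be the key step, I would establish the uniform positivity $m_H:=\min_{q_2\in L_w(H)} H_2^t(q_2,H)>0$. From Eq. (\ref{eq:tanghh2}), $H_2^t(q_2,H)=V_2(q_2)+\dfrac{H-H^{tmin}(q_2)}{1+(\epsilon_w Q'(q_2))^2}\geq V_2(q_2)\geq 0$, with equality forcing \emph{both} $V_2(q_2)=0$ and $H=H^{tmin}(q_2)$. The first equation identifies $q_2$ as a global minimizer $q_{2,k}^{min}$ of $V_2$, after which the second reads $H=V_1(\epsilon_w Q(q_{2,k}^{min}))\leq H^{tr}_{min}$, contradicting the assumption $H>H^{tr}_{min}$. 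Since $H_2^t(\cdot,H)$ is continuous and $L_w(H)$ is compact by the S3BN assumption, its minimum $m_H$ is attained and strictly positive; combined with (\ref{eq:h2tanminmaxform}) this gives $H_2^{tan}(H)\subseteq[m_H,\infty)$. Setting $H^{lower\text{-}tr}_2(H):=\tfrac{1}{2}\min\{m_H,a_H\}>0$ then guarantees $[0,H^{lower\text{-}tr}_2(H)]\subseteq H_2^m(H)\setminus H_2^{tan}(H)=H_2^{tr}(H)$, proving the claim.

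The hard part will be the uniform positivity of $H_2^t$ on the whole wall trace $L_w(H)$, because a priori the two nonnegative contributions in (\ref{eq:tanghh2}) could cooperate to produce a vanishing tangent partial energy. The crucial input that unlocks this step is the tailored definition of $H^{tr}_{min}$ in (\ref{eq:htrmin}): it ensures that the only locus where $V_2(q_2)=0$ (the global minimizers) is strictly interior to the Hill region, so the \emph{excess-energy} term $(H-H^{tmin}(q_2))/(1+(\epsilon_w Q'(q_2))^2)$ is bounded below there, ruling out the simultaneous vanishing of both summands.
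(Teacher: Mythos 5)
Your proof is correct, but it takes a genuinely different route from the paper's. The paper fixes $H_2$ and works in the configuration space: it first argues that for large $H$ the wall cuts every PRL of the level set $(H-H_2,H_2)$ (so $H_2\in H_2^m(H)$), and then bounds the normal momentum component $|(p_1,p_2)\cdot\hat n|$ at every wall point over $\mathcal{L}(H_2)$ from below by an explicit expression that is positive once $H>H^{tr}(H_2):=C_1(H_2)+C_2(H_2)\left(C_3(H_2)\right)^2$; the strict monotonicity of $H^{tr}(\cdot)$ together with $H^{tr}(0)=H^{tr}_{min}$ then produces $H_{2}^{lower-tr}(H)$ as the inverse function. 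You instead stay entirely in the $(H,H_2)$ diagram and use the characterizations already established in Lemmas \ref{lem:prlwall} and \ref{lem:h2mh2tan}: $0\in H_2^m(H)$ because a global minimizer of $V_2$ lies in the interior of $S_w(H)$ once $H>H^{tr}_{min}$, and $H_2^{tan}(H)$ is bounded away from $0$ because $H_2^{t}(\cdot,H)$ is a continuous, strictly positive function on the compact set $L_w(H)$ --- the equality case ($V_2(q_2)=0$ together with $H=H^{tmin}(q_2)$) being exactly what $H>H^{tr}_{min}$ rules out. Both arguments are sound; yours is shorter and more elementary, and it correctly identifies the role of the definition (\ref{eq:htrmin}). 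What the paper's version buys in exchange is an explicit, strictly monotone $H_{2}^{lower-tr}(H)$ and the stronger quantitative statement that on the level sets with $H_2\in[0,H_{2}^{lower-tr}(H)]$ all impacts are uniformly transverse (normal velocity bounded away from zero), rather than merely that no tangency occurs.
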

 \begin{proof} First we show that for a fixed \(H_{2}\)  value and sufficiently large \(H\), the  wall   (\ref{eq:wavywal}) intersects transversely the horizontal boundaries of all the PRLs \(R^{(k_{1},k_2)}(H-H_{2},H_{2})\)  so the level set \((H-H_{2},H_{2})\) is impacting and \(H_{2}\in H_2^{m}(H)\). We then find, for a fixed \(H_{2}\), a lower bound to the energy, \(H^{tr}(H_2)\), above which all impacts with the PRLs  \(R^{(k_{1},k_2)}(H-H_{2},H_{2})\)  are transverse, and establish that \(H^{tr}(H_2)\) is strictly monotone in \(H_{2}\) and is finite for \(H_2=\min_{q_2} V_2 (q_2)=0\). Then,  setting \(H^{tr}_{min}=H^{tr}(0)\) and \(H_{2}^{lower-tr}(H)\) to be the inverse function of  \(H^{tr}(H_2)\) we complete the proof by proving that   \(H^{lower-tr}(H_2)>0\).

For a fixed \(H_{2}\) value, since, by the S3BN assumption \(H_{2}\) has bounded level sets,  the vertical projection of the PRLs,  \([q^{k_{2}}_{2,min}(H_2),q^{k_{2}}_{2,max}(H_2)]\), is finite for all \(k_{2}=1,..,K_2(H_2) \). Let \( \mathcal{L}(H_{2}):=\cup_{k_{2}=1,..K_{2}(H_{2})}[q^{k_2}_{2,min}(H_2),q^{k_2}_{2,max}(H_2)]\) denote this projection of the \(H_{2}\) level set to the \(q_2\) axis. Since \(Q(q_{2})\) is \(C^{r+1}\), the wall intersection with the PRLs,  \(\{q_{w}(q_2)=(\epsilon _{w}Q(q_{2}),q_{2})\}_{q_{2}\in\mathcal{L}(H_{2})}\),  lies  within the bounded vertical strip   \(q_{1}\in[\min_{q_{2}\in\mathcal{L}(H_{2})}\epsilon _{w}Q(q_{2}),\max_{q_{2}\in\mathcal{L}(H_{2})}\epsilon _{w}Q(q_{2})]\). By the S3BN assumption on the growth of \(V_{1}\), the PRLs widths increase to infinity with \(H_{1}=H-H_{2}\), namely \(q^{k_{1}}_{1,min}(H-H_2)\rightarrow-\infty,\ q^{k_{1}}_{1,max}(H-H_2)\rightarrow\infty\) for all \(k_{1}=1,..,K_1(H-H_2) \) (the growth rate with \(H\) depends on the form of \(V_1\)). Thus, for sufficiently large \(H\), the vertical strip that bounds the wall intersects  each of the PRLs and the wall  divides each of the PRL vertically to two regions, one of which is inside the billiard.

The impact with the wall  along the wall intersections with the PRLs \(R^{k}(H-H_{2},H_{2})\)   is:
\begin{equation}
\begin{array}{ll}
|(p_{1},p_2)\cdot \hat n\vert_{q_{w}(q_{2},H),q_{2}\in\mathcal{L}(H_{2})} &=  \frac{|\pm\sqrt{2(H-H_{2}- V_1(\epsilon _{w}Q(q_2)))}\pm\epsilon_{w}Q'(q_{2})\sqrt{2(H_{2 }- V_2(q_2))}|}{\sqrt{1+(\epsilon_{w}Q'(q_{2}))^{2}}}
\end{array}
\end{equation}
where, as we show next, the first term is defined for the energies that we consider. Indeed, defining \(C_{1}(H_2)=H_{2}+\max_{q_{2}\in\mathcal{L}(H_{2})} V_1(\epsilon _{w}Q(q_2))\), \( C_{2}(H_2)=H_{2}-\min_{q_{2}\in\mathcal{L}(H_{2})} V_2(q_2)=H_{2 }\) (since the global minimizers of \(V_{2}\) are in  \(\mathcal{L}(H_{2})\) for all \(H_2\geq0\)),  \(  C_{3}(H_{2})=\max_{_{q_{2}\in\mathcal{L}(H_{2})}} |\epsilon _{w}Q'(q_{2})|\) and   \(H^{tr}(H_{2})=C_{1}(H_2)+C_{2}(H_2)\left({C_{3}(H_2)}\right)^2\), the impacts with the wall are defined and transverse for all \(q_{2}\in\mathcal{L}(H_{2})\)  provided \({H> H^{tr}(H_{2})}\) :
\begin{equation}
\begin{array}{ll}
|(p_{1},p_2)\cdot \hat n\vert_{q_{w}(q_{2},H),q_{2}\in\mathcal{L}(H_{2})} & \geqslant \frac{|\sqrt{2(H-C_{1}(H_2))}-C_{3}(H_{2})\sqrt{2C_{2}(H_2)}|}{\sqrt{1+(C_{3}(H_{2}))^{2}}}|_{H> H_{tr}(H_{2})}>0. \\

\end{array}
\end{equation}
     Since the intervals \( \mathcal{L}(H_{2})\) are nested, for all \(H_{2}\in[0,H]\) the  functions     \(C_{1,2,3}(H_2) \) (and thus \(H_{tr}(H_{2})\)) are  bounded, continuous, piecewise smooth and monotone increasing functions of \(H_2\). Moreover, by its definition, \(C_{1}(H_2)\)  is  strictly monotone for all \(H_{2}\geq0\) and thus so is   \(H^{tr}(H_{2})\). At \(H_{2}=0,\)   \(H_{tr}(0)=\max _{i}V_{1}(\epsilon _{w}Q(q_{2,i}^{min}))=H^{tr}_{min}\) of Eq. (\ref{eq:htrmin}), where \(q_{2,i}^{min}\) are the global minimizers of \(V_2\) (the horizontal periodic orbits become transverse once \(\epsilon _{w}Q(q_{2,i}^{min})\) is in the range of \(V_{1}\)). By the strict monotonicity of  \(H^{tr}(H_{2})\) we obtain that  \(H^{tr}(H_{2})>H_{tr}(0)\) for all \(H_{2}>0 \). Hence, for   \(H>H^{tr}_{min}\), the function  \(H_{2}^{lower-tr}(H)\), the inverse function of \(H^{tr}(H_{2}),\) satisfying \(H^{tr}(H_{2}^{lower-tr}(H))=H\), is well defined and is strictly monotone, so  \(H_{2}^{lower-tr}(H)>0\) for all \(H>H^{tr}(0)\).  Thus, for all    \(H>H^{tr}_{min}\), the \(H_{2}\) interval  \([0,H_{2}^{lower-tr}(H)]\) consists of  level sets at which all impacts are transverse and it is included in     \( H_2^{m}(H)\setminus H_{2}^{tan}(H)\) so this set is of positive measure as claimed.     \end{proof}

 The transverse impact set      \( H_2^{m}(H)\setminus H_{2}^{tan}(H)\) may include additional intervals to   \([0,H_{2}^{lower-tr}(H)]\).
For example, it may include neighborhoods of horizontal normal modes which are far from the global minimizers of \(V_{2}\) and neighborhoods of vertical normal modes which are shadowed by the wall (i.e.  leaves close to those of \((V_1(q_1^{min}),H-V_1(q_1^{min}))\) centered at   \((q_1^{min},q_2^{ext,j})\)  where \(V_{1}'(q_1^{min})=0,V_{1}''(q_1^{min})>0\),  the line \(q_{1}=q_1^{min}\) intersects the wall transversely at a finite positive number of isolated points and \(q_2^{ext,j}\) are the extremizers of \(V_2\)).

\begin{figure}
\begin{centering}
\includegraphics[scale=0.45]{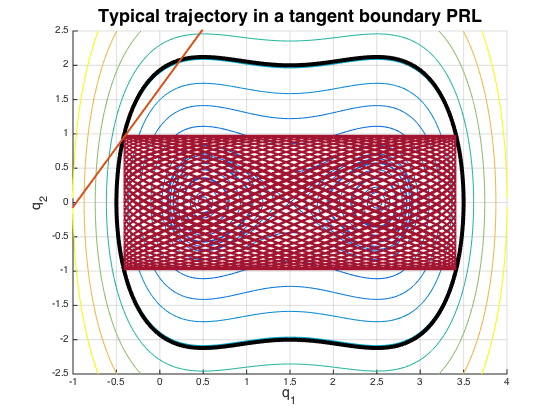}\includegraphics[scale=0.45]{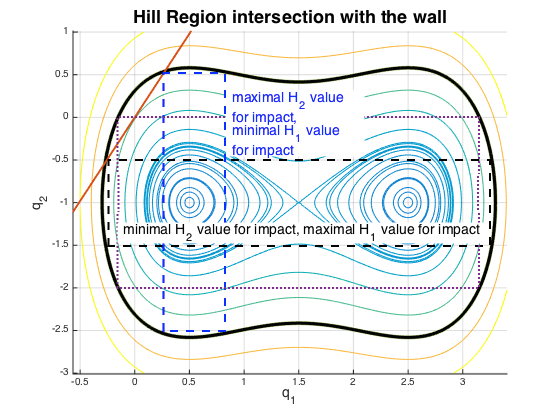}
\protect\caption{\label{fig:Hill-impact} Configuration space figures of the Duffing-Center system with potential level lines and the addition of a (general) slanted wall. The boundary of the Hill region is depicted in bold black. (a) A typical trajectory in a tangent boundary PRL - the only intersection between the wall (in orange) and the rectangle filled by the trajectory is on the boundary of the Hill region, i.e. at the PRL boundary. (b) Intersection between the Hill region and a slanted wall in the configuration space. The wall (orange line) intersects \(\mathcal{D}_2^{Hill}(H)\) boundary along the segment \(S_{w}(H)\). The projected rectangles of the two leaves \((H-H_{2},H_{2})\) with corners at this segment end points (dashed rectangles) are drawn schematically and have a single tangent trajectory and no transverse impacts, so these are tangent boundary leaves. The PRL  which is intersected transversely by the wall at neighboring PRL boundaries   (dotted rectangles) is a tangent leaf: it includes segments with transverse impacts  and tangent segments.    }\end{centering}
\end{figure}

\begin{figure}
\begin{centering}
\includegraphics[scale=0.45]{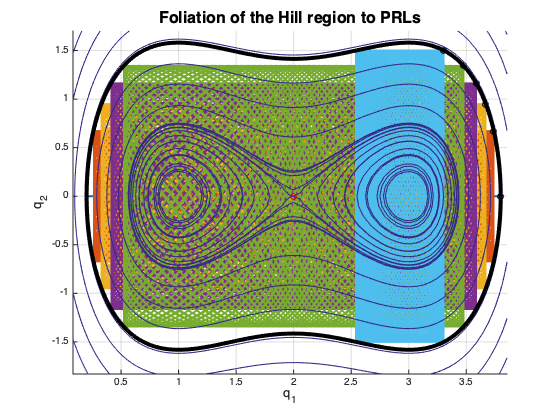}
\protect\caption{\label{fig:Hill-rectangles} Non-resonant iso-energy trajectories of the Duffing-Center integrable system (\ref{eq:doublewell}) in the configuration space. $H=1$, and the boundary of the Hill region is marked in black. The trajectories densely fill rectangles in the configuration space corresponding to their initial $H_2$ values - these are different PRLs of the system.}\end{centering}
\end{figure}

\begin{figure}
\begin{centering}
\includegraphics[scale=0.35]{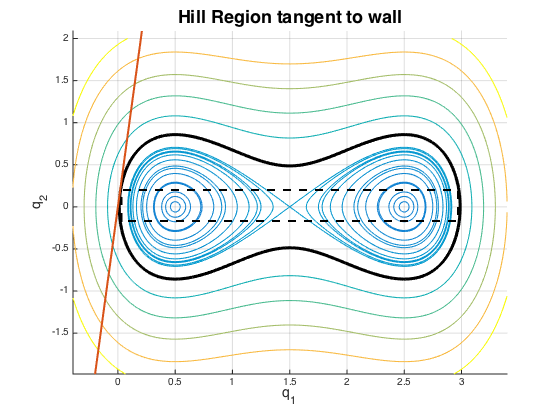} \includegraphics[scale=0.35]{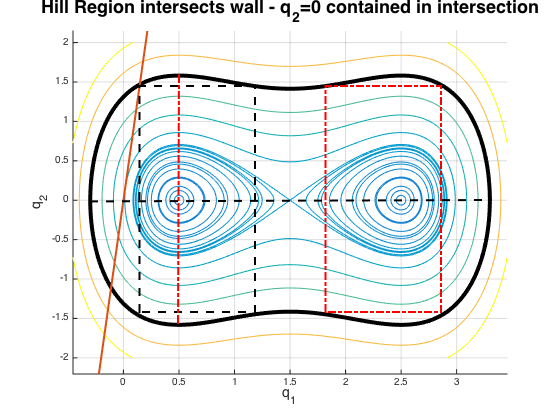} \includegraphics[scale=0.24]{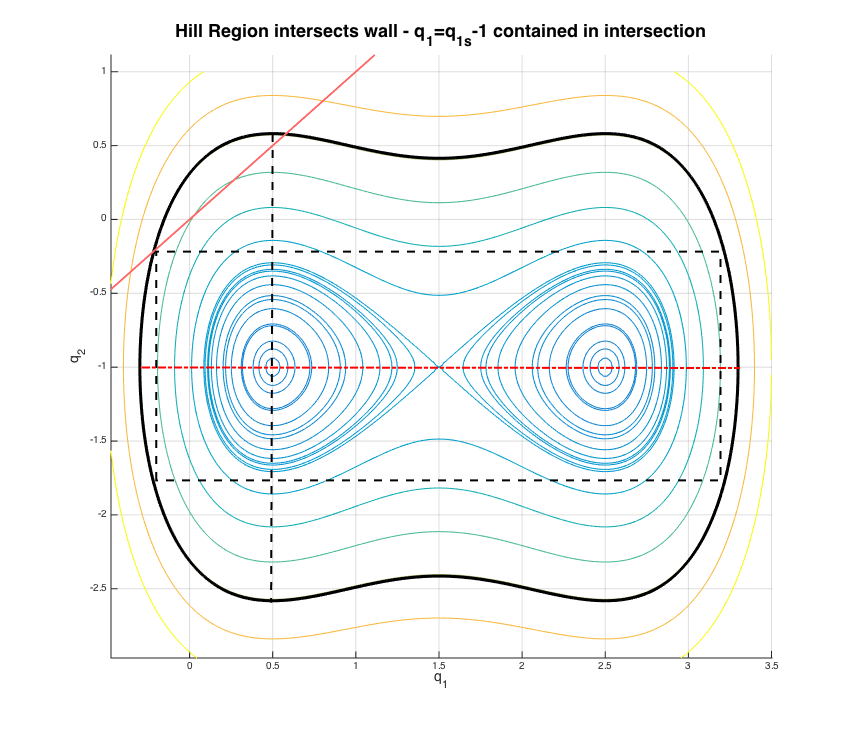}
\protect\caption{\label{fig:Hill-edges} Intersections of the Hill region with the wall - see subsections \ref{subsec:impactzones}, \ref{subsec:slantedwall} for notations.   (a) At \(H=H^{tmin}_{min}\) the Hill region is tangent to the wall at the corner point of  the PRL of the boundary tangent leaf. (b) The horizontal periodic orbit impacts the wall $(Q(q_{2c}),q_{2c})\in S_{w}(H)$, i.e.
 $H>H^0$ (see (\ref{eq:h0})). (c) The vertical periodic orbit impacts the wall $(q_{1s}-1,Q^{-1}(q_{1s}-1))\in S_{w}(H)$, i.e.  $H>H^{1,-}$ .}\end{centering}
\end{figure}

\label{sec:thmproofs}

 \subsection{The Duffing-Center potential and a slanted wall}\label{subsec:slantedwall}

The proof of Proposition \ref{thm:global} provides a constructive methodology for identifying the  non-impact, tangency and transverse impact zones in the IEMBD. Similar, more detailed calculations regarding the impacts with each family of PRLs can be made to construct the IFG.  We demonstrate the application of this construction for finding the IEMBD to the case of the DC  Hamiltonian (\ref{eq:doublewell}) with a slanted wall (\(q_{w}(q_2)=\cot \alpha\cdot q_2,q_2)\) with \(\alpha\in(0,\frac{\pi}{2})\)).  For concreteness, we consider the Hamiltonian (\ref{eq:doublewell}) when the extrema points of the  Duffing-Center potential are within the billiard domain:\begin{equation}\label{eq:extremalinslantedbil}
q_{1s}\geq1,\quad q_{1s}-q_{2c}\cot \alpha >1\quad\ \omega=1
\end{equation}
and choose parameters in a certain range so that the critical energies obey certain ordering; Let:
\begin{equation}
H^{tmin,dc}(q_{2})=V_{1}(\cot\alpha \cdot q_{2})+V_{2}(q_2) =
\frac{1}{4}\cdot(\cot\alpha \cdot q_2-q_{1s})^{4}-\frac{1}{2}\cdot(\cot\alpha \cdot q_2-q_{1s})^{2}+\frac{\omega^{2}}{2}\cdot(q_{2}-q_{2c})^{2}
\label{eq:h0}
\end{equation}and\begin{equation}\begin{array}{lll}
H^{tmin,dc}_{min}& =\min_{q_{2}} H^{tmin,dc}(q_{2})
\\
H^{0}
&=H^{tmin,dc}( q_{2c})&=V_{1}(\cot\alpha \cdot q_{2c})=H^{tr,dc}_{min}\\
H^{1,\pm}&=H^{tmin,dc}((q_{1s}\pm1)\cdot\tan\alpha)&=-\frac{1}{4}+\frac{\omega^{2}}{2}\cdot((q_{1s}\pm1)\cdot\tan\alpha-q_{2c})^{2}\\
H^{1,0}&=H^{tmin,dc}(q_{1s}\cdot\tan\alpha)&=\frac{\omega^{2}}{2}\cdot(q_{1s}\cdot\tan\alpha-q_{2c})^{2}.
\end{array}
\label{eq:h0critval}
\end{equation}Since
\(H^{tmin,dc}(q_{2})\) is an asymmetric quartic polynomial in \(q_{2}\) with positive \(q_{2}^{4}\) coefficient,  \(H^{tmin,dc}_{min}\) is finite, and
\(H^{tmin,dc}(q_{2})\) has at most one additional local maximum and minimum. For concreteness we choose \(\alpha\in(\arctan\frac{1}{\omega},\frac{\pi}{2})\)  so that
\(H^{tmin,dc}(q_{2})\) has only one extremum (thus a global minimum) and the other parameters are chosen such that  the critical energies of the system, given by Eqs. (\ref{eq:h0}) satisfy the ordering:\begin{equation}\label{eq:horder1}
H^{tmin,dc}_{min}<H^{0}<H^{1,-}<H^{1,0}<H^{1,+}.
\end{equation}Other cases can be similarly analyzed.  These open conditions are satisfied for  \((\alpha,q_{1s},q_{2c},\omega)=(\frac{\pi }{2}-0.1,2,0,1)\), 
as in Fig \ref{fig:EMBD-impact},  and thus are also satisfied for a neighborhood of these parameters.

We establish that the structure of the impact zones changes non-trivially with \(H\) and that eventually, for \(H>H^{1,+}\)  the slanted wall intersects transversely every PRL of every level set \((H-H_{2},H_2)\) in the allowed region of motion. Yet, surprisingly, for  sufficiently large \(H\), a large portion of these level sets also admit tangencies (see item 6 in the Proposition and Figure \ref{fig:EMBD-impact}d):\begin{prop}
\label{prop:globalslanted} The impact zones  of  the  \(\alpha-\)slanted Duffing-Center  system, with  \(\alpha\in(\arctan\frac{1}{\omega},\frac{\pi}{2})\), for the cases in which the potential extremal points are within the billiard  domain (Eq (\ref{eq:extremalinslantedbil})),  and  the critical energies satisfy the ordering (\ref{eq:horder1}), have the following properties:\begin{enumerate}
\item
For all \(H\in [-\frac{1}{4},H^{tmin,dc}_{min})\) all the level sets are in the non-impact zone and the HIS energy surface coincides with the energy surface of  (\ref{eq:doublewell}).  \item  For  \(H\in [H^{tmin,dc}_{min},H^{0})\), the tangency and the impact zones coincide and correspond to a single \(H_{2}\) interval of a positive length:      \(H_{2}^{tan}(H)=H_2^{m}(H)=[V_2(q_2^a(H)),H-V_1(\cot\alpha \cdot q_2^b(H))]\), where \(q_2^{a,b}(H)\) are the two consequent zeroes of \(H^{tmin,dc}(q_2^{a,b}(H))=H\). The non-impact zone is divided to lower and upper segments of positive lengths: \( H_{2}^{non-impact}(H)=[0,V_2(q_2^a(H))\cup(H-V_1(\cot\alpha \cdot q_2^b(H)),H+\frac{1}{4}].\)
\item
For  \(H\in [H^{0},H^{1,-})\), the non-impact zone consists of a single segment: \( H_{2}^{non-impact}(H)=(H-V_1(\cot\alpha \cdot q_2^b(H)),H+\frac{1}{4}] \),  and the tangent zone lower boundary separates from the impact zone lower boundary:  \(H_{2}^{tan}(H)=[H_{2}^{tan,1}(H),H-V_1(\cot\alpha \cdot q_2^b(H))]\subset H_2^{m}(H)=[0,H-V_1(\cot\alpha \cdot q_2^b(H))]\), where \(0<H_{2}^{tan,1}(H)<H-V_1(\cot\alpha \cdot q_2^b(H)).\)
\item
For \(H>H^{1,-}\) the tangent set  consists of an interval which has positive length and is strictly within the  impact zone, which coincides with the allowed region of motion: \  \(H_{2}^{tan}(H)=[H_{2}^{tan,1}(H),H_{2}^{tan,2}(H)]\) \(\subset H_2^{m}(H)=[0,H+\frac{1}{4}]\).

\item
For  \(H\in [H^{1,-},H^{1,0}]\) there are no non-impact level sets,  yet all the right branch of the IFG has no-impact leaves. For  \(H\in (H^{1,0},H^{1,+})\)  impacts of the right IFG branch emerge and for   \(H>H^{1+}\) all leaves on the energy surface are impact leaves.

\item For any fixed   \(\alpha\in(0,\frac{\pi}{2})\), for all parameter values, for sufficiently large \(H\), the relative measure of the tangency
zone  is proportional to $\sin^{2}\alpha$:  \(|H_{2}^{tan}(H)|/|H_{2}^{m}(H)|=\sin^{2}\alpha\cdot(1+O(\frac{1}{\sqrt{H}}))\). \end{enumerate}
\end{prop}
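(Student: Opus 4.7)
The plan is to apply Proposition~\ref{thm:global} directly and then carry out an explicit geometric tracking of how the wall sweeps through the Hill region as $H$ grows, exploiting the very specific form of $V_1,V_2$ in \eqref{eq:doublewell}. Set $L_w(H)=[q_2^a(H),q_2^b(H)]$ for the projection to the $q_2$-axis of $S_w(H)$, whose endpoints are the two consecutive roots of $H^{tmin,dc}(q_2)=H$. The assumption $\alpha\in(\arctan(1/\omega),\pi/2)$ makes $H^{tmin,dc}$ a quartic with a \emph{unique} global minimum $q_2^{w-min}$, so $L_w(H)$ is a single interval that grows monotonically with $H$, and the ordering \eqref{eq:horder1} encodes exactly when the geometrically distinguished points $q_{2c}$, $(q_{1s}-1)\tan\alpha$, $q_{1s}\tan\alpha$, $(q_{1s}+1)\tan\alpha$ enter $L_w(H)$ (at $H^0$, $H^{1,-}$, $H^{1,0}$, $H^{1,+}$ respectively).

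For items 1--5, I would feed this evolution of $L_w(H)$ into Lemma~\ref{lem:h2mh2tan} and Lemma~\ref{lem:wallfixedpoints}. Item 1 is immediate from Lemma~\ref{lem:exteriorPRL} together with \eqref{eq:extremalinslantedbil}, which guarantees this is an interior GWS whose PRL minima already lie in the billiard. For items 2--4, the key observation is that at the endpoints of $L_w(H)$ one has $V_1(\cot\alpha\, q_2^{a,b})+V_2(q_2^{a,b})=H$, hence $H_2^t(q_2^{a,b},H)=V_2(q_2^{a,b})=H-V_1(\cot\alpha\, q_2^{a,b})$; so the endpoints of $L_w(H)$ are tangent boundary leaves and $H_2^{tan}(H)$ and $H_2^m(H)$ share a common lower endpoint precisely as long as the minimum of $V_2$ (at $q_{2c}$) has not yet entered $L_w(H)$, i.e.\ for $H<H^0$. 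Item 5 follows by the same bookkeeping applied to the two branches of the IFG of the $H_1$-subsystem, which only begin to admit wall-cut leaves once the respective elliptic points $(q_{1s}\pm1,0)$ have corresponding pre-images in $L_w(H)$.

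The main technical step is item 6. I would use the explicit slanted-wall form of \eqref{eq:tanghh2}, namely $H_2^t(q_2,H)=\cos^2\alpha\cdot V_2(q_2)+\sin^2\alpha\cdot(H-V_1(\cot\alpha\, q_2))$, and compute both extrema of this function over $L_w(H)$ for $H\to\infty$. The dominant term of $H^{tmin,dc}$ at infinity is $\tfrac14\cot^4\alpha\cdot q_2^4$, so $q_2^{a,b}(H)=\pm(4H)^{1/4}\tan\alpha\cdot(1+o(1))$; at these endpoints $H_2^t=V_2(q_2^{a,b})=O(\sqrt H)$. At the interior points $q_2=(q_{1s}\pm1)\tan\alpha$ (the $V_1$-minima, which lie in $L_w(H)$ for $H>H^{1,+}$) one obtains $H_2^t=\sin^2\alpha\,(H+\tfrac14)+O(1)$. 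Since $\sin^2\alpha\cdot H\gg\sqrt H$, the maximum of $H_2^t$ on $L_w(H)$ is $\sin^2\alpha\,H+O(1)$ and its minimum is $O(\sqrt H)$, whereas $|H_2^m(H)|=H+\tfrac14$; dividing gives the claimed ratio $\sin^2\alpha\,(1+O(1/\sqrt H))$.

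The most delicate point will be showing that the interior critical equation $\cos\alpha\cdot\omega^2 q_2=\sin\alpha\cdot V_1'(\cot\alpha\, q_2)$ has no spurious minimum of $H_2^t$ that would beat the $O(\sqrt H)$ boundary value and spoil the asymptotics. This is where I expect to spend real effort: I would localize the roots of that equation (a cubic in $\cot\alpha\,q_2$ against a linear term) and verify via the sign of $\partial_{q_2}^2 H_2^t=\cos^2\alpha\,\omega^2-\sin\alpha\cos\alpha\cdot V_1''(\cot\alpha\,q_2)$ that interior critical points near the $V_1$-minima are maxima while any remaining interior critical point (near the $V_1$-saddle $q_{1s}\tan\alpha$) is a local minimum whose value is still $\sin^2\alpha\,H+O(1)$, hence does not compete with the $O(\sqrt H)$ boundary minimum. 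Once this is established, the measure computations and the ratio statement follow immediately.
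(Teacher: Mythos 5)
Your proposal follows essentially the same route as the paper: items 1--5 by tracking the growth of $L_w(H)$ and feeding it into Lemmas \ref{lem:wallfixedpoints} and \ref{lem:h2mh2tan} (with the endpoint identity $H_2^t(q_2^{a,b},H)=V_2(q_2^{a,b})=H-V_1(\cot\alpha\,q_2^{a,b})$ doing the work in items 2--4), and item 6 by the same asymptotics $q_2^{a,b}(H)\sim\pm\tan\alpha\,(4H)^{1/4}$ with boundary value $O(\sqrt H)$ versus interior value $\sin^2\alpha\,H+O(1)$ --- the paper dispatches your ``spurious interior minimum'' worry at once by writing $H_2^t=\sin^2\alpha\,H+G(q_2)$ with $G$ a fixed quartic in $q_2$ with negative leading coefficient, so every interior critical value is $\sin^2\alpha\,H+O(1)\gg O(\sqrt H)$ and cannot compete with the boundary minimum. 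One small correction to your item-5 sketch: the right IFG branch first acquires impacting leaves when the wall crosses the saddle line $q_1=q_{1s}$ (i.e.\ at $H^{1,0}$, when $q_{1s}\tan\alpha$ enters $L_w(H)$), not when the elliptic point $(q_{1s}+1,0)$ gets a preimage in $L_w(H)$; the latter threshold $H^{1,+}$ is when \emph{all} right-branch leaves become impacting.
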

\begin{proof}
Recall that the boundaries of \(H_{2}^m(H)\) are found by minimizing \(V_2(q_2)\) and maximizing \(H-V_1(\epsilon _{w}Q(q_2))\) over the segments  \(S_w(H)\)  (Eq. (\ref{eq:h2mdef})). Similarly, the boundaries of the  tangency zone are found by minimizing and maximizing \(H_{2}^{t}(q_{2,}H)\) over   \(S_w(H)\)    (Eq. (\ref{eq:tanghh2},\ref{eq:h2tanminmaxform})). To prove the proposition we establish some properties of    \(S_w(H)\) and then find the lower and upper bounds of \(H_{2}^{m,tan}(H)\) .

The segments  \(S_w(H)\), or more precisely, their projections to the \(q_{2}\) axis, \(L_{w}(H)\), are found by solving the quartic inequality: \begin{equation}
\begin{array}{ll}
H^{tmin,dc}(q_{2})&\leqslant H.
\end{array}\label{eq:swdoublewell}
\end{equation}
Computation of the second derivative of \(H^{tmin,dc}(q_{2})\)  shows that for \(\omega^2\tan^{2}\alpha>1\) it is always positive, thus \(H^{tmin,dc}(q_{2})\) has a global minimum and no other extrema. Using the inequality (\ref{eq:extremalinslantedbil}) a simple computation shows that \(H^{tmin,dc'}(q_{2c})<0\) and as \(H^{tmin,dc'}\) is monotone increasing, the global minimizer of \(H^{tmin,dc}(q_{2})\) is above the minimizers of \(V\), namely, \(q_{2}^{w-min}>q_{2c}\).  Thus (\ref{eq:swdoublewell}) is satisfied on a single interval, \(L_{w}(H)=[q_2^a(H),q_2^b(H)]\) which grows in size with \(H \)  as   \(\frac{dq_2^{a}}{dH}<0,\frac{dq_2^{b}}{dH}>0\) and\  \begin{equation}
\min_{q_{2}\in L_{w}(H)}H^{tmin,dc}(q_{2})=H^{tmin,dc}(q_{2}^{w-min})=H^{tmin,dc}_{min}, \quad q_{2}^{w-min}>q_{2c}
\end{equation}
By the definition of  \(H^{tmin,dc}(q_{2})\), for \(H<H^{tmin,dc}_{min}\), the wall does not intersect the Hill region. Since, by (\ref{eq:extremalinslantedbil}), the two global minima of \(V\) are within the billiard domain, and thus \(H^{tmin,dc}_{min}\) is larger than the global minima, the corresponding leaves are in the billiard by claim 2 of Proposition \ref{thm:global} and the first item of the proposition follows.

To prove items 2-4, we calculate/give bounds on  the lower and upper boundaries of \(H_{2}^m(H),H_{2}^{tan}(H)\) for the different regimes of \(H\).

\noindent\textit{Lower boundary of \(H_{2}^m(H)\) }: The lower boundary of \(H_{2}^m(H)\)  is \(\min_{q_2\in L_{w}(H)} V_2(q_2)\). By lemma \ref{lem:prlwall}, \(H^{0}\) is the energy above which \(S_{w}(H)\) includes the wall point \((\cot{\alpha}\cdot{}q_{2c},{}q_{2c})\). Since we showed that  \(q_{2}^{w-min}>q_{2c}\), at this energy $q_2^a(H^0)=q_{2c}$. This lower boundary  of \(L_{w}(H)\) is above \(q_{2c}\) for \(H<H^{0}\) and below it for  \(H>H^{0}\). Hence, the potential \(V_{2}\) is monotonically decreasing along  \( L_{w}(H)\)  for \(H<H^{0}\), so \( \min_{q_{2}\in L_{w}(H)}V_2(q_2)= V_{2}(q_2^a)\) and   \(\frac{d}{dH}V_{2}(q_2^a(H))<0\). At \(H^{0}\) the function  \(V_{2}(q_2^a(H))\) reaches its minimum (0) and for  \(H>H^{0}\),
  \(V_2\)  increases along \(q_2^a(H)\) while  \( \min_{q_{2}\in L_{w}(H)}V_2(q_2)= V_{2}(q_{2c})=0\).  Thus,  for \(H<H^{0}\) the lower boundary of  \(H_{2}^m(H)\) is \(V_{2}(q_2^a(H))\) and for all  \(H>H^{0}\) it vanishes: \begin{equation}\label{eq:lowbm}
\text{lower boundary of } H_{2}^m(H)=\begin{cases} V_{2}(q_2^a(H)) & H\in[H^{tmin,dc}_{min},H^{0}) \\
0  & H>H^{0}
\end{cases}
\end{equation}
\\
\noindent\textit{Lower boundary of \(H_{2}^{tan}(H)\) }: To minimize \(H_{2}^{t}(q_{2,}H)\) on \(L_{w}(H)\), notice that from Eq. (\ref{eq:tanghh2})       \(H_{2}^{t}(q_{2,}H)_{q_{2}\in L_{w}^{j}(H)\backslash\partial L_{w}^{j}(H)}> V_2(q_2)\geq0\) and that \(H_{2}^{t}(q_2,H)_{q_{2}\in\partial L_{w}^{j}(H)}=V_2(q_{2})_{q_{2}\in\partial L_{w}^{j}(H)}\), hence, for  \(H<H^{0}\), where \(V_2(q_2)\) is monotonically decreasing on \(L_{w}(H)\), we obtain that \(\min _{q_{2}\in L_{w}(H)} H_{2}^{t}(q_{2},H)=V_2(q_2^{a}(H))\) and thus the lower boundary of  \(H_{2}^{tan}(H)\) and of \(H_{2}^m(H)\) coincide for \(H\in[H^{tmin,dc}_{min},H^{0})\). On the other hand, for \(H>H^{0}\),
since \(V_2(q_2^{a,b}(H))>0\)  and      \(H_{2}^{t}(q_{2,}H)_{q_{2}\in L_{w}^{j}(H)\backslash\partial L_{w}^{j}(H)}> V_2(q_2)\geq0\)   we get that the minimal value of \(H_{2}^{t}(q_{2,}H)\) is achieved in an interior point of \( L_{w}(H)\):\begin{equation}
0<H_{2}^{tan,1}(H):=\min _{q_{2}\in L_w(H)} H_{2}^{t}(q_{2},H)<\min \{V_{2}(q_{2}^{a}),V_{2}(q_{2}^{b})\}
\end{equation} \begin{equation}
\text{lower boundary of } H_{2}^{tan}(H)=\begin{cases}V_{2}(q_2^a(H)) & H\in[H^{tmin,dc}_{min},H^{0}) \\
H_{2}^{tan,1}(H)  & H>H^{0} \\

\end{cases}
\label{eq:lowbtan}\end{equation}

\noindent\textit{Upper boundary of \(H_{2}^m(H)\) }:
The upper boundary of \(H_{2}^m(H)\)  is \(\max_{q_2\in L_{w}(H)}(H-V_1(\cot\alpha\cdot q_{2}))=H-\min_{q_2\in L_{w}(H)} V_1(\cot\alpha\cdot q_2)\). Recall that \(H^{1,\pm},H^{1,0}\) of Eq. (\ref{eq:h0}) are the energies above which \(S_{w}(H)\) includes the wall points \((q_{1s}\pm1,(q_{1s}\pm1)\cdot\tan\alpha)\) and  \((q_{1s},q_{1s}\cdot\tan\alpha)\), respectively.
For \(H<H^{1,-}\), the potential \(V_{1}\) is monotonically decreasing along \(L_{w}(H)\), so  the upper boundary of \(H_{2}^m(H)\) is \(H-V_1(\cot\alpha\cdot q_{2}^{b})\). For \(H>H^{1,-}\), the the global minimzer of \(V_{1}\) is included in the interval so \(\max_{q_2\in L_{w}(H)}(H-V_1(\cot\alpha\cdot q_{2}))=H+\frac{1}{4} \), namely the upper boundary coincides with the boundary of the allowed region of motion. \begin{equation}
\text{upper boundary of } H_{2}^m(H)=\begin{cases} H-V_1(\cot\alpha\cdot q_{2}^{b}(H)) & H\in[H^{tmin,dc}_{min},H^{1,-}) \\
H+\frac{1}{4}  & H>H^{1,-} \\

\end{cases}
\label{eq:upboundm}\end{equation}

\noindent\textit{Upper boundary of \(H_{2}^{tan}(H)\) }: Rewriting Eq. (\ref{eq:tanghh2}) as: \begin{displaymath}
H_{2}^{t}(q_{2,}H) =H-V_{1}(\epsilon _{w}Q(q_2))+\frac{(\epsilon _{w}Q'(q_{2}))^{2}(H^{tmin}(q_{2})-H)}{1+(\epsilon _{w}Q'(q_{2}))^{2}}=H-V_{1}(\cot\alpha\cdot q_{2})+\cos^{2}\alpha \cdot(H^{tmin}(q_{2})-H)
\end{displaymath} we see that\begin{equation}
H_{2}^{t}(q_{2,}H)_{q_{2}\in L_{w}^{j}(H)\backslash\partial L_{w}^{j}(H)}< H-V_{1}(\cot\alpha\cdot q_{2})\leqslant  H+\frac{1}{4}\label{eq:interiorh2t}
\end{equation}      and  \(H_{2}^{t}(q_2^{a,b}(H),H)_{}=H-V_{1}(\cot\alpha\cdot q_2^{a,b}(H)))=V_2(q_2^{a,b}(H))\). For \(H<H^{1,-}\), the potential \(V_{1}\) is monotonically decreasing along \(L_{w}(H)\), and thus the upper boundary of \(H_{2}^m(H)\) and \(H_{2}^{tan}(H)\) coincides and is given by     \(H-V_1(\cot\alpha\cdot q_{2}^{b}(H))\). For \(H^{1,-}<H< H^{1,+}\), or \(H>H^{1,+}\), since \(H_{2}^{t}(q_2^{a,b}(H),H)<H+\frac{1}{4}\), by Eq. (\ref{eq:interiorh2t}) the maximal value of \(H_{2}^{t}(q_{2,}H)_{q_{2}\in L_{w}^{j}(H)}\) is smaller than \(H+\frac{1}{4}, \) which is the upper boundary of  \(H_{2}^{tan}(H)\). At \(H=H^{1,\pm}\), since by definition, \(V_1(\cot\alpha\cdot q_{2}^{b}(H^{1,\pm}))=V_1(q_{1s}\pm1)=-\frac{1}{4}\) the tangency and impact zone upper boundary coincide. Denoting this maximal value by \(H_{2}^{tan,2}(H)\), where \begin{equation}
\max\{V_{2}(q_2^{a}(H),V_{2}(q_2^{b}(H)\}<H_{2}^{tan,2}(H)<H+\frac{1}{4}
\end{equation} we establish:
\begin{equation}
\text{upper boundary of } H_{2}^{tan}(H)=\begin{cases} H-V_1(\cot\alpha\cdot q_{2}^{b}(H)) & H\in[H^{tmin,dc}_{min},H^{1,-}) \\
H_{2}^{tan,2}(H)  & H>[H^{1,-},H^{1,+}) \\
H+\frac{1}{4}  & H=H^{1,+}\\
H_{2}^{tan,2}(H)  & H>H^{1,+}
\end{cases}
\label{eq:upbountan}\end{equation}

Items 2-4 follow from equations (\ref{eq:lowbm}-\ref{eq:upbountan}).

To prove item 5, notice that for \(H<H^{1,0}\) the segment \(S_{w}(H)\) does not include any wall points  with \(q_{1}>q_{1s}\), so the right edge of the IFG has no impacts with the wall for such values. For  \(H^{1,0}<H<H^{1,+}\), since the segment \(L_{w}(H)\) does not reach the wall point \((q_{1s}+1,(q_{1s}+1)\cdot\tan\alpha)\),   the PRLs of leaves of the IFG right branch centered at \((q_{1s}+1,q_{2c})\) with sufficiently small width do not impact. On the other hand, for \(H\geqslant H^{1,+}\)  the wall intersects this central singular PRL and all the  PRLs around it, namely all the PRLs of the right branch of the IFG. The left and outer branches of the IFG are also clearly intersected since the slanted wall slope is positive.

Finally, to establish item 6,
first notice that independent of the parameters, for sufficiently large \(H\) the end points of \(L_{w}(H)\) are determined by the quartic term in (\ref{eq:h0}), so,   \(q_2^{a,b}(H)\approx\tan\alpha  \cdot( q_{1s}\mp\left( 4H \right)^{1/4}\)). Also, notice that Eq. (\ref{eq:tanghh2}) for a slanted wall becomes:
\begin{equation}\label{eq:tanghh2slant}
H_{2}^{t}(q_{2,}H)=\sin^2(\alpha)H+\cos^2(\alpha)V_{2}(q_2) - \sin^2(\alpha)V_{1}(\cot\alpha \cdot q_{2}):=\sin^2(\alpha)H+G(q_{2}),
\end{equation}so for the DC potential  \(G\) is  quartic in \(q_{2}\) with negative \((q_{2})^{4}\) terms. Hence, there exists some  \(H^G\), such that for all  \(H>H^{G}\), the global maximum  of the function \(G(q_2)\), \(G_{max}=\max_{q_2\in L_{w}(H)} G(q_{2})\)  is realized within this interval. For all \(H>H^{G}\), for all \(q_2\in L_{w}(H)\),  \(H_{2}^{t}(q_{2},H)\leqslant\sin^2(\alpha)\cdot H+G_{max}\). On the other hand, for sufficiently large \(H,\) the minimal value of \(H_{2}^{t}(q_{2,}H)\) is realized at the interval boundaries, where     \(q_2^{a,b}(H)=\mp\tan\alpha  \cdot\left( 4H \right)^{1/4}(1+O(H^{-1/4}))\), so  \(H_{2}^{t}(q_2^{a,b}(H),H)=V_2(q_2^{a,b}(H))=\omega^{2}\tan^{2}\alpha  \cdot H^{1/2}(1+O(H^{-1/4}))\).
We conclude that \(H_{2}^{tan}(H)\approx[\omega ^{2}\tan^2(\alpha)\sqrt{H},\sin^2(\alpha)\cdot H+G_{max}]\) and thus \(|H_{2}^{tan}(H)|=\sin^2(\alpha)\cdot H(1+O(\frac{1}{\sqrt{H}})).\)
 On the other hand,   for sufficiently large \(H\), \(H_{2}^{m}(H)\) coincides with the allowed region of motion, namely \(H_{2}^{m}(H)=[0, \frac{1}{4}+H]\), and \(|H_{2}^{m}(H)|=H(1+O(\frac{1}{H}))\).

\end{proof}

Figure \ref{fig:EMBD-impact} shows the IEMBD for an \(\alpha\) value which is close to \(\frac{\pi}{2}\)  for various energy ranges. For intermediate  \(H\) values the impact zone (blue) and the tangency zone (green) are both of comparable sizes whereas for large \(H \), since here \(\sin^2(\alpha)=0.99\), only a very small fraction of the impact zone corresponds to transverse impacts.
At such large energies, the transverse impact zones correspond to segments close to  the "normal modes" - initial conditions that start near  stable periodic orbits of (\ref{eq:doublewell}) with almost all the kinetic energy concentrated at one of the partial systems. For large \(H\) these horizontal (near the \(H_{2}=0\) boundary) or vertical (near the \(H_{2}=H+\frac{1}{4}\) boundary) segments  cross the wall and thus the impacts close to them becomes transverse (and far from the PWS limits the periodic motion is destroyed). Thus, for a fixed \(\alpha\in(0,\frac{\pi}{2})\), for sufficiently large $H$ (in particular, \(H\gg\max\{H^{1,+},H^G\})\), the interval
of allowed $H_2$ values is divided to three sub intervals:
\begin{equation}\label{eq:h2devisionkargeh}
\begin{cases}
H_2\in[0,\mathcal{O}(\tan^2(\alpha)\sqrt{H})] & \mbox{transverse "horizontal" impacts}\\
H_2\in[\mathcal{O}(\sin^2(\alpha)\sqrt{H}),\sin^2(\alpha)H+\mathcal{O}(1)] & \mbox{tangencies and impacts }\\
H_{2}\in[\sin^2(\alpha)\cdot H+\mathcal{O}(1),H+\mathcal{O}(1)] & \mbox{transverse "vertical" impacts}
\end{cases}
\end{equation}
These zones are, in general, not invariant.  If \(\alpha\approx\pi/4\) it may happen that orbits will remain for a long time in the non-tangency zone hoping between near horizontal to near vertical motions. Such asymptotic results  may be similarly performed for other Hamiltonians of the form (\ref{eq:Hint}) and to a more general form of the wall.

\begin{figure}
\begin{centering}
\includegraphics[scale=0.45]{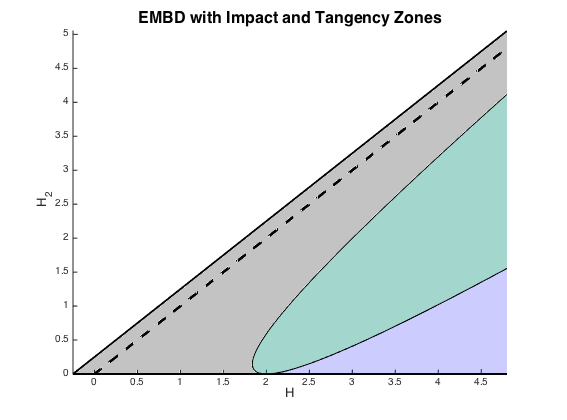}\includegraphics[scale=0.45]{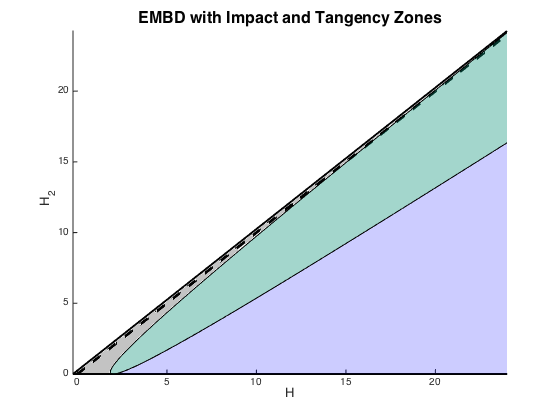}
\par\end{centering}

\begin{centering}
\includegraphics[scale=0.45]{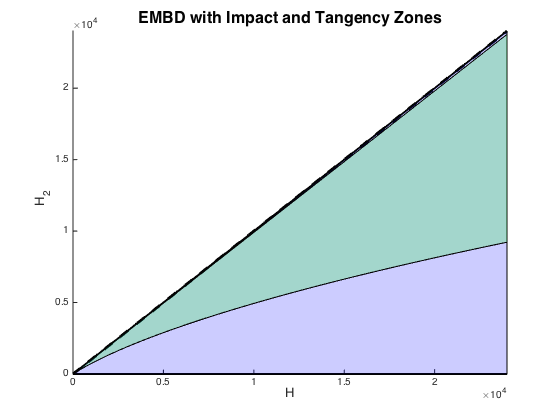}\includegraphics[scale=0.45]{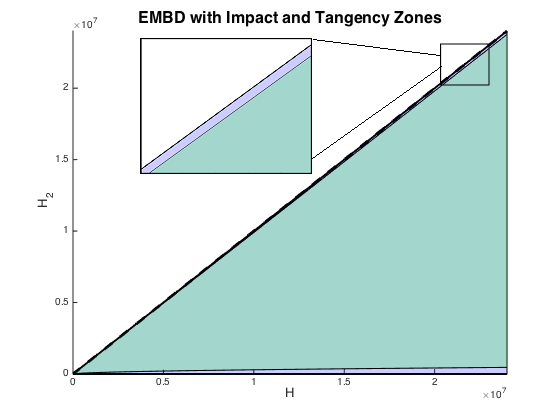}
\par\end{centering}

\protect\caption{\label{fig:EMBD-impact}IEMBD for the system (\ref{eq:doublewell}) with a slanted wall  $(\alpha=\frac{\pi}{2}-0.1)$.
The transverse impact zone (blue), the tangency zone (green), and the no-impact zone (grey) are shown for both low (a-b) and large (c-d) energy ranges. }
\end{figure}

Finally, the two near perpendicular cases, for which Theorem \ref{thm:nearintegrability} (see also   \cite{pnueli2018near}) applies, correspond to near integrable dynamics. Then, in the transverse impact zones, away from separatrices, KAM theory  applies, so a  large portion of the transverse impact zone remains invariant and admits mostly quasiperiodic motion. The IEMBD for the near integrable cases thus serves as a classifying tool for the global, long-term behavior of trajectories. Smooth near integrable behavior emerges in both the non-impact and the transverse-impact zones, and complicated singular behavior appears in a limited zone which includes the tangent zone and its small neighborhood  \cite{pnueli2018near}.
For a fixed energy \(H\), in the integrable limits  (\( \alpha\rightarrow{}0\) or   $\alpha\rightarrow\frac{\pi}{2}$), the tangency zone in the IEMBD approaches the limiting tangency line, namely, its width shrinks to zero.  Namely,  the integrable limits and the large \(H \) limit do not commute - item 6 of Proposition   \ref{prop:globalslanted}  and the estimates (\ref{eq:h2devisionkargeh}) apply only for a fixed \(\alpha\) value. Indeed, in the integrable limits, some of the critical energies above which the asymptotic analysis is applicable  become infinite; for $\alpha\rightarrow{}0$, $H^{1,\pm}\rightarrow V_{2}(0)+H^{min}=-\frac{1}{4}+\frac{\omega^{2}}{2}\cdot (q_{2c})^{2}$ and $H^{0}\rightarrow\infty$ whereas for $\alpha\rightarrow\frac{\pi}{2}$, $H^{1,-}\rightarrow \infty$ and $H^{0}\rightarrow V_{1}(0)=-\frac{1}{2}\cdot(q_{1s})^{2}+
\frac{1}{4}\cdot(q_{1s})^{4}$, see Fig. \ref{fig:alphaasymp}. See \cite{pnueli2016thesis} for some detailed calculations of various asymptotic limits.

\begin{figure}
\begin{centering}
\includegraphics[scale=0.4]{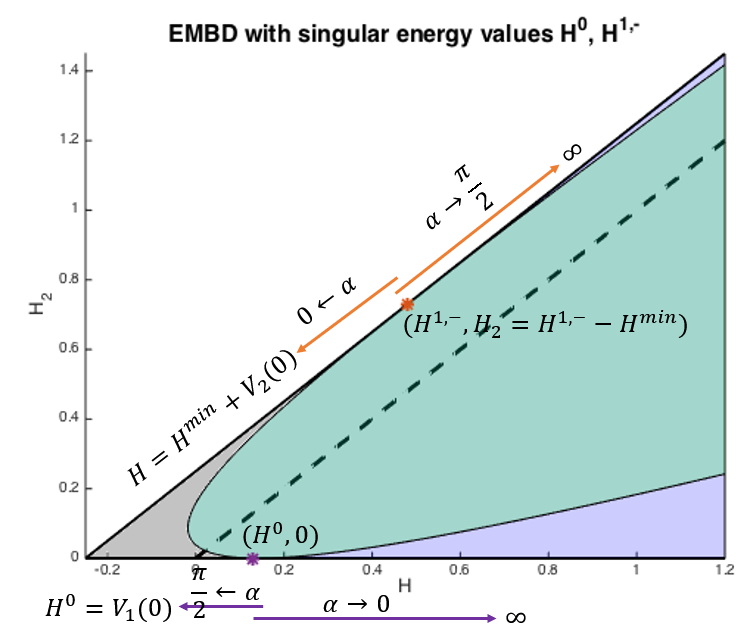}
\par\end{centering}

\protect\caption{\label{fig:alphaasymp}$H^0$ and $H^{1,-}$ behavior for angles close to perpendicular.}
\end{figure}

\subsection{Interior tangent segments}\label{sec:interiortangentsegments}

In lemma \ref{lem:prlwalltrans} we established that at a given wall point   \(q_w(q_2)\), tangency occurs for  \(H\geq H^{tmin}(q_2)\)  on the corresponding leaf of the level set    \((H-H_{2}^{t}(q_{2,}H),H_{2}^{t}(q_{2,}H))\) whereas for all other   \(H_2 \in \mathcal{  H}_{2}(q_2,H)\),   the impacts at    \(q_w(q_2)\) on the  leaf of  \((H-H_{2},H_2) \) are transverse. This lead to the definition of the the tangent \(H_{2}\) intervals \(H_{2}^{tan}(H)\) of lemma  \ref{lem:h2mh2tan} that divide the level sets to non-impacting, impacting and those with a tangency. By definition, the tangency at    \(q_w(q_2)\) may be external (as in convex billiards), and, if this is the case for all \(q_{2}\in L_{w}(H)\),  the tangent leaves include only external tangent points. Then,  the singular impact set of the iso-energy surface consists of  a finite collection of finite-length phase space segments and the impact-division is simpler than the general case in which the images and preimages of the singularities are in the allowed region of motion.

    Next we provide conditions under which the  leaf     \((H-H_{2}^{t}(q_{2,}H),H_{2}^{t}(q_{2,}H))\) includes a tangent segment to    \(q_w(q_2)\) which is in the billiard domain   (Theorem \ref{lem:h2tanofq2}).
Denote by
\( \mathcal{R}(q_2),\hat n(q_{2})\)  the radius of curvature and the inward normal of the wall at \(q_2\):\begin{equation}
 \mathcal{R}(q_2)= \frac{(1+(\epsilon_{w}Q'(q_{2}))^{2})^{3/2}}{\epsilon _{w}Q''(q_{2})},\qquad\hat n(q_{2})= \frac{(1,-\epsilon _{w}Q'(q_{2}))}{\sqrt{1+(\epsilon _{w}Q'(q_{2}))^{2}}}
\end{equation}and let \begin{equation}
H^{tsegm}(q_2)=\max\{0,-\frac{1}{2} \mathcal{R}(q_2)(\hat n\cdot \triangledown V)_{q_w(q_2)}\}=\begin{cases}0 & \epsilon _{w}Q''(q_{2})\cdot (\hat n\cdot \triangledown V)_{q_w(q_2)}>0 \\
0   & \epsilon _{w}Q''(q_{2})=0 \ \& \ (\hat n\cdot \triangledown V)_{q_w(q_2)}>0  \\
\infty\ & \epsilon _{w}Q''(q_{2})=0  \ \& \ (\hat n\cdot \triangledown V)_{q_w(q_2)}<0 \\
>0 & \epsilon _{w}Q''(q_{2})\cdot (\hat n\cdot \triangledown V)_{q_w(q_2)}<0 \\
\end{cases}\label{eq:H2tansegment}
\end{equation} \begin{defn}\label{def:nondegwallpoint} A wall point \({q_w(q_2)}\) is wall-non-degenerate if \(| (\hat n\cdot \triangledown V)_{q_w(q_2)}|+|\epsilon _{w}Q''(q_{2})|\neq0\).\end{defn}We show that the function \(H^{tsegm}(q_2)\) and the sign of \(\mathcal{R}(q_2)\) determine whether a tangency at    \(q_w(q_2)\)  for a given energy \(H\) is external or not: \begin{mainthm} \label{lem:h2tanofq2} Provided \(q_w(q_2)\) is non-degenerate, there exists a segment which is tangent  at  \(q_w(q_2)\)    and is inside the billiard domain on the iso-energy surface \(H\) iff \begin{equation}
\begin{cases}H\in (H^{tmin}(q_{2}),H^{tmin}(q_{2})+H^{tsegm}(q_2)) & \text{when }Q''(q_{2})\geqslant0, \\
H>H^{tmin}(q_{2})+H^{tsegm}(q_2)& \text{when  }Q''(q_{2})<0.
\end{cases}
\label{eq:hsegminbilliard}
\end{equation} At \(H=H^{tmin}(q_{2})\) the tangent segment is in the billiard domain if \( \left(\hat n\cdot \triangledown V\right)_{q_w(q_2)} <0\) and is outside the billiard domain if \( \left(\hat n\cdot \triangledown V\right)_{q_w(q_2)} >0\).  \end{mainthm}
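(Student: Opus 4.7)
The plan is to reduce the question of whether the tangent segment lies locally inside or outside the billiard domain to a sign analysis of the second derivative of the signed wall-distance along the tangent trajectory. Writing $d(t) = q^w(q(t);\epsilon_w) = q_1(t) - \epsilon_w Q(q_2(t))$ along a smooth trajectory of $H_{int}$ on the iso-energy surface $H$ that is tangent to the wall at time $t=0$ with tangency point $q_w(q_2)$, the tangency condition of Lemma \ref{lem:prlwalltrans} ($p_1 = \epsilon_w Q'(q_2) p_2$) gives $d(0) = 0$ and $\dot d(0) = 0$. Hence, by Taylor expansion, the tangent segment lies in the billiard domain (where $d > 0$) in a deleted neighborhood of $t=0$ iff $\ddot d(0) > 0$, and lies outside it iff $\ddot d(0) < 0$. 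Wall-non-degeneracy will be used to ensure that $\ddot d(0) \neq 0$ throughout the strict interior of the intervals described in (\ref{eq:hsegminbilliard}), so the second-order analysis suffices and no higher-order expansion is needed.

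Using Hamilton's equations $\dot p_i = -V_i'(q_i)$ and differentiating $d(t)$ twice, I compute
\[
\ddot d(0) = -V_1'(\epsilon_w Q(q_2)) + \epsilon_w Q'(q_2)\, V_2'(q_2) - \epsilon_w Q''(q_2)\, p_2^2 = -\sqrt{1+(\epsilon_w Q'(q_2))^2}\,(\hat n \cdot \nabla V)_{q_w(q_2)} - \epsilon_w Q''(q_2)\, p_2^2.
\]
Substituting the tangent momentum $p_2^2 = 2(H - H^{tmin}(q_2))/(1+(\epsilon_w Q'(q_2))^2)$ obtained from $H_2 = H_2^{t}(q_2,H)$ in Eq.~(\ref{eq:tanghh2}) and invoking the signed radius of curvature $\mathcal{R}(q_2)$, the condition $\ddot d(0) > 0$ rearranges to $-(\hat n \cdot \nabla V)_{q_w(q_2)} > 2(H-H^{tmin}(q_2))/\mathcal{R}(q_2)$, or equivalently to a comparison between $H-H^{tmin}(q_2)$ and $-\tfrac12\mathcal{R}(q_2)(\hat n \cdot \nabla V)_{q_w(q_2)}$, with the direction of the inequality determined by the sign of $\mathcal{R}(q_2)$, i.e.\ of $\epsilon_w Q''(q_2)$.

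The theorem then follows from a case split on the sign of $\epsilon_w Q''(q_2)$ combined with the sign of $(\hat n \cdot \nabla V)_{q_w(q_2)}$. When $\epsilon_w Q''(q_2)>0$ the multiplication preserves the inequality and yields $H<H^{tmin}(q_2)+H^{tsegm}(q_2)$ via definition (\ref{eq:H2tansegment}); combined with the requirement $H>H^{tmin}(q_2)$ for tangency to exist, this produces the first line of (\ref{eq:hsegminbilliard}), which reduces to an empty interval when $(\hat n \cdot \nabla V)_{q_w(q_2)}\ge 0$ (so $H^{tsegm}=0$) and to a nontrivial interval otherwise. When $\epsilon_w Q''(q_2)<0$ the flip of the inequality produces $H>H^{tmin}(q_2)+H^{tsegm}(q_2)$, the second line of (\ref{eq:hsegminbilliard}). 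The flat-wall subcase $\epsilon_w Q''(q_2)=0$ is handled directly from the computed $\ddot d(0) = -\sqrt{1+(\epsilon_w Q'(q_2))^2}\,(\hat n \cdot \nabla V)_{q_w(q_2)}$: the sign of $-(\hat n \cdot \nabla V)_{q_w(q_2)}$ alone decides, consistent with the limiting values $H^{tsegm}\in\{0,\infty\}$ built into the definition. The boundary energy $H=H^{tmin}(q_2)$ is handled separately because $p_2=0$ there, giving the same reduced expression for $\ddot d(0)$ and hence the stated dichotomy on the sign of the normal force.

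The main technical subtlety is the sign bookkeeping when dividing by $\mathcal{R}(q_2)$ and when interpreting the $\max$ in the definition of $H^{tsegm}(q_2)$; wall-non-degeneracy guarantees that at least one of $(\hat n \cdot \nabla V)_{q_w(q_2)}$ or $\epsilon_w Q''(q_2)$ is nonzero, which in turn guarantees that $\ddot d(0)$ has a definite strict sign for every $H$ in the strict interior of the claimed intervals. The only other delicate point is verifying that the locally tangent segment remains a segment of the tangent leaf of the level set $(H-H_2^{t}(q_2,H),H_2^{t}(q_2,H))$ described in Lemma \ref{lem:prlwalltrans}, which is automatic from the uniqueness of smooth trajectories before the first transverse crossing.
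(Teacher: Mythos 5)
Your proposal is correct and follows essentially the same route as the paper: both expand the signed wall-distance $q_1(\Delta t)-\epsilon_w Q(q_2(\Delta t))$ to second order along the tangent trajectory, note that the zeroth- and first-order terms vanish at tangency, substitute the tangent value $p_2^2 = 2(H-H^{tmin}(q_2))/(1+(\epsilon_w Q'(q_2))^2)$, and split on the sign of $\epsilon_w Q''(q_2)$ (with the flat and $p_2=0$ subcases handled separately) to recover the intervals in Eq.~(\ref{eq:hsegminbilliard}). The sign bookkeeping and the use of wall-non-degeneracy to exclude a vanishing second-order coefficient in the interior of the intervals match the paper's argument.
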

\begin{proof}
By lemma \ref{lem:prlwall} for    \(H< H^{tmin}(q_{2})\) the wall point  \(q_w(q_2)\)   is not in the Hill region, so we need to consider      \(H\geqslant H^{tmin}(q_{2})\). Then, by lemma   \ref{lem:prlwalltrans}, a tangent point occurs at \(q_w(q_2)\)    on the level set   \((H-H_{2}^{t}(q_{2,}H),H_{2}^{t}(q_{2,}H))\), so,  by Eq. (\ref{eq:tanghh2}), at tangency \begin{equation}\label{eq:p2attangency}
p_{2}^{2}=2(H_{2}^{t}(q_{2,}H)-V_{2}(q_{2}))=2\frac{H-H^{tmin}(q_{2})}{1+(\epsilon _{w}Q'(q_{2}))^{2}}.
\end{equation}

 Starting at a tangent initial condition on the wall at  \(q_w(q_2)\), the flow moves the solution into the billiard domain iff   \(q_1(\Delta t)-\epsilon _{w}Q(q_{2}(\Delta t))>0\). The zero and first order terms in \(\Delta t\) vanish at tangency on the wall (since at tangency \(q_{1}=\epsilon _{w}Q(q_{2}),(p_1-\epsilon _{w}Q'(q_{2})p_{2})=0\)), and positivity of the \(\frac{1}{2}\Delta t^{2}\) terms becomes:\begin{equation}\label{eq:q12inbilliard}
-V'_{1}(\epsilon _{w}Q(q_{2}))+\epsilon _{w}Q'(q_{2})V'_{2}(q_{2})-\epsilon _{w}Q''(q_{2})p_{2}^2>0
\end{equation} see also \cite{RK2014smooth} where non-degenerate tangency is similarly defined for \(n \) d.o.f. HIS.
Thus, if \(Q''(q_{2})=0\),  the  tangent segment is in the billiard for all \(H\geqslant H^{tmin}(q_{2})\)  iff  \(\hat n(q_{2})\cdot \triangledown V|_{q_w(q_2)}<0\) (by the wall-non-degeneracy condition, otherwise,   \(\hat n(q_{2})\cdot \triangledown V|_{q_w(q_2)}>0\) and the segment is strictly outside the billiard). Then, \(H^{tsegm}(q_2)=\infty\), so this condition is in agreement with the first  line of  Eq. (\ref{eq:hsegminbilliard}) and with the condition at  \(H= H^{tmin}(q_{2})\).
Similarly, if  \(p_2=0\), namely if \(H=H^{tmin}(q_{2})\),  Eq. (\ref{eq:q12inbilliard}) becomes  \(-\hat n(q_{2})\cdot \triangledown V|_{q_w(q_2)}>0\), as stated.

Otherwise,  if \(Q''>0\) and  \(p_{2}\neq0\), rewriting Eq. (\ref{eq:q12inbilliard}) as:

\begin{equation}
0<p_{2}^2<\frac{ -V'_{1}(\epsilon _{w}Q(q_{2}))+\epsilon _{w}Q'(q_{2})V'_{2}(q_{2})}{\epsilon _{w}Q''(q_{2})}, \quad Q''>0,
\end{equation}and using  Eq. (\ref{eq:p2attangency}) these conditions become, for \(Q''>0\) \begin{equation}
H^{tmin}(q_{2})<H<H^{tmin}(q_{2})+\max\{0,\frac{1}{2}\frac{ -V'_{1}(\epsilon _{w}Q(q_{2}))+\epsilon _{w}Q'(q_{2})V'_{2}(q_{2})}{\epsilon _{w}Q''(q_{2})}(1+(\epsilon _{w}Q'(q_{2}))^{2})\}
\end{equation}so, by the definition of   \(H^{tsegm}(q_2)\)  Eq. (\ref{eq:hsegminbilliard}) is proved for \(H>H^{tmin}(q_{2})\) and \(Q''(q_{2})>0\).

If \(Q''(q_{2})<0\), a tangent segment is in the billiard domain iff \begin{equation}
p_{2}^2=2\frac{H-H^{tmin}(q_{2})}{1+(\epsilon _{w}Q'(q_{2}))^{2}}>\frac{-(-V'_{1}(\epsilon _{w}Q(q_{2}))+\epsilon _{w}Q'(q_{2})V'_{2}(q_{2}))}{-\epsilon _{w}Q''(q_{2})}.
\end{equation}
If the right hand side is negative  the above inequality is  satisfied for all real \(p_{2}\), namely whenever \(H\geq H^{tmin}(q_{2})\). If the right hand side is positive, the energy needs to be sufficiently large: \begin{equation}
H>H^{tmin}(q_{2})+\max\{0,\frac{1}{2}\frac{ -V'_{1}(\epsilon _{w}Q(q_{2}))+\epsilon _{w}Q'(q_{2})V'_{2}(q_{2})}{\epsilon _{w}Q''(q_{2})}
(1+(\epsilon _{w}Q'(q_{2}))^{2})\}\end{equation}
Thus Eq. (\ref{eq:hsegminbilliard}) is proved for \(Q''(q_{2})<0\).

\end{proof}

Geometrically,     Theorem \ref{lem:h2tanofq2}  may be interpreted as follows. At tangency the momentum component which is normal to the wall vanishes:    \( p_{\bot}=0\). From the equations of motion we conclude that  \(\dot p_{\bot}=-\left(\hat n\cdot \triangledown V\right)_{q_w(q_2)}=(-V'_{1}(\epsilon _{w}Q(q_{2}))+\epsilon _{w}Q'(q_{2})V'_{2}(q_{2}))/(1+(\epsilon _{w}Q'(q_{2}))^{2})^{1/2} \). Thus, the geometrical meaning of a positive normal force, \(-\left(\hat n\cdot \triangledown V\right)_{q_w(q_2)}\), is that it turns the tangent velocity vector into the billiard domain. If the  billiard is convex (\(Q''>0\)), the turning must be sufficiently sharp with respect to the parallel speed to remain in the domain whereas for a dispersing billiard the tangent segment remains in the domain even if there is no turning. A negative normal force means that the velocity vector turns away from the billiard domain, so the conclusions are  reversed: convex billiards have no tangent segments in the billiard domain for any energy whereas   locally dispersing   billiard boundaries have a tangent segment in the billiard domain for sufficiently  high energy where the parallel speed overcomes the turning. Finally, when \(p=0\) at the wall, positive normal force brings  the tangent segment into the billiard and negative one pushes it outside.

\noindent{\textbf{Interior tangent segments for the \(\alpha\)-slanted wall DC Example:}}
\textit{ In Theorem \ref{thm:tangslantdc} we establish that for the non-perpendicular cases, for sufficiently large \(H\), the majority of level sets in \(H_{2}^{tan}(H) \) that have a tangent point also have interior
tangent segments.}

\subsection{\label{sec:proofsmain2}Proofs of Theorems \ref{thm:nonimpactgen}-\ref{thm:tangslantdc} }

\noindent\textbf{Proof of Theorem \ref{thm:nonimpactgen}: }\textit{For sufficiently small \(\epsilon_{r}\) and arbitrary \(\epsilon_{w}\), the phase space measure of the iso-energy non-impact set on regular energy surfaces  is \(O(\sqrt{\epsilon_{r}})\) close to the measure of the non-impact zone defined by the IFG and  IEMBD of the GWS  at  \(\epsilon_{r}=0\).  For interior GWS this set has a positive \(O(1)\) measure for a range of energies, whereas for exterior GWS the measure of this set is at most  of \(O(\sqrt{\epsilon_{r}})\).}
\begin{proof}
 At \(\epsilon_{r}=0\), if the non-impact zone of the IFG has positive measure, the iso-energy non-impact set of the GWS  includes a connected part - a positive measure set of families of non-impacting Liouville leaves which belong to the non-impact zone of the IFG. These are the leaves with PRLs that do not intersect or touch the wall, and, with the exception of a few singular leaves, they correspond to invariant tori of the smooth integrable system. In particular, since the energy is regular, the boundaries of the non-impact zone correspond to regular invariant tori. The non-impact set can possibly include also a disconnected part - a measure zero set of resonant orbits or singular orbits belonging to isolated resonant cut leaves which also contain tangent and impacting segments.

Each energy surface with an  \(\epsilon_{r}=0\) non-impact zone of measure \(m_{0}(H)>0\) produces,  for sufficiently small $\epsilon_r$,  by standard KAM results for  smooth systems  (recall  the S3BN assumption and the smoothness assumption on \(V_{r}\)), an invariant set bounded by KAM tori which are bounded away from impacts. By KAM, these boundary tori are, at worst case (i.e. if the non-impact zone boundary is resonant),  \(O(\sqrt{\epsilon_{r}})\) close to the unperturbed tori, so the measure of this invariant non-impact set, which includes also internal non-impact resonances, is, at worse case,  \(m_{\epsilon_{r}}(H)=m_{0}(H)+O(\sqrt{\epsilon_{r}})\).

 The   \(\epsilon_{r}=0\) measure zero set of initial conditions that belong to isolated tangent resonant leaves, when exist, may produce resonant non-impacting islands of smooth periodic and quasiperiodic motion. Their measure, as they belong to the smooth  resonant set is at most of  \(O(\sqrt{\epsilon_{r}})\),  thus, we conclude that the measure of the iso-energy non-impact set is \(m_{0}(H)+O(\sqrt{\epsilon_{r}})\).

 For interior GWS, the PRLs that are centered around the minimizers of \(V\) which are inside the billiard domain with partial energies close to the local minima values belong to the non-impact zone, so  \(m_{0}(H)>0\) for \(H\) values close to the corresponding minima values of \(V\) (which, by assumption [N] of the S3BN condition, are non-degenerate). Exterior GWS do not have  a non-impact zone at \(\epsilon_{r}=0\) by Claim 1 of Proposition  \ref{thm:global}, namely for exterior GWS  \(m_{0}(H)=0\) for all \(H\). Thus, if the  non-impact  set is non-empty it consists only of smooth resonance islands of measure at most  \(O(\sqrt{\epsilon_{r}})\) and it does not contain any non-impacting KAM tori.
\end{proof}

\noindent\textbf{Proof of Theorem  \ref{thm:singulconvex}:}\textit{ A  GWS in general position, for which the billiard is convex   and the potential \(V\) increases along the wall normal has, at  \(\epsilon_{r}=0\),  only external tangent points. If the billiard is strictly convex, and there exists a finite \(K\) such that the potential \(V\) increases along the wall normal for all \(|q_{2}|\geqslant K\),  then,  for sufficiently large energy, the same statement holds. }
\begin{proof}

 If \(\epsilon _{w}Q''(q_{2})\geqslant0\) and \(\left(\hat n\cdot \triangledown V\right)_{q_{w}(q_2)}>0\) for all    \(q_{2}\in\mathbb{R}\), then,  by Eq. (\ref{eq:H2tansegment}),\  \(H^{tsegm}(q_2)\equiv0\) and by definition \ref{def:nondegwallpoint} all the wall points are wall-non-degenerate. Hence, by Theorem \ref{lem:h2tanofq2} (see Eq. (\ref{eq:hsegminbilliard})),  there are no energies for which the tangent segment is in the billiard domain.

If   \(\epsilon _{w}Q''(q_{2})>c>0\) for all \(q_{2}\)  and, for \(|q_{2}|\geqslant K\),  \(\left(\hat n\cdot \triangledown V\right)_{q_{w}(q_2)}>0\), then \  \(H^{tsegm}(q_2)=0\) for  \(|q_{2}|\geqslant K\).  For    \(|q_{2}|\leqslant K\), by Eq. (\ref{eq:H2tansegment}) and the S3BN assumption, for all allowed \(V\) and \(\epsilon _{w}Q\) there exists a finite \(K_{1}>0\)  such that \(0\leqslant H^{tsegm}(q_2)<\frac{K_{1}}{c}\) . Let \(K_{2}=\max_{|q_{2}|\leqslant K} H^{tmin}(q_2)\), so \(K_2\) is finite by the S3BN assumption. Then, as all wall points are wall-non-degenerate, by  Theorem \ref{lem:h2tanofq2}, for all  \(H> K_2+\frac{K_{1}}{c}\), there are no tangent segments in the billiard at \(q_{w}(q_2)\) for all \(q_{2}\) (for    \(|q_{2}|\leqslant K\) as the energy is sufficiently high and for     \(|q_{2}|\geqslant K\)  because  \  \(H^{tsegm}(q_2)=0\)).

. \end{proof}

\noindent\textbf{Proof of Theorem  \ref{thm:singulconcave}:}\textit{ A GWS in general position, defined on a semi-dispersing billiard with a potential \(V\) which decreases along the wall normal, has, at  \(\epsilon_{r}=0\), on any energy surface, tangent segments inside the billiard at all  wall points which are in the Hill region. If the billiard is dispersing and there exists a finite \(K\) such that the potential \(V\) decreases along the wall normal for all \(|q_{2}|\geqslant K\),  then,  for sufficiently large energy, the same statement holds. }
\begin{proof}
  If \(\epsilon _{w}Q''(q_{2})\leqslant0\) and \(\left(\hat n\cdot \triangledown V\right)_{q_{w}(q_2)}<0\) for all    \(q_{2}\in\mathbb{R}\), then,  by Eq. (\ref{eq:H2tansegment}),\  \(H^{tsegm}(q_2)\equiv0\) and by definition \ref{def:nondegwallpoint} all the wall points are wall-non-degenerate. Hence, by Theorem \ref{lem:h2tanofq2} (see Eq. (\ref{eq:hsegminbilliard})),  for all \(H\geqslant H^{tmin}(q_2)\), the iso-energy tangent leaf includes a  \(q_{w}(q_2)\)-tangent segment  in the billiard domain. Recall that by lemma \ref{lem:prlwall} \(q_{w}(q_2)\in S_{w}(H)\) iff \( H^{tmin}(q_2)< H\), so we see that all the wall points which are in the Hill region have a tangent segment in the billiard, and this tangent segment belongs to the tangent zone.

If   \(\epsilon _{w}Q''(q_{2})<-c<0\) for all \(q_{2}\)  and, for \(|q_{2}|\geqslant K\),  \(\left(\hat n\cdot \triangledown V\right)_{q_{w}(q_2)}<0\), then \  \(H^{tsegm}(q_2)=0\) for  \(|q_{2}|\geqslant K\), and, as in the proof of   \ref{thm:singulconvex}, there exist finite positive \(K_{1,2}\) such that   for    \(|q_{2}|\leqslant K\),  \( H^{tmin}(q_{2})+H^{tsegm}(q_2)<K_{2}+\frac{K_{1}}{c}\) . Then, as all wall points are wall-non-degenerate, by  Theorem \ref{lem:h2tanofq2}, for all  \(H> K_2+\frac{K_{1}}{c}\), there are\ tangent segments in the billiard at \(q_{w}(q_2)\) for all \(q_{2}\) (for    \(|q_{2}|\leqslant K\) as the energy is sufficiently high and for     \(|q_{2}|\geqslant K\)  because  \  \(H^{tsegm}(q_2)=0\)).
     \end{proof}
\noindent\textbf{Proof of Theorem \ref{thm:concconv}:} \textit{
 Above a critical energy, a GWS in general position which has both concave and convex wall segments has a non-trivial singular impact set.  }

\begin{proof}    Given that there exists a  \(q_{2}\) such that the wall is concave at  \(q_{w}(q_2)\), namely     \(\epsilon _{w}Q''(q_{2})<-c<0\), we conclude that this point is wall-non-degenerate and that \(H^{tsegm}(q_2)\) is finite (possibly zero). Hence, for all \(H>H^{tmin}(q_{2})+H^{tsegm}(q_2)\) the tangent segment to    \(q_{w}(q_2)\) is in the billiard so the singular impact set includes the forward and backward impact flow of this segment. Moreover, if \(H^{tsegm}(q_2)>0\), for \(H\in(H^{tmin}(q_{2}),H^{tmin}(q_{2})+H^{tsegm}(q_2))\) then the tangent segment to    \(q_{w}(q_2)\) is not in the billiard even though     \(q_{w}(q_2)\) is in the Hill region, namely, the singular impact set changes non-trivially with energy even at interior points of \(S_{w}(H)\).   \end{proof}

\noindent\textbf{Proof of Theorem \ref{thm:tangslantdc}:} \textit{
 For sufficiently large energy, the \(\alpha-\)slanted wall Duffing-Center system with \(\alpha\in(0,\frac{\pi }{2})\) has a non-trivial singular impact set; For such energies, the tangent segments occur  on a \(\sin^2(\alpha)(1+O(\frac{1}{\sqrt{H}}))\) portion of the leaves.  }

\begin{proof}     By item 6 of Proposition \ref{prop:globalslanted}, the tangent zone portion for sufficiently large \( H\), namely for \(H>H^{asym}=H^{asym}(q_{1s},q_{2c},\omega^{2},\alpha )\), is \(\sin^2(\alpha)(1+O(\frac{1}{\sqrt{H}}))\). Recall that the tangent zone is the union of level sets on which a tangency at the wall at some point    \(q_{w}(q_2)\in S_{w}(H)\)  is detected. Since the slanted wall is flat \((Q''(q_{2})=0)\), by Theorem \ref{lem:h2tanofq2}, the tangent segments to the wall at    \(q_{w}(q_2)\) are in the billiard domain for all \(H\geqslant H^{tmin}(q_{2})\) if   \(0<-V'_{1}(\epsilon _{w}Q(q_{2}))+\epsilon _{w}Q'(q_{2})V'_{2}(q_{2})=-
(\cot \alpha \cdot q_2-q_{1s})^{3}+\cot \alpha \cdot (1+\omega^{2})q_2-\cot \alpha \cdot(q_{1s}+\omega^{2}q_{2c})\), and are out of the billiard domain if the strict inequality is reversed. Since this is a cubic polynomial with negative coefficient on the \((q_2)^{3}\) term, there exist finite \(q_2^{tsegm,\pm}=q_2^{tsegm,\pm}(q_{1s},q_{2c},\omega^{2}, \alpha )\) such that  for all  \( q_2< q_2^{tsegm,-}\) the inequality is satisfied and for all  \( q_2> q_2^{tsegm,+}(q_{1s},q_{2c},\omega^{2},\alpha )\) the inequality is reversed.  Since, by Eq. (\ref{eq:tanghh2slant}), \(H_{2}^{t}(q^{tsegm,-}_2,H)=\sin^2(\alpha)H+G(q^{tsegm,-}_2)\), and \(H_{2}^{t}(q^{a}_{2}(H),H)=V_{2}(q^{a}_{2}(H))=\omega^{2}\tan^{2}\alpha  \cdot H^{1/2}(1+O(H^{-1/4}))\), for all \(H>\max\{\ H^{tmin}(q_2^{tsegm,-}), H^{tmin}(q_2^{tsegm+}),H^{asym}\}\) the    interior-tangent-segment zone includes the set \([\omega^{2}\tan^{2}\alpha  \cdot H^{1/2}(1+O(H^{-1/4})),\sin^2(\alpha)H+G(q^{tsegm,-}_2)]\), which has the same lower boundary as  \(H_2^{tan}(H)\) and its upper boundary differs from \(H_2^{tan}(H)\) by the \(O(1)\) term \( (G_{max}-G(q^{tsegm,-}_2))\).   For such large energies the tangencies associated with wall points at large positive \(q_2\) are outside the billiard whereas those associated with large negative \(q_2\) are in the billiard. Yet, by the symmetry of \(V_{2}\) the level sets for large positive and large negative \(q_{2}\) essentially coincide, so the asymptotic results for large \(H\) show that the majority of level sets in the tangency zone indeed include a tangent segments  in the billiard, making the tangent singularity set non-trivial.  \end{proof}

\section{Discussion}\label{sec:discussion}

 We analyze a class of integrable Hamiltonian systems undergoing impacts from a wall  by constructing their Impact Energy-Momentum Bifurcation Diagram (IEMBD) and  Impact Fomenko Graphs (IFG) and by examining the intersections of the wall with their projected rectangles of the Liouville leaves  to the configuration space.  These  bring new analysis tools  to the field of non-smooth impact systems, in which the dynamics, in contrast with billiards, depend non-trivially on the energy. We study this class of HIS in the integrable,  near-integrable and far from integrable cases. \ First, in section \ref{sec:integrable}, we construct the IEMBD and the IFG for  systems with a wall which is perpendicular to one of the axes. We prove that such systems are Liouville integrable and that the IEMBD and IFG are needed to describe their level set topology and impact division (Theorem \ref{thm:integrability}). We then apply these tools together with the  impact-KAM theory of   \cite{pnueli2018near} to describe  the impact division of  nearby systems - systems with a slight deformation of the perpendicular wall and with a small smooth coupling term (Theorem \ref{thm:nearintegrability}). In section \ref{sec:Hill-region} we study HIS with a general wall which is not necessarily near perpendicular. To this aim we construct and study the structure of the projected rectangles of the Liouville leaves  to the configuration space. Proposition \ref{thm:global} and Theorem \ref{thm:hillregionprl} summarize the main conclusions we obtain from these constructions regarding the non-impact, tangent and transverse impact zones of the IEMBD for general systems and Proposition \ref{prop:globalslanted} summarizes these for some cases of the Duffing-center potential  with a slanted wall. Utilizing Theorem \ref{lem:h2tanofq2},    we establish that in some cases, like walls with focusing boundary at sufficiently high energy, the tangent impact set is trivial (Theorem \ref{thm:singulconvex}). In other cases, like for billiards with dispersing wall, we establish that the singular impact set is non-trivial (Theorem \ref{thm:singulconcave}-\ref{thm:concconv}) and establish that at high energy the tangent segments that are in the billiard are those that are tangent at dispersing segments of the wall.

\

\noindent\textbf{Future directions and open problems: }

\

\noindent\textbf{Classification of Liouville integrable HIS:} We introduced a class of  LIHIS and demonstrated that their IEMBD and IFG provide, as in the smooth case, global information with regards to the level set structure and the qualitative behavior of such systems under perturbations. The  classification and development of a Fomenko-Zeichang Theory  \cite{fomenko2004integrable} for LIHIS is a natural exciting direction to take. To this aim, definitions of new impact atoms may be needed  (e.g. a B atom which is cut in the middle, see Fig. \ref{fig:ps-alphapi2}d).  Studying the leaf structure of other examples of LIHIS, such as other separable systems in domains with several infinite walls,  radially-symmetric HIS in  circular billiards (Appendix 8 in \cite{Pnueli2020}), integrable HIS in an ellipse  \cite{Fedorov2001,Radnovic2015,Dragovic2014a} and of all of these systems in higher dimensions,  may contribute to the development of such theory.

\noindent\textbf{Classification of  integrable HIS which are not LIHIS:} In  \cite{Issi2019} the appearance of IHIS which are not LIHIS was demonstrated and in section \ref{sec:multiwall} the construction of IEMBD for such cases was proposed (see Figure \ref{fig:IEMBDcorner}b). Classification of such systems, including the construction of IFG with directional motion on leaves which are compact, oriented surfaces of genera higher than one, is another exciting and non-tivial future direction. Recent works on the Fomenko-graphs of quasi-integrable billiards may be relevant for the classification of such systems \cite{fomenko2019singularities,fomenko2019topological,Moskvin2018}.  \\
\noindent\textbf{Transverse impact sets:}  To establish the existence of an invariant set within the transverse impact set which is bounded away from the tangent set, the wall parameter (\(\epsilon_{w}\) of Eq. (\ref{eq:wavywal}))   may serve as a continuation parameter as its connects between the near integrable regime and the non-integrable regime. Thus, we expect that for sufficiently small \(\epsilon_{r}\)  there exist a finite \(\epsilon_{w}(\epsilon_{r})>0\)    at which the last dividing torus between the transverse and non-transverse regimes is still intact. Such a torus, by definition, must lie within the transverse impact set, thus, the measure of the transverse impact leaves provides an upper bound to  the union of such invariant components of the transverse impact set. It would be interesting to study numerically the existence and properties of such last invariant transverse curve.

\noindent\textbf{Applications:}   HIS supply a modeling framework for various physical systems, for example, for classical approximation of chemical reactions in which impacts model the  strong atomic repulsion forces and the smooth potentials model the smooth attraction forces and external fields \cite{LRK12,RK2014smooth}. Global analysis  of such systems by the IEMBD and the IFG provides an opportunity to obtain qualitative information on the type of solution that occur in  such complex systems.  Systems arising in Chemistry and Physics where the dynamics are governed by smooth near integrable  fields and sharp repulsion from  surfaces of various shapes can be now studied.

\section*{Acknowledgment}
We acknowledge support by ISF  Grant 1208/16. Additionally, some of this material is based upon work supported by the National Science Foundation under Grant No. 1440140, while the author was in residence at the Mathematical Sciences Research Institute in Berkeley, California, during the Fall semester of 2018.

\newpage
\section*{List of abbreviations}

\paragraph*{d.o.f} degrees-of-freedom

\paragraph*{EMBD} Energy-Momentum Bifurcation Diagram (see, e.g. \cite{Arnold2007CelestialMechanics,lerman1998integrable,fomenko2004integrable})

\paragraph*{FG}  Fomenko graphs (see, e.g. \cite{fomenko2004integrable})

\paragraph*{HIS} Hamiltonian impact systems

\paragraph*{LIHIS} Liouville-integrable HIS (definition \ref{def:liouvilimpint})
\paragraph*{NIHIS} Nearly integrable HIS
(definition  \ref{def:nihis})

\paragraph*{IEMBD} Impact-EMBD
(definition \ref{def:iembd})

\paragraph*{IFG} Impact Fomenko graphs
(definition \ref{def:ifg})

\paragraph*{PRL} projected rectangle of leaves
(definition \ref{def:PRL})

\paragraph*{S3BN class} Separable, Smooth, Simple, Bounded Non-degenerate potentials (definition \ref{def:s3b})

\paragraph*{GWS} General Wall System (definition \ref{def:gws})
\paragraph*{PWS}  perpendicular wall system - a GWS with a perpendicular wall (definition \ref{def:pws})\paragraph*{DC} Duffing-center
Hamiltonian (Eq. (\ref{eq:doublewell}))\newpage
\appendix
\section*{Appendix}

\section{Trajectory types  for the DC Hamiltonian impacting from perpendicular walls}\label{appdx:classification}
\subsection{Horizontal wall ($\alpha=0$)}
Consider the impact system with the Duffing-center potential (\ref{eq:doublewell}) when the wall is horizontal. Then,  the impact dynamics may be projected  to the $(q_2,p_2)$ phase space (see Fig \ref{fig:ps-alpha0}). Thus, all  types of trajectories, depending on the value of $I$ and on the parameter $q_{2c}$, may be classified as either non-impacting, tangent, or impacting (upon reflection the motion continues on the same \(H_{2}\) level set after the discontinuous jump $p_{2}\rightarrow-p_{2}$).

\noindent For $q_{2c}<0$:
\begin{itemize}
\item Motion on level sets with $I\leq\frac{V_{2}(0)}{\omega}$ remains
unchanged
\item All level sets with $I>\frac{V_{2}(0)}{\omega}$ achieve transversal impact.
\end{itemize}
For $q_{2c}>0$:
\begin{itemize}
\item The allowed region of motion (as defined by the billiard boundary)
is defined by $I>\frac{V_{2}(0)}{\omega}$. In this region all trajectories
achieve transversal impact.
\item The value $I=\frac{V_{2}(0)}{\omega}$ corresponds to a single tangent point on the wall, which cannot be reached by any trajectory inside the billiard region.
\end{itemize}
For $q_{2c}=0$ (singular case):
\begin{itemize}
\item All regular trajectories corresponding to $I>0$ impact.
\item The value $I=0$ corresponds to a single tangent point on the wall which is a stable fixed point.
\end{itemize}

\subsection{Vertical wall ($\alpha=\frac{\pi}{2}$).}
Consider the DC  system when the wall is vertical.  Here  impacting trajectories, upon reflection,  continue on the same \(H_{1}\) level set with  $p_{1}\rightarrow-p_{1}$.
Due to the richer phase space structure (see Fig. \ref{fig:ps-alphapi2}), there are more sub-cases to consider: \\

\noindent
For $q_{1s}>\sqrt{2}$ (the separatrix
region is inside the allowed region of motion, Figure \ref{fig:ps-alphapi2}a):
\begin{itemize}
\item Motion on level sets such that $H_{1}\leq V_{1}(0)$ remains unchanged
\item All level sets with $H_{1}>V_{1}(0)$ achieve transversal impact.
\end{itemize}

For $1\leq q_{1s}<\sqrt{2} $
(wall intersects the left separatrix loop, Fig \ref{fig:ps-alphapi2}b):
\begin{itemize}
\item Motion on level sets such that $H_{1}\leq V_{1}(0)$ remains unchanged
\item All level sets with $H_{1}>0$, achieve transversal impact
\item For level sets with $V_{1}(0)<H_{1}<0$, trajectories in the right
node of the separatrix remain unchanged, whereas in the left node
of the separatrix these trajectories achieve transversal impact
\end{itemize}

For $0<q_{1s}<1$:
\begin{itemize}
\item Motion on level sets such that $H_{1}\leq V_{1}(0)$ remains unchanged
\item Motion in the right node of the separatrix remains unchanged
\item All level sets with $H_{1}>V_{1}(0)$ which are not in the right node
of the separatrix achieve transversal impact.
In particular, all trajectories in the left node of the separatrix
reflect
\end{itemize}

For $-1\leq q_{1s}<0$ (only the right
elliptic fixed point is inside the allowed region of motion):
\begin{itemize}
\item Motion on level sets such that $H_{1}\leq V_{1}(0)$ remains unchanged
\item All level sets with $H_{1}>V_{1}(0)$ inside the billiard domain achieve transversal impact
\end{itemize}

For $q_{1s}<-1$ (all the fixed points
are outside the region of allowed motion):
\begin{itemize}
\item All trajectories in the allowed region of motion achieve transversal impact with
the wall
\item The value $H_1=V_1(0)$ corresponds to a single tangent point on the wall which cannot be reached by any trajectory inside the billiard domain.
\end{itemize}

For $q_{1s}=\sqrt{2}$ or $q_{1s}=0$
(wall coincides with the leftmost point on the separatrix or the saddle
fixed point respectively - see Figure \ref{fig:ps-alphapi2}c,d):
\begin{itemize}
\item Motion on level sets such that $H_{1}\leq0$ remains unchanged
\item Level sets with $H_{1}>0$ achieve transversal impact.
\end{itemize}


\begin{thebibliography}{10}

\bibitem{Arnold2013mathematicalmethods}
V.~Arnold.
\newblock {\em Mathematical Methods of Classical Mechanics}, volume~60.
\newblock Springer Science \& Business Media, 2013.

\bibitem{Arnold2007CelestialMechanics}
V.~Arnold, V.~Kozlov, and A.~Neishtadt.
\newblock {\em Mathematical Aspects of Classical and Celestial Mechanics},
  volume~3.
\newblock Springer Science \& Business Media, 2007.

\bibitem{Athreya2012}
J.~S. Athreya, A.~Eskin, and A.~Zorich.
\newblock Right-angled billiards and volumes of moduli spaces of quadratic
  differentials on $\mathbb{C}p^1$.

\bibitem{Issi2019}
L.~Becker, S.~Elliott, B.~Firester, S.~G. Cohen, M.~Pnueli, and V.~Rom-Kedar.
\newblock Impact hamiltonian systems and polygonal billiards.
\newblock https://arxiv.org/abs/2001.03726.

\bibitem{fomenko2004integrable}
A.~Bolsinov and A.~Fomenko.
\newblock {\em Integrable {Hamiltonian} systems: Geometry, Topology,
  Classification}.
\newblock CRC Press, 2004.

\bibitem{chernov2006chaotic}
N.~Chernov and R.~Markarian.
\newblock {\em Chaotic Billiards}.
\newblock Number 127. American Mathematical Soc., 2006.

\bibitem{dragovic2009bifurcations}
V.~Dragovi{\'c} and M.~Radnovi{\'c}.
\newblock Bifurcations of {Liouville} tori in elliptical billiards.
\newblock {\em Regular and Chaotic Dynamics}, 14(4):479--494, 2009.

\bibitem{Dragovic2014a}
V.~Dragovi{\'{c}} and M.~Radnovi{\'{c}}.
\newblock Bicentennial of the great poncelet theorem (1813{\textendash}2013):
  Current advances.
\newblock {\em Bulletin of the American Mathematical Society}, 51(3):373--445,
  apr 2014.

\bibitem{Dragovic2014}
V.~Dragovi{\'{c}} and M.~Radnovi{\'{c}}.
\newblock Pseudo-integrable billiards and arithmetic dynamics.
\newblock {\em Journal of Modern Dynamics}, 8(1):109--132, jul 2014.

\bibitem{Dragovic2015}
V.~Dragovi{\'{c}} and M.~Radnovi{\'{c}}.
\newblock Periods of pseudo-integrable billiards.
\newblock {\em Arnold Mathematical Journal}, 1(1):69--73, jan 2015.

\bibitem{dragovic2015topological}
V.~Dragovi{\'c} and M.~Radnovi{\'c}.
\newblock Topological invariants for elliptical billiards and geodesics on
  ellipsoids in the {Minkowski} space.
\newblock {\em Fundamentalnaya i Prikladnaya Matematika}, 20(2):51--64, 2015.

\bibitem{Dragovic2015a}
V.~I. Dragovi{\'{c}} and M.~Radnovi{\'{c}}.
\newblock Pseudo-integrable billiards and double reflection nets.
\newblock {\em Russian Mathematical Surveys}, 70(1):1--31, feb 2015.

\bibitem{dullin1998linear}
H.~Dullin.
\newblock Linear stability in billiards with potential.
\newblock {\em Nonlinearity}, 11(1):151, 1998.

\bibitem{Fedorov2001}
Y.~N. Fedorov.
\newblock An ellipsoidal billiard with a quadratic potential.
\newblock {\em Funktsional. Anal. i Prilozhen.,}, 35(3):48--59, 2001.

\bibitem{fokicheva2012description}
V.~Fokicheva.
\newblock Description of singularities for system billiard in an ellipse.
\newblock {\em Moscow University Mathematics Bulletin}, 67(5-6):217--220, 2012.

\bibitem{fokicheva2014classification}
V.~Fokicheva.
\newblock Classification of billiard motions in domains bounded by confocal
  parabolas.
\newblock {\em Sbornik: mathematics}, 205(8):1201, 2014.

\bibitem{fomenko2019singularities}
A.~Fomenko and V.~Vedyushkina.
\newblock Singularities of integrable {Liouville} systems, reduction of
  integrals to lower degree and topological billiards: recent results.
\newblock {\em Theoretical and Applied Mechanics}, 46(1):47--63, 2019.

\bibitem{fomenko2019topological}
A.~Fomenko and V.~Vedyushkina.
\newblock Topological billiards, conservation laws and classification of
  trajectories.
\newblock {\em Functional Analysis and Geometry: Selim Grigorievich Krein
  Centennial}, 733:129, 2019.

\bibitem{neishtadt2008jump}
I.~Gorelyshev and A.~Neishtadt.
\newblock Jump in adiabatic invariant at a transition between modes of motion
  for systems with impacts.
\newblock {\em Nonlinearity}, 21(4):661, 2008.

\bibitem{jovanovic2011completely}
B.~Jovanovi{\'c}.
\newblock What are completely integrable {H}amilton systems.
\newblock {\em The Teaching of Mathematics}, (26):1--14, 2011.

\bibitem{RK2014smooth}
M.~Kloc and V.~Rom-Kedar.
\newblock Smooth {Hamiltonian} systems with soft impacts.
\newblock {\em SIAM Journal on Applied Dynamical Systems}, 13(3):1033--1059,
  2014.

\bibitem{kozlov1991billiards}
V.~Kozlov and D.~Treshch{\"e}v.
\newblock {\em Billiards: a genetic introduction to the dynamics of systems
  with impacts}, volume~89.
\newblock American Mathematical Soc., 1991.

\bibitem{LRK12}
L.~Lerman and V.~Rom-Kedar.
\newblock A saddle in a corner---a model of collinear triatomic chemical
  reactions.
\newblock {\em SIAM J. Appl. Dyn. Syst.}, 11(1):416--446, 2012.

\bibitem{lerman1998integrable}
L.~Lerman and Y.~Umanski.
\newblock {\em Four Dimensional Integrable {Hamiltonian} Systems with Simple
  Singular Points, Transl. Math}.
\newblock Monographs, 1998.

\bibitem{litvak2004energy}
A.~Litvak-Hinenzon and V.~Rom-Kedar.
\newblock On energy surfaces and the resonance web.
\newblock {\em SIAM Journal on Applied Dynamical Systems}, 3(4):525--573, 2004.

\bibitem{marshall1988hamiltonian}
I.~Marshall and S.~Wojciechowski.
\newblock When is a {Hamiltonian} system separable?
\newblock {\em Journal of mathematical physics}, 29(6):1338--1346, 1988.

\bibitem{meiss2007differential}
J.~Meiss.
\newblock {\em Differential Dynamical Systems}, volume~14.
\newblock Siam, 2007.

\bibitem{meyer2008introduction}
K.~Meyer, G.~Hall, and D.~Offin.
\newblock {\em Introduction to {Hamiltonian} Dynamical Systems and the {N}-Body
  Problem}, volume~90.
\newblock Springer Science \& Business Media, 2008.

\bibitem{Moskvin2018}
V.~A. Moskvin.
\newblock Topology of liouville foliations for integrable billiards in
  non-convex domains.
\newblock {\em Moscow University Mathematics Bulletin}, 73(3):103--110, may
  2018.

\bibitem{pnueli2016thesis}
M.~Pnueli.
\newblock Dynamics in a {Hamiltonian} impact system.
\newblock Master's thesis, Weizmann Institute of Science, 2016.

\bibitem{Pnueli2020}
M.~Pnueli.
\newblock {\em Dynamics of Hamiltonian Impact Systems}.
\newblock PhD thesis, Weizmann Institute of Science, 2020.

\bibitem{pnueli2018near}
M.~Pnueli and V.~Rom-Kedar.
\newblock On near integrability of some impact systems.
\newblock {\em SIAM Journal on Applied Dynamical Systems}, 17(4):2707--2732,
  2018.

\bibitem{Radnovic2015}
M.~Radnovic.
\newblock Topology of the elliptical billiard with the hooke's potential.
\newblock {\em Theoretical and Applied Mechanics}, 42(1):1--9, 2015.

\bibitem{radnovic2008foliations}
M.~Radnovi{\'c} and V.~Rom-Kedar.
\newblock Foliations of isonergy surfaces and singularities of curves.
\newblock {\em Regular and Chaotic Dynamics}, 13(6):645--668, 2008.

\bibitem{rom2012billiards}
V.~Rom-Kedar and D.~Turaev.
\newblock Billiards: A singular perturbation limit of smooth {Hamiltonian}
  flows.
\newblock {\em Chaos: An Interdisciplinary Journal of Nonlinear Science},
  22(2):026102, 2012.

\bibitem{ShlRK05chaos}
E.~Shlizerman and V.~Rom-Kedar.
\newblock Hierarchy of bifurcations in the truncated and forced nonlinear
  {S}chr\"odinger model.
\newblock {\em Chaos}, 15(1):013107, 22, 2005.

\bibitem{ShlRK10}
E.~Shlizerman and V.~Rom-Kedar.
\newblock Classification of solutions of the forced periodic nonlinear
  {S}chrodinger equation.
\newblock {\em Nonlinearity}, 23(9):2183--2218, 2010.

\bibitem{Woj98}
M.~P. Wojtkowski.
\newblock Hamiltonian systems with linear potential and elastic constraints.
\newblock volume 157, pages 305--341. 1998.
\newblock Dedicated to the memory of Wies\l aw Szlenk.

\end{thebibliography}

\end{document}